\newcommand{\bbE}{\mathbb E}
\newcommand{\bbR}{\mathbb R}
\newcommand{\bfX}{{\mathbf X}}\newcommand{\bfx}{{\mathbf x}}
\newcommand{\bfY}{{\mathbf Y}}\newcommand{\bfy}{{\mathbf y}}
\newcommand{\bfW}{{\mathbf W}}\newcommand{\bfw}{{\mathbf w}}
\newcommand{\bfp}{{\mathbf p}}
\newcommand{\bfal}{\mbox{\boldmath $\alpha$}}
\newcommand{\calA}{{\mathcal A}}
\newcommand{\calE}{{\mathcal E}}
\newcommand{\calK}{{\mathcal K}}
\newcommand{\calM}{{\mathcal M}}
\newcommand{\calV}{{\mathcal V}}
\newcommand{\calX}{{\mathcal X}}
\newcommand{\calY}{{\mathcal Y}}
\newcommand{\calKhat}{\hat{\mathcal K}}
\newcommand{\scrP}{{\mathscr P}}
\newcommand{\scrW}{{\mathscr W}}
\newcommand{\dsone}{{\mathds 1}}
\newcommand{\sfK}{{\mathsf K}}
\newcommand{\Bernoulli}{\mathrm{Bernoulli}}
\newcommand{\Binomial}{\mathrm{Binomial}}
\newcommand{\Beta}{\mathrm{Beta}}
\newcommand{\col}{\mathrm{c}}
\newcommand{\emb}{\mathrm{e}}
\newcommand{\supp}{\mathrm{supp}}
\newcommand{\fair}{\mathrm{fair}}
\newcommand{\sym}{\mathrm{sym}}
\newcommand{\joint}{\mathrm{joint}}
\newcommand{\simple}{\mathrm{simple}}
\newcommand{\marking}{\mathrm{mark}}
\newcommand{\FP}{\mathrm{FP}}
\newcommand{\one}{\mathrm{one}}
\newcommand{\all}{\mathrm{all}}
\newcommand{\opt}{\mathrm{opt}}
\newtheorem{theorem}{Theorem}
\newtheorem{definition}{Definition}
\newtheorem{lemma}{Lemma}
\newtheorem{corollary}{Corollary}
\newtheorem{condition}{Condition}
\newtheorem{remark}{Remark}
\definecolor{gray}{rgb}{0.5,0.5,0.5}
\newcommand{\added}[1]{#1}
\newcommand{\changed}[1]{#1}
\newcommand{\removed}[1]{}
\newcommand{\removednosout}[1]{}
\newcommand{\archived}[1]{}
\begin{document}
%
% paper title
% can use linebreaks \\ within to get better formatting as desired
\title{On the Saddle-point Solution and the Large-Coalition \changed{Asymptotics}\removed{Behavior} of Fingerprinting Games}
%
%
% author names and IEEE memberships
% note positions of commas and nonbreaking spaces ( ~ ) LaTeX will not break
% a structure at a ~ so this keeps an author's name from being broken across
% two lines.
% use \thanks{} to gain access to the first footnote area
% a separate \thanks must be used for each paragraph as LaTeX2e's \thanks
% was not built to handle multiple paragraphs
%

\author{Yen-Wei~Huang,~\IEEEmembership{Student~Member,~IEEE,}
        and~Pierre~Moulin,~\IEEEmembership{Fellow,~IEEE}% <-this % stops a space
\thanks{The authors are with the Department of Electrical and Computer Engineering,
University of Illinois at Urbana-Champaign, Urbana, IL 61801 USA (e-mail:
huang37@illinois.edu; moulin@ifp.uiuc.edu). This work was supported by the National Science Foundation (NSF) under grants CCF 06-35137 and CCF 07-29061. This work was presented \changed{at ISIT 2009 and WIFS 2010}\removed{in part at the IEEE International Symposium on Information Theory (ISIT 2009), Seoul, Korea, Jun. 2009, and will be presented in part at the 2010 IEEE international Workshop on Information Forensics and Security (WIFS 2010), Seattle, WA, USA, Dec. 2010}.}}

\maketitle

\begin{abstract}
%\boldmath
We study a fingerprinting game in which the number of colluders and the collusion channel are unknown. The encoder embeds fingerprints into a host sequence and provides the decoder with the capability to trace back pirated copies to the colluders.

Fingerprinting capacity has recently been derived as the limit value of a sequence of maximin games with mutual information as their payoff functions. However, these games generally do not admit saddle-point solutions and are very hard to solve numerically. Here under the so-called Boneh-Shaw marking assumption, we reformulate the capacity as the value of a single two-person zero-sum game, and show that it is achieved by a saddle-point solution.

If the maximal coalition size is $k$ and the fingerprinting alphabet is binary, we show that capacity decays quadratically with $k$. Furthermore, we prove rigorously that the asymptotic capacity is $1/(k^2 2 \ln2)$ and we confirm our earlier conjecture that Tardos' choice of the arcsine distribution asymptotically maximizes the mutual information payoff function while the interleaving attack minimizes it. Along with the asymptotic\added{s}\removed{ behavior}, numerical solutions to the game for small $k$ are also presented.
\end{abstract}
% IEEEtran.cls defaults to using nonbold math in the Abstract.
% This preserves the distinction between vectors and scalars. However,
% if the journal you are submitting to favors bold math in the abstract,
% then you can use LaTeX's standard command \boldmath at the very start
% of the abstract to achieve this. Many IEEE journals frown on math
% in the abstract anyway.

% Note that keywords are not normally used for peerreview papers.
%\begin{IEEEkeywords}
%IEEEtran, journal, \LaTeX, paper, template.
%\end{IEEEkeywords}

% For peer review papers, you can put extra information on the cover
% page as needed:
% \ifCLASSOPTIONpeerreview
% \begin{center} \bfseries EDICS Category: 3-BBND \end{center}
% \fi
%
% For peerreview papers, this IEEEtran command inserts a page break and
% creates the second title. It will be ignored for other modes.
\IEEEpeerreviewmaketitle

% The very first letter is a 2 line initial drop letter followed
% by the rest of the first word in caps.
% 
% form to use if the first word consists of a single letter:
% \IEEEPARstart{A}{demo} file is ....
% 
% form to use if you need the single drop letter followed by
% normal text (unknown if ever used by IEEE):
% \IEEEPARstart{A}{}demo file is ....
% 
% Some journals put the first two words in caps:
% \IEEEPARstart{T}{his demo} file is ....
% 
% Here we have the typical use of a "T" for an initial drop letter
% and "HIS" in caps to complete the first word.

% Use \input to avoid new page
\section{Introduction}

\IEEEPARstart{I}{n} view of the ubiquity of digital media and the development of sophisticated piracy tools, it has become essential to develop a reliable protection scheme for copyrighted content. Digital fingerprinting, in which the content distributor embeds a uniquely identified \emph{fingerprint} into each distributed copy, is an effective way to \changed{deter}\removed{prevent} unauthorized redistribution of the content.

Hundreds of years ago, people used the idea of fingerprinting in logarithm tables. Errors were added intentionally to insignificant \changed{decimals that are randomly selected}\removed{bits of random values}, with each copy having a unique set of modifications. If someone ever sold his copy illegally, the legal authority could easily trace the guilty owner (\emph{pirate}) by looking into the small errors.

Digital content (e.g., images, videos, audios, programs, etc.) can be protected using the same idea. One common approach is to embed fingerprints using digital watermarking techniques. Similar to the logarithm tables discussed above, the fingerprints should not impair the quality or the functionality of the contents. Most watermarking systems are even robust against attacks such as compression, digital-to-analog conversions, or intentional noise adding.

The most dangerous attack against fingerprinting, however, is a collusion attack. A group of experienced pirates can form a \emph{coalition}, detect the fingerprints by inspecting the \emph{marks} in each copy, and create a \emph{forgery} that has only weak traces of their fingerprints. For the logarithm table example, if the errors are sparse and chosen randomly, the coalition can easily correct these errors by comparing several different copies of the logarithm table. Notice that since the pirates cannot remove the errors in which all their copies coincide, it is still possible for the distributor to design the marks so that at least one of the pirates can be caught (with possibly certain risk of falsely accusing someone). Yet it should be apparent that if the size of coalition is large, it is very hard to trace back to the fingerprinted copies from which the forgery was generated.

To specify the type of manipulations the coalition is capable of, different models have been adopted in designing the collusion-resistant fingerprinting codes. The \emph{distortion constraint} is a natural model for fingerprinting on multimedia contents \cite{Moulin2003, Somekh-Baruch2005, Somekh-Baruch2007, Moulin2008b}. In this work, however, we adopt another setup introduced by Boneh and Shaw in \cite{Boneh1998}, called the \emph{marking assumption}, which is commonly used both in multimdeia fingerprinting \cite{Furon2009a} and software fingerprinting \cite{Anthapadmanabhan2008}. Under this setup, the fingerprint sequence that each user receives is represented by a string of marks. By comparing their available copies, the \changed{colluders}\removed{coalition} can modify the detected marks, but cannot modify those marks at which their copies agree. It should be noted that there exist several versions of the marking assumption specifying different strength of attacks the colluders can perform \cite{vSkori'c2008}, and our analysis is general and applies to all these variants.

\subsection{Previous Work}

One of the first designs of fingerprinting codes that are resistant to collusion attacks is presented by Boneh and Shaw \cite{Boneh1998}. It was shown in \cite{Boneh1998} that a deterministic binary fingerprinting code with zero probability of decoding error does not exist. Hence, it becomes necessary for the construction of the fingerprinting codes to use some form of randomization, where the random key is shared only between the encoder and the decoder. They also provided the first example of codes with vanishing error probability.

Tardos in 2003 \cite{Tardos2003} constructed fingerprinting codes of length at most $100 k^2 \ln(m/\epsilon)$ for $m$ users with error probability at most $\epsilon$ against $k$ pirates. This construction yields $k$-secure fingerprinting schemes \added{with $\epsilon$-error} of rate $\left[k^2 100 \ln (2/\epsilon)\right]^{-1}$. The same paper gave an $\Omega\left(k^2 \log(1/\epsilon)\right)$ bound on the length of any fingerprinting code with the above parameters. The constant $100$ in the length $100k^2 \ln(m/\epsilon)$ was subsequently improved by several papers \cite{Nuida2007,Blayer2008,vSkori'c2008,vSkori'c2008a}. Amiri and Tardos recently \cite{Amiri2009} further improved the rate by constructing a code based on a two-person zero-sum game.

A few researchers have also studied the problem from the information-theoretic point of view \cite{Moulin2003, Moulin2008b, Anthapadmanabhan2008, Amiri2009}. Here the main objective is to find the maximum achievable rate, or \emph{capacity}, of the fingerprinting system. We denote capacity by $C_k$ where $k$ is the maximum coalition size. For the binary \changed{alphabet}\removed{alpahbet}, Tardos' construction suggested that $C_k \geq (k^2 100 \ln 2)^{-1}$. Anthapadmanabhan et al. \cite{Anthapadmanabhan2008} proved that $C_k = O(1/k)$. Recently, Moulin \cite{Moulin2008b} provided the exact formula of capacity in a general setup that unifies the signal-distortion and Boneh-Shaw formulations of fingerprinting. The formula can be regarded as the limit value of a sequence of maximin games, which, however, is extremely difficult to evaluate in general.\looseness=-1

Two families of fingerprinting decoding scheme are also introduced in \cite{Moulin2008b}: simple decoding and joint decoding. The breakthrough of Tardos' randomized fingerprinting code in \cite{Tardos2003} and its subsequent works belong to the class of simple decoders. It falls short of the capacity-achieving goal: reliable performance is impossible at code rates greater than some value $C^\simple_k$ that is strictly less than the capacity $C_k$. Yet the simple and efficient algorithm makes it desirable for practical use. On the other hand, Amiri and Tardos' recent work \cite{Amiri2009} belonged to the joint decoding scheme. Although capacity-achieving, it is much more complex than the simple decoding scheme and is only useful when computation is not an issue. Another example of joint decoding is Dumer's work in \cite{Dumer2009}, where additional constraints are imposed in the analysis.

\subsection{Main Results}

Our work follows the Boneh-Shaw marking assumption. For both joint and simple decoding, we reformulate the maximum achievable rates of both schemes as the respective maximin values of two two-person zero-sum games. We further show that the maximin and minimax values of the games are always equal, and the values are achieved by saddle-point solutions.

In the binary alphabet case, new capacity bounds are provided in closed-form expressions. The ratio between the upper and lower bounds of the joint decoding scheme is $\pi^2/2$, while that of the simple decoding scheme is only $\pi^2/4$ (for large $k$). These bounds not only show that the binary fingerprinting capacity is in $\Theta(1/{k^2})$, but they also provide secure strategies for both players of the game. Numerical solutions for small $k$ are also presented in comparison with the bounds.

Asymptotic analysis for large coalitions is based on a mild regularity assumption. When $k$ is large, the fingerprinting game \changed{for}\removed{of} joint decoding approximates a continuous-kernel game, whose optimal-achieving strategies can be solved explicitly as the arcsine distribution and the interleaving attack. Finally, we give a higher level interpretation from the standpoint of statistical decision theory.

The outline of the paper is as follows: In Sec. \ref{sec:def}, we introduce our fingerprinting model and formally define fingerprinting capacity. The capacity formulas derived in \cite{Moulin2008b} are briefly reviewed and reformulated in Sec. \ref{sec:game}. Sec. \ref{sec:bin} and Sec. \ref{sec:bin_asym} are devoted to the binary alphabet case and Sec. \ref{sec:sum} gives a brief summary.

\subsection{Notation}\label{ssec:intro_notation}

We use capital letters to represent random variables, and lowercase letters to their realizations. Boldface denotes vectors, and calligraphic letters denote finite sets. For example, $\bfX \in \calX^n$ denotes a random vector $(X_1, \ldots, X_n)$, with each $X_i$ taking values in $\calX$.
The probability distribution of $\bfX$ is characterized by its distribution function $P_\bfX(\bfx) \triangleq Pr(X_1 \leq x_1, \ldots, X_n \leq x_n)$. If the distribution is discrete, we also describe it by its probability mass function (pmf) $p_\bfX(\bfx) \triangleq Pr(\bfX = \bfx)$. Otherwise if $P_\bfX$ has the form
$$P_\bfX(\bfx) = \int_{-\infty}^{x_1} \cdots \int_{-\infty}^{x_n} f_{\bfX}(\bfx) dx_1 \ldots dx_n$$
then we characterize the distribution by its probability density function (pdf) $f_\bfX$. 
Mathematical expectation of a function $g(\bfX)$ with respect to $P_\bfX$ is defined by 
$$\bbE_{P_\bfX}\left[(g(\bfX)\right] \triangleq \int g(\bfx) P(d\bfx).$$

The mutual information of $X$ and $Y$ is denoted by $I(X;Y) = H(X)-H(X|Y)$. Should the dependency on the underlying pmf's be explicit, we write the pmf's as subscripts, e.g. $H_{p_X}(X)$ and $I_{p_X p_{Y|X}}(X;Y)$.
Given a pair of sequences $(\bfx,\bfy)$, we denote by $I(\bfx;\bfy)$ the empirical mutual information of the joint pmf $p_{\bfx\bfy}$.
We also denote the binary entropy function by $h(p) \triangleq -p\log p - (1-p)\log (1-p)$ and $h(\bfp) = \left(h(p_1), \ldots, h(p_n)\right)'$.
The Kullback-Leibler divergence between two pmf's $p$ and $q$ is denoted by $D(p \parallel q)$, and the Kullback-Leibler divergence between two Bernoulli random variables with respective expectations $p$ and $q$ is denoted by $d(p\|q) \triangleq p\log\frac{p}{q}+(1-p)\log\frac{1-p}{1-q}$, where $\log$ denotes base 2 logarithm and $\ln$ denotes natural logarithm throughout the paper.

\added{Sequences are denoted by $\langle\cdot\rangle$.} The size or cardinality of a finite set $\calA$ is denoted by $|\calA|$.
The indicator function of a subset $\calA$ of a set $\calX$ is a function $\dsone_\calA: \calX \rightarrow \{0,1\}$ defined as
$$\dsone_\calA(x) = \left\{ \begin{array}{ll} 1 & \textrm{if } x \in \calA\\ 0 & \textrm{if } x \notin \calA \end{array} \right..$$
The power set of a finite set $\calX$, denoted by $2^\calX$, is the set of all subsets of $\calX$, including the empty set and $\calX$ itself.
The support of a probability distribution $P$, denoted by $\supp(P)$, is the smallest set whose complement has probability zero. The support of a family $\scrP$ of probability distributions, denoted by $\supp(\scrP)$, is the union of the support of each probability distribution in the family, i.e., $\bigcup_{P \in \scrP} \supp(P)$.

Asymptotic notations are defined as follows: Suppose $f(k)$ and $g(k)$ are two functions defined on positive real numbers. We say $f(k) = O(g(k))$ if $\exists c_1>0, k_1 > 0$ such that $f(k) \leq c_1 g(k), \forall k \geq k_1$. Also, $f(k) = \Omega(g(k))$ if $\exists c_2>0, k_2 > 0$ such that $f(k) \geq c_2 g(k), \forall k \geq k_2$. We write $f(k) = \Theta(g(k))$ if $f(k) = O(g(k))$ and $f(k) = \Omega(g(k))$. The expression $f(k) = o(g(k))$ or $f(k) = \omega(g(k))$ means that $f(k)/g(k)$ tends to 0 or $\infty$ respectively.
The shorthand $f \sim g$, $f \gtrsim g$, and $f \lesssim g$ denote the asymptotic relations $\lim_{k \rightarrow \infty} \dfrac{f(k)}{g(k)} = 1$, $\liminf_{k \rightarrow \infty} \dfrac{f(k)}{g(k)} \geq 1$, and $\limsup_{k \rightarrow \infty} \dfrac{f(k)}{g(k)} \leq 1$ respectively.

 % new
\section{Fingerprinting codes and capacity}\label{sec:def}

\subsection{Overview}\label{ssec:def_overview}

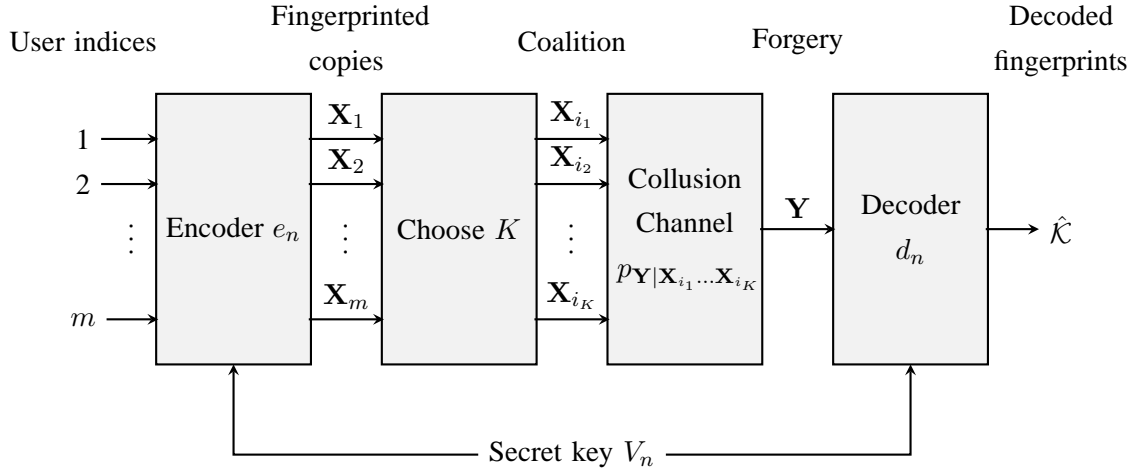
\begin{figure}[t]
\centering
\tikzstyle{block} = [draw, rectangle, text width=1.8cm, text centered, fill=black!5, minimum height=3.6cm, minimum width=2cm, scale=1]

\begin{tikzpicture}[>=stealth, thick, scale=1]

\node at (2,0) [block] (encoder)    {Encoder $e_n$};
\node at (5,0) [block] (multiplex)  {Choose $K$};
\node at (8,0) [block] (collusion)  {Collusion Channel $p_{\bfY|\bfX_{i_1}\ldots \bfX_{i_K}}$};
\node at (11,0)[block] (decoder)    {Decoder $d_n$};

\node at (6.5,-3) (key) {Secret key $V_n$};
\node at (0,2.5) [text width=2cm, text centered]() {User indices};
\node at (3.5,2.5) [text width=2cm, text centered]() {Fingerprinted copies};
\node at (6.5,2.5) [text width=2cm, text centered]() {Coalition};
\node at (9.5,2.5) [text width=2cm, text centered]() {Forgery};
\node at (13,2.5) [text width=2cm, text centered]() {Decoded fingerprints};

\node at (0,1.2)  (u1) {1};
\node at (0,.6) (u2) {2};
\node at (0,-1.2) (um) {$m$};
\node at (13,0) (calKhat) {$\calKhat$};

\draw[->] (u1) -- (1,1.2);
\draw[->] (u2) -- (1,.6);
\draw[->] (um) -- (1,-1.2);

\draw[->] (3,1.2)  -- node[above]{$\bfX_1$} (4,1.2);
\draw[->] (3,.6)   -- node[above]{$\bfX_2$} (4,.6);
\draw[->] (3,-1.2) -- node[above]{$\bfX_m$} (4,-1.2);

\draw[->] (6,1.2)  -- node[above]{$\bfX_{i_1}$} (7,1.2);
\draw[->] (6,.6)   -- node[above]{$\bfX_{i_2}$} (7,.6);
\draw[->] (6,-1.2) -- node[above]{$\bfX_{i_K}$} (7,-1.2);

\draw[->] (9,0)  -- node[above]{$\bfY$} (10,0);
\draw[->] (12,0) -- (calKhat);

\draw[->] (key) -| (2,-1.8);
\draw[->] (key) -| (11,-1.8);

\node at (.65,-.1) [rotate=90] {$\cdots$};
\node at (3.5,-.1) [rotate=90] {$\cdots$};
\node at (6.5,-.1) [rotate=90] {$\cdots$};

\end{tikzpicture}
\bigskip
\caption{Fingerprinting system model under the marking assumption}
\label{fig:model}
\end{figure}

The model for our fingerprinting system is shown in Fig. \ref{fig:model}. Let $\calX = \{0,1,\ldots,q-1\}$ denote a size-$q$ fingerprint alphabet and let $\calM = \{1, \ldots, m\}$ denote the set of user indices. An $(n,m)$ fingerprinting code $(e_n, d_n)$ over $\calX$ consists of an encoder and a decoder. The encoder
\begin{equation}
e_n: \calM \times \calV_n \rightarrow \calX^n
\end{equation}
assigns user $i$ a length-$n$ fingerprint $\bfX_i$, where $\calV_n$ is the alphabet of the secret key $V_n$, which is a random variable whose realization is known to the encoder and the decoder, but unknown to the pirates.

\added{We denote by} $\calK \triangleq \{i_1, \ldots, i_K\}$\removed{is} the index set of a coalition of $K$ pirates\removed{,} and $\bfX_\calK = \{\bfX_i: i \in \calK\}$ \changed{as}\removed{are} the fingerprints available to them. The collusion channel produces the forgery $\bfY \in \calY^n$ according to some conditional probability mass function (pmf) $p_{\bfY|\bfX_\calK}$. The Boneh-Shaw marking assumption is imposed on $p_{\bfY|\bfX_\calK}$, which allows the colluders to change only the symbols at the positions where they find differences.

Not knowing the actual collusion channel $p_{\bfY|\bfX_\calK}$, the decoder
\begin{equation}
d_n: \calY \times \calV_n \rightarrow 2^\calM
\end{equation}
produces an estimate $\calKhat$ of the coalition. Note that the actual number of pirates $K$ is known neither to the encoder nor to the decoder, so the code design is based on a nominal coalition size $k$. Also note that the empty set $\emptyset$ is an admissible decoder output, which enables us not to accuse any user when no enough evidence is available to the decoder, especially when the actual $K$ is larger than $k$.

\subsection{Randomized Fingerprinting Codes}\label{ssec:def_random_code}

The formal definition of an ensemble of fingerprinting codes is as follows.

\begin{definition}\label{def:ensemble}
A fingerprinting ensemble $(E_n, D_n)$ is formed by the fingerprinting embedder randomly choosing from a family $\{ e_n(\cdot, v_n), d_n(\cdot, v_n), v_n \in \calV_n \}$ of $(n,m)$ fingerprinting codes according to some probability distribution on the set $V_n$ of keys.
\end{definition}

We assume that the family of fingerprinting codes and the probability distribution on $V_n$ are known to the public, but the realization $v_n$ is only known to the encoder and the decoder.

As shown in \cite{Moulin2008b}, it suffices to consider the following two-phase fingerprinting construction and joint/simple decoding scheme in studying capacity. The secret key $V_n$ shared by the encoder and the decoder in this scheme is the set of random variables $\{\bfW_j\}_{j=1}^n \cup (X_{i,j})_{m \times n}$.

\subsubsection{Encoding Scheme}

Let $P_\bfW$ be a probability distribution on the $(q-1)$-dimensional simplex
\begin{equation}
\scrW^q \triangleq \left\{ \bfw \in \bbR^q: \sum_{x=0}^{q-1} w_x = 1 \textrm{ and } 0 \leq w_x \leq 1, x \in \calX \right\}.
\end{equation}
A sequence of auxiliary ``time-sharing'' random variables $\{ \bfW_j \}_{j=1}^n$ is drawn independent and identically from the distribution $P_\bfW$. For each $j \in \{1, \ldots, n\}$, $\{X_{i,j}\}_{i=1}^m$ are $m$ independent and identically distributed random variables constructed from a categorical distribution\footnote{The categorical distribution is a special case of the multinomial distribution with the number of trials set to 1.} with parameter $\bfW_j$, i.e.,
\begin{equation}
Pr \left( X_{1,j} = x_1, \ldots, X_{m,j} = x_m | \bfW_j = \bfw \right) = \prod_{i=1}^m w_{x_i}, \quad \forall x_1, \ldots, x_m \in \calX.
\end{equation}

In general, there is no constraint on the choice of the embedding distribution $P_\bfW$, which means that we can choose it from the class of all probability distributions on $\scrW^q$, denoted by $\scrP_\bfW$. However, we may want to limit $P_\bfW$ to a subclass $\scrP^\emb$ of $\scrP_\bfW$ in some applications. For instance, Nuida et al. \cite{Nuida2007} limited $P_\bfW$ to be discrete with a finite spectrum. Furon and Perez-Freire \cite{Furon2009a} studied the case when $P_W$ is the arcsine distribution (defined below) or the uniform distribution over the unit interval for binary fingerprinting codes, in which $\scrP^\emb$ is just a singleton.

In most of our results we require $\scrP^\emb$ to be compact\removed{, which is true in general}. In some results we also require the following condition. \added{Note that $\scrP^\emb$ satisfying (\ref{eqn:prob}) is compact.}

\begin{condition}\label{cond:prob}
$\scrP^\emb$ \changed{coincides with}\removed{is} the class of all probability distributions on $\supp(\scrP^\emb)$, i.e.,
\begin{equation}\label{eqn:prob}
\scrP^\emb = \left\{ P_\bfW \in \scrP_\bfW: \supp(P_\bfW) \subseteq \supp(\scrP^\emb) \right\}.
\end{equation}
\end{condition}

Analogous to the symbol-symmetric fingerprinting codes proposed by \v{S}kori\'{c} et al. \cite{vSkori'c2008}, it is intuitively reasonable to adopt a probability distribution $P_\bfW$ that is invariant to permutations of the symbols. Formally, let $\pi$ be a permutation of $\calX$ and define

\begin{equation}
P_\bfW^\pi(w_0, \ldots, w_{q-1}) \triangleq P_\bfW(w_{\pi(0)}, \ldots, w_{\pi(q-1)}).
\end{equation}
Then we have
\begin{definition}
An embedding distribution $P_\bfW$ is {\bf symbol-symmetric} if
\begin{equation}
P_\bfW^\pi = P_\bfW, \quad \forall \pi.
\end{equation}
\end{definition}

\begin{definition}
A subset $\scrP^\emb$ of $\scrP_\bfW$ is said to be \emph{symbol-symmetric} if
\begin{equation}
P_\bfW \in \scrP^\emb \Rightarrow\ P_\bfW^\pi \in \scrP^\emb, \quad \forall \pi.
\end{equation}
\end{definition}

\begin{definition}
The symbol-symmetric subclass of an embedding class $\scrP^\emb$ is defined by
\begin{equation}
\scrP^\emb_\sym = \left\{ P_\bfW \in \scrP^\emb: P_\bfW \textrm{ is symbol-symmetric} \right\}.
\end{equation}
\end{definition}

The optimality of limiting $P_\bfW$ to $\scrP^\emb_\sym$ will be discussed in the next section.

\subsubsection{Decoding Scheme}

We briefly review Moulin's two decoding schemes proposed in \cite{Moulin2008b}: the simple decoder tests candidate fingerprints one by one, while the joint decoder utilizes a joint decoding rule. The simple decoder evaluates the empirical mutual information $I(\bfx_i;\bfy|\bfw)$ for each user $i$. A threshold $\eta^\simple$ is chosen and user $i$ is accused if and only if $I(\bfx_i;\bfy|\bfw) > \eta^\simple$. If $I(\bfx_i;\bfy|\bfw) \leq \eta^\simple$ for all $i \in \calM$, then $\hat{\calK} = \emptyset$. The joint decoder evaluates the following score for each coalition $\calA \subseteq \calM$
\begin{equation}
S(\calA) = \left\{\begin{array}{ll}
0, &\textrm{ if } \calA = \emptyset\\
I(\bfx_\calA;\bfy|\bfw) - |\calA|\eta^\joint, &\textrm{ otherwise }
\end{array}\right.
\end{equation}
where $\eta^\joint$ is a threshold. The set $\calA$ that has the largest score is then accused. With the parameters $\eta^\simple$ and $\eta^\joint$, both decoders allow to tune the trade-offs between false positive and false negative error probabilities.

It is shown in \cite{Moulin2008b} that the joint decoding scheme achieves capacity while the simple decoding scheme has a smaller maximum achievable rate. However, \changed{the computational complexity of a joint decoder is generally vastly greater than that of a simple decoder.}\removed{the joint decoder is generally computationally infeasible in practice. The simple decoder, albeit suboptimal, is more practical.}

\subsection{Collusion Channel}\label{ssec:def_colchan}

Upon receiving the $K$ fingerprinting copies $\{\bfX_{i_1}, \ldots, \bfX_{i_K}\}$, the pirates attempt to generate a forgery $\bfY$ subject to the marking assumption. A coordinate $j$ is called \emph{undetectable} if
\begin{equation}
x_{i_1,j} = x_{i_2,j} = \cdots = x_{i_K,j}
\end{equation}
and is called detectable otherwise. The Boneh-Shaw marking assumption states that for any forgery $\bfy$ generated by the coalition, we have $y_j = x_{i_1,j}$ for every undetectable coordinate $j$.

We assume the collusion channel $p_{\bfY|\bfX_\calK}$ adopted by the pirates is memoryless, that is, 
\begin{equation}
p_{\bfY|\bfX_\calK}(\bfy|\bfx_\calK) = \prod_{j=1}^n p_{Y|X_\calK}(y_j|x_{\calK,j}).
\end{equation}
As exploited in \cite{Moulin2008b}, the memoryless restriction can be relaxed without changing the fingerprinting capacity. For simplicity we impose this constraint so the colluders' strategy is limited to the choice of the single-lettered channel $p_{Y|X_\calK}$. Denoted by $\scrP_\marking$, the class of attacks satisfying the marking assumption can be written as
\begin{equation}\label{eqn:marking}
\scrP_\marking = \{ p_{Y|X_\calK}: p_{Y|X_\calK}(y|x_\calK) = 1 \textrm{ if } y = x_{i_1} = x_{i_2} = \cdots = x_{i_K} \}.
\end{equation}

Several variants of the marking assumption appear in the literature. Suppose $\scrP^\col$ denotes the class of admissible channels $p_{Y|X_\calK}$. Some restrict the coalition to use only the symbols available to them, where $\scrP^\col = \{p_{Y|X_\calK} \in \scrP_\marking: p_{Y|X_\calK}(y|x_\calK) = 0 \textrm{ if } y \neq x_i, \forall i \in \calK \}$ (usually called the \emph{restricted digit model}). Others relax the alphabet $\calY$ to $\calX \cup \{\textrm{`?'}\}$, which allows the pirates to put \emph{erasures} at detectable coordinates (usually called the \emph{general digit model}). Here we consider a general $\scrP^\col$ with $\scrP^\col \subseteq \scrP_\marking$ being compact.

Given a single-lettered collusion channel $p_{Y|X_\calK}$, consider the user-permuted collusion channel
\begin{equation}
p_{Y|X_{\pi(\calK)}}(y|x_{i_1}, \ldots, x_{i_K}) \triangleq p_{Y|X_\calK}(y|x_{\pi(i_1)}, \ldots, x_{\pi(i_K)})
\end{equation}
where $\pi$ is a permutation of $\calK$. We say that $p_{Y|X_\calK}$ is \emph{user-symmetric} if
\begin{equation}
p_{Y|X_{\pi(\calK)}} = p_{Y|X_\calK}, \quad \forall \pi.
\end{equation}
A subset $\scrP^\col$ of $\scrP_\marking$ is said to be user-symmetric if
\begin{equation}
p_{Y|X_\calK} \in \scrP^\col \Rightarrow\ p_{Y|X_{\pi(\calK)}} \in \scrP^\col, \quad \forall \pi.
\end{equation}

Note that in general not all elements of such $\scrP^\col$ are user-symmetric. The user-symmetric subclass of a collusion class $\scrP^\col$ that consists of user-symmetric collusion channels is defined by
\begin{equation}
\scrP^\col_\fair = \{ p_{Y|X_\calK} \in \scrP^\col: p_{Y|X_\calK} \textrm{ is user-symmetric} \}.
\end{equation}

Symbol-symmetry can also be defined for collusion channels. Let $\pi$ be a permutation of $\calX$ and define
\begin{equation}
p_{Y|X_\calK}^\pi(y|x_{i_1}, \ldots, x_{i_K}) \triangleq p_{Y|X_\calK}(y|\pi(x_{i_1}), \ldots, \pi(x_{i_K})).
\end{equation}
Then we have
\begin{definition}
A collusion channel $p_{Y|X_\calK}$ is {\bf symbol-symmetric} if
\begin{equation}
p_{Y|X_\calK}^\pi = p_{Y|X_\calK}, \quad \forall \pi.
\end{equation}
\end{definition}

\begin{definition}
A subset $\scrP^\col$ of $\scrP_\marking$ is said to be \emph{symbol-symmetric} if
\begin{equation}
p_{Y|X_\calK} \in \scrP^\col \Rightarrow\ p_{Y|X_\calK}^\pi \in \scrP^\col, \quad \forall \pi.
\end{equation}
\end{definition}

\begin{definition}
The symbol-symmetric subclass of a collusion class $\scrP^\col$ is defined by
\begin{equation}
\scrP^\col_\sym = \left\{ p_{Y|X_\calK} \in \scrP^\col: p_{Y|X_\calK} \textrm{ is symbol-symmetric} \right\}.
\end{equation}
\end{definition}

\subsection{Error Probabilities and Capacity}\label{ssec:err_cap}

Under fingerprinting ensemble $(E_n, D_n)$, nominal coalition size $k$ (not necessarily equal to the true coalition size $K$), and collusion channel $p_{Y|X_\calK}$, we consider the following error probabilities:
\begin{itemize}
\item The probability of false positives (accusing an innocent user):
\begin{equation}\label{eqn:FPerror}
P^{\FP}_e(E_n, D_n, p_{Y|X_\calK}) = Pr \left(\calKhat \setminus \calK \neq \emptyset \right).
\end{equation}
\item The probability of failing to catch any single pirate:
\begin{equation}\label{eqn:oneerror}
P^{\one}_e(E_n, D_n, p_{Y|X_\calK}) = Pr \left(\calKhat \cap \calK = \emptyset \right).
\end{equation}
\item The probability of failing to catch the full coalition:
\begin{equation}\label{eqn:allerror}
P^{\all}_e(E_n, D_n, p_{Y|X_\calK}) = Pr \left(\calK \nsubseteq\calKhat \right).
\end{equation}
\end{itemize}

The error probabilities above can be written explicitly as
\begin{equation}
P_e(E_n, D_n, p_{Y|X_\calK}) = \sum_{\bfx_\calM, \bfy} \prod_{j=1}^n \int P_\bfW (d\bfw_j) \left( \prod_{i=1}^m w_{x_{i,j}} \right) p_{Y|X_\calK}(y_j|x_{\calK,j}) \dsone_\calE
\end{equation}
where the error event $\calE$ is given by $\calE^\FP = \left\{ d_n(\bfy,v_n) \setminus \calK \neq \emptyset \right\}$, $\calE^\one = \left\{ d_n(\bfy,v_n) \cap \calK = \emptyset \right\}$, and $\calE^\all = \left\{ \calK \nsubseteq d_n(\bfy,v_n) \right\}$, when $P_e$ is given by (\ref{eqn:FPerror}), (\ref{eqn:oneerror}), and (\ref{eqn:allerror}) respectively. The worst-case error probability for a collusion class is given by
\begin{equation}
P_{e,k}(E_n, D_n, \scrP^\col) = \max_{\substack{\calK \subseteq \calM\\ |\calK| \leq k}} ~ \max_{p_{Y|X_\calK} \in \scrP^\col} P_e\left(E_n, D_n, p_{Y|X_\calK}\right).
\end{equation}

Having defined the error probabilities of the randomized fingerprinting scheme, we now define the notion of capacity.

\begin{definition}
A rate $R$ is achievable for embedding class $\scrP^\emb$, collusion channel $\scrP^\col$, and size-$k$ coalitions under the {\bf detect-one} criterion if there exists a sequence of fingerprinting ensembles $(F_n, G_n)$ generated by $P_\bfW \in \scrP^\emb$ for $m = \lceil 2^{nR} \rceil$ users such that both $P^\FP_{e,k}(E_n, D_n, \scrP^\col)$ and $P^\one_{e,k}(E_n, D_n, \scrP^\col)$ vanish as $n$ tends to infinity.
\end{definition}

\begin{definition}
A rate $R$ is achievable for embedding class $\scrP^\emb$, collusion channel $\scrP^\col$, and size-$k$ coalitions under the {\bf detect-all} criterion if there exists a sequence of fingerprinting ensembles $(F_n, G_n)$ generated by $P_\bfW \in \scrP^\emb$ for $m = \lceil 2^{nR} \rceil$ users such that both $P^\FP_{e,k}(E_n, D_n, \scrP^\col)$ and $P^\all_{e,k}(E_n, D_n, \scrP^\col)$ vanish as $n$ goes to infinity.
\end{definition}

\begin{definition}
\begin{sloppypar}
Fingerprinting capacities $C^\one_k(\scrP^\emb,\scrP^\col)$ and $C^\all_k(\scrP^\emb,\scrP^\col)$ are the suprema of all achievable rates with respect to the detect-one and
detect-all criteria, respectively.
\end{sloppypar}
\end{definition}

\begin{remark}
When the embedding class $\scrP^\emb$ is a singleton $\{ P_\bfW \}$ or the collusion class $\scrP^\col$ is a singleton $\{p_{Y|X_\calK}\}$, we denote the corresponding capacities as $C_k(P_\bfW, \cdot)$ and $C_k(\cdot, p_{Y|X_\calK})$ respectively, which is a slight abuse of notation.
\end{remark}   % new
\section{Mutual information games}\label{sec:game}

In this section we first review the mutual information games associated with both the joint and the simple decoding schemes in \cite{Moulin2008b}. We show how these games can be simplified under the marking assumption, and we show the existence of saddle-point solutions.

\subsection{Mutual Information Game \changed{for}\removed{of} Joint Decoder}\label{ssec:game_joint}

We use the special symbol $\sfK$ to denote the set $\{ 1, 2, \ldots, k \}$, where $k$ is the nominal coalition size introduced in Sec. \ref{ssec:def_overview}. To present the capacity formula, we first introduce the following \changed{setup}\removed{experiment}: for a fixed embedding class $\scrP^\emb$, let 
\begin{equation}
\scrP^\emb_l \triangleq \left\{ p_\bfW \in \scrP^\emb: \left|\supp(p_\bfW)\right| \leq l \right\}
\end{equation}
be the class of probability distributions with finite spectrum composed of no more than $l$ points of the $(q-1)$-dimensional simplex $\scrW^q$. A random variable $\bfW$ is drawn from some $p_\bfW \in \scrP^\emb_l$, and $\{X_i\}_{i=1}^k$ are independent and identically distributed with categorical distribution with parameter $\bfW$, i.e.,
\begin{equation}
p_{X_\sfK|\bfW}(x_\sfK|\bfw) = \prod_{i=1}^k p_{X|\bfW}(x_i|\bfw)
\end{equation}
where
$$p_{X|\bfW}(x|\bfw) = w_x, \quad x \in \calX.$$
The collusion class $\scrP^\col$ is the set of all feasible channels $p_{Y|X_\sfK}$. Let
\begin{equation}\label{eqn:maximin_l}
C_k^{\joint,l}(\scrP^\emb, \scrP^\col) = \max_{p_\bfW \in \scrP^\emb_l} ~ \min_{p_{Y|X_\sfK} \in \scrP^\col} ~ \frac{1}{k} I(X_\sfK; Y|\bfW).
\end{equation}
The following theorem summarizes the main results of fingerprinting capacity proposed in \cite{Moulin2008b} under the marking assumption.

\begin{theorem}\label{thm:cap_one_all}
Assume that Condition \ref{cond:prob} is satisfied and $\scrP^\col$ is compact. Then
\begin{enumerate}
\item $C_k^\all(\scrP^\emb, \scrP^\col) \leq C^\one_k(\scrP^\emb, \scrP^\col)$. In particular, the detect-all fingerprinting capacity $C_k^\all(\scrP^\emb, \scrP_\marking)$ under the marking assumption is zero.
\item Suppose further that $\scrP^\col$ is user-symmetric, then 
\begin{equation}
C^\one_k(\scrP^\emb, \scrP^\col) = C^\one_k(\scrP^\emb, \scrP^\col_\fair) = C^\all_k(\scrP^\emb, \scrP^\col_\fair) = \lim_{l \rightarrow \infty} C^{\joint,l}_k(\scrP^\emb, \scrP^\col).
\end{equation}
\end{enumerate}
\end{theorem}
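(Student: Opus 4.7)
The plan is to handle each claim separately, using comparison of error events for the easy inequalities and a random-coding / Fano argument for the mutual-information characterization.

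For Part 1, the inequality $C_k^\all \le C_k^\one$ follows pointwise: whenever $\calK \subseteq \calKhat$ we also have $\calK \cap \calKhat \ne \emptyset$, so $P^\all_e \ge P^\one_e$ for every ensemble and channel. To prove $C_k^\all(\scrP^\emb,\scrP_\marking)=0$, I would exhibit a ``pirate-dumping'' attack in $\scrP_\marking$: pick any $j\in\calK$ and set $\bfY=\bfX_i$ for some fixed $i\in\calK\setminus\{j\}$; this trivially satisfies the marking assumption because $Y_t=X_{i,t}$ is always a coalition symbol and, in particular, equals the common undetectable symbol at every undetectable coordinate. Under this attack $\bfY$ is conditionally independent of $\bfX_j$ given $\bfW$, and since $\bfX_j\mid\bfW$ has the same distribution as any innocent user's fingerprint given $\bfW$, one has $\Pr(j\in\calKhat)=\Pr(i'\in\calKhat)$ for every innocent $i'$. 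Consequently $P^\FP_e \ge \Pr(j\in\calKhat)$, so $P^\FP_e\to 0$ forces $\Pr(j\in\calKhat)\to 0$, hence $P^\all_e\to 1$ at any positive rate.

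For Part 2, I would establish the chain of equalities left to right. First, $C_k^\one(\scrP^\emb,\scrP^\col)=C_k^\one(\scrP^\emb,\scrP^\col_\fair)$: the $\le$ direction is immediate from $\scrP^\col_\fair\subseteq\scrP^\col$, while $\ge$ follows by randomly permuting user indices with additional shared randomness, which converts any attack in the user-symmetric class $\scrP^\col$ into its fair version with identical worst-case error probability. Second, $C_k^\one(\scrP^\emb,\scrP^\col_\fair)=C_k^\all(\scrP^\emb,\scrP^\col_\fair)$ again has $\ge$ for free; for $\le$ the key observation is that under a fair attack the posterior over coalitions given $(\bfW,\bfY)$ is exchangeable on the true $\calK$, so the joint decoder of Sec. \ref{ssec:def_overview} (which scores each candidate $\calA$ by $I(\bfx_\calA;\bfy|\bfw)-|\calA|\eta^\joint$) will favor the full coalition $\calK$ with vanishing error whenever it favors any nonempty subset of $\calK$. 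Finally, the mutual-information characterization $C_k^\all(\scrP^\emb,\scrP^\col_\fair)=\lim_{l\to\infty}C_k^{\joint,l}(\scrP^\emb,\scrP^\col)$ follows from random coding with $p_\bfW\in\scrP^\emb_l$: draw $\bfW_j$ i.i.d.\ from $p_\bfW$ and fingerprints conditionally i.i.d.\ given $\bfW$; a method-of-types / packing argument applied to the joint decoder yields vanishing $P^\FP_e$ and $P^\all_e$ whenever $R<\tfrac{1}{k}I(X_\sfK;Y|\bfW)-\delta$ uniformly over $p_{Y|X_\sfK}\in\scrP^\col$, and Fano's inequality supplies the matching converse. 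Compactness of $\scrP^\emb$ and $\scrP^\col$, together with weak continuity of mutual information in $p_\bfW$, lets us pass to the supremum over finite-spectrum distributions and take $l\to\infty$.

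The main obstacle will be the second step of Part 2: promoting detect-one to detect-all at the same rate under fair attacks. This is where both user-symmetry of $\scrP^\col$ and the joint-decoder structure are essential; one must argue that the maximizing $\calA$ in the joint decoder simultaneously identifies every pirate, not merely one, and this relies on exchangeability of the posterior together with a careful scaling of the threshold $\eta^\joint$ by $|\calA|$. A secondary technical point is justifying the min-max exchange in the converse of the mutual-information characterization, which uses compactness of $\scrP^\col$ and Condition \ref{cond:prob} (yielding closedness of $\scrP^\emb$ under mixtures) to guarantee attainment of the saddle-point value over $p_\bfW$ with finite spectrum.
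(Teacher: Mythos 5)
First, note that the paper does not actually prove this theorem: it is explicitly imported from \cite{Moulin2008b} (``The following theorem summarizes the main results of fingerprinting capacity proposed in \cite{Moulin2008b}\ldots''), so there is no in-paper argument to compare yours against. Judged on its own terms, your Part 1 is correct and is the standard argument: the event inclusion $\{\calKhat\cap\calK=\emptyset\}\subseteq\{\calK\nsubseteq\calKhat\}$ gives $P^\one_e\le P^\all_e$ and hence $C^\all_k\le C^\one_k$; and the attack $\bfY=\bfX_i$ for a fixed $i\in\calK$ is admissible in $\scrP_\marking$, makes every other colluder $j$ conditionally i.i.d.\ with the innocent users given $(\bfW,\bfY)$, and therefore forces $\Pr(j\in\calKhat)\to 0$ whenever $P^\FP_{e,k}\to 0$ and $m>k$, so $P^\all_e\to 1$ at any positive rate.

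Part 2, however, has a genuine gap at exactly the step you flag as the main obstacle. The claim that under a fair attack the joint decoder ``will favor the full coalition whenever it favors any nonempty subset'' because the posterior is exchangeable does not follow: exchangeability over the colluders says nothing about whether the score $I(\bfx_\calA;\bfy|\bfw)-|\calA|\eta^\joint$ is maximized by the full coalition rather than by a strict subset. A strict subset $\calA\subsetneq\calK$ wins whenever the incremental information $I(\bfx_{\calK\setminus\calA};\bfy|\bfx_\calA,\bfw)$ falls below $|\calK\setminus\calA|\eta^\joint$, and ruling this out requires a chain-rule argument controlling $\frac{1}{|\calB|}I(X_\calB;Y|X_\calA,\bfW)$ over all partitions of $\sfK$ under fair channels, together with a genuine detect-one converse; this is the substantive content of the proof in \cite{Moulin2008b} and cannot be replaced by a symmetry remark. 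Note also that your chain of equalities needs a converse somewhere, and Fano applied to the detect-all criterion only closes the chain if the detect-one-to-detect-all promotion is actually established. The remaining ingredients (symmetrization of the attack via linearity of $P_e$ in $p_{Y|X_\calK}$ and exchangeability of the code, random coding over $\scrP^\emb_l$, passage to the limit in $l$ using compactness) are correctly identified and are routine by comparison.
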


Theorem \ref{thm:cap_one_all} states that, the detect-all capacity can never exceed the detect-one capacity, which is no surprise since it \changed{can only be harder}\removed{is a much harder task} for the decoder to detect all the pirates than to detect only one of the pirates. However, if the collusion channel is user-symmetric, which we can intuitively think of as the case when each colluder ``contributes'' the same number of samples to the forgery (hence the term ``fair''), then the detect-one and the detect-all capacities are the same.

Now since the detect-all capacity is null under the marking assumption, we will in the rest of the paper refer to the detect-one capacity $C^\one_k(\scrP^\emb, \scrP^\col)$, denoted by $C^\joint_k(\scrP^\emb, \scrP^\col)$, as the joint fingerprinting capacity for embedding class $\scrP^\emb$ and collusion channel $\scrP^\col$.

In the game-theoretic point of view, $C_k^{\joint,l}$ is the maximin value of a two-person zero-sum game for each $l$. Observe that the sequence $\langle C_k^{\joint,l} \rangle_{l=1}^\infty$ is nondecreasing since $\langle \scrP^\emb_l \rangle_{l=1}^\infty$ is nondecreasing \added{(i.e. $\scrP_1^\emb \subseteq \scrP_2^\emb \subseteq \cdots$)}. Thus the game can be interpreted as the following: the maximizer, the fingerprint embedder, picks $p_\bfW$ with an increasing flexibility in the support size, while the minimizer, the coalition, counters the embedder's choice for each $l$ by minimizing the mutual information payoff function. Fingerprinting capacity is the limit value of the sequence of maximin games.

However, the maximin game of (\ref{eqn:maximin_l}) is in general very difficult to solve  even for small values of $l$ since a saddle-point solution cannot be guaranteed. For the binary alphabet ($q = 2$) and $l = 1$, we can derive the maximin value as
\begin{equation}\label{eqn:C_expo}
C^{\joint,1}_k(\scrP_\bfW,\scrP_\marking) = \frac{1}{k}2^{-(k-1)}
\end{equation}
which is not achieved by a saddle-point solution when $k>2$. Also, this is a very loose lower bound on $C^\joint_k(\scrP_\bfW,\scrP_\marking)$ for large $k$ comparing to the $\Theta(k^{-2})$ bound we will show in Sec. \ref{ssec:bin_bounds}.

\subsection{Mutual Information Game \changed{for}\removed{of} Simple Decoder}\label{ssec:game_simple}

As mentioned in Sec. \ref{ssec:def_random_code}, computationally \removed{the} joint decoding is too complex. Thus it is also interesting to study the maximum achievable rate of the simple decoding scheme.

\begin{theorem}
Assume that Condition \ref{cond:prob} is satisfied and $\scrP^\col$ is compact and user-symmetric. Let
\begin{equation}
C_k^{\simple,l}(\scrP^\emb, \scrP^\col) = \max_{p_\bfW \in \scrP^\emb_l} ~ \min_{p_{Y|X_\sfK} \in \scrP^\col_\fair} ~ I(X_1; Y|\bfW)
\end{equation}
for $l \geq 1$ and let
\begin{equation}
C^\simple_k(\scrP^\emb, \scrP^\col) = \lim_{l \rightarrow \infty} C_k^{\simple,l}(\scrP^\emb, \scrP^\col).
\end{equation}
Then all rates below $C^\simple_k(\scrP^\emb, \scrP^\col)$ are achievable by the simple decoding scheme for embedding class $\scrP^\emb$, collusion channel $\scrP^\col$, and size-$k$ coalitions under the detect-one criterion.
\end{theorem}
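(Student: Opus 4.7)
The plan is to mimic the joint-decoding achievability behind Theorem \ref{thm:cap_one_all}, but to replace the joint score by the per-user score $I(\bfx_i;\bfy|\bfw)$ used by the simple decoder, and then to bound both error probabilities via the method of types.

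\textbf{Step 1 (parameter selection).} Fix $R<C^\simple_k(\scrP^\emb,\scrP^\col)$. Because $C^\simple_k$ is the nondecreasing limit of $C^{\simple,l}_k$, there exist an integer $l$ and a distribution $p_\bfW^{\star}\in\scrP^\emb_l$ with
$$\min_{p_{Y|X_\sfK}\in\scrP^\col_\fair} I_{p_\bfW^{\star} p_{Y|X_\sfK}}(X_1;Y|\bfW)\ \geq\ R+3\delta$$
for some $\delta>0$. Run the two-phase construction of Section \ref{ssec:def_random_code} with embedding distribution $p_\bfW^{\star}$ and apply the simple decoder with threshold $\eta^\simple=R+\delta$. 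Because $p_\bfW^{\star}$ has finite support, the auxiliary process $(X_{i,j},Y_j,\bfW_j)$ takes finitely many values and the method of types applies directly.

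\textbf{Step 2 (reduction to fair channels).} The error probability is linear in the collusion channel, and both the random code and the simple decoder are symmetric across users. Hence for every permutation $\pi$ of $\calK$ one has $P_e(E_n,D_n,p)=P_e(E_n,D_n,p^\pi)$, and averaging over $\pi$ yields
$$P_e(E_n,D_n,p)\ =\ P_e(E_n,D_n,\bar p),\qquad \bar p\triangleq\frac{1}{K!}\sum_{\pi}p^\pi\in\scrP^\col_\fair,$$
where we invoke user-symmetry of $\scrP^\col$ and, if necessary, pass to its closed convex hull (which preserves compactness, user-symmetry, and the marking constraints). It therefore suffices to prove achievability against fair channels $\bar p$, for which $I_{\bar p}(X_i;Y|\bfW)=I_{\bar p}(X_1;Y|\bfW)\geq R+3\delta$ for every $i\in\calK$.

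\textbf{Step 3 (Sanov bounds).} For an innocent user $j\notin\calK$, $X_{j,t}$ is conditionally independent of $Y_t$ given $\bfW_t$, so Sanov's theorem yields $\Pr(I(\bfx_j;\bfy|\bfw)>\eta^\simple)\leq(n+1)^{r}\,2^{-n\eta^\simple}$ for some constant $r$ depending only on the alphabet sizes and $|\supp(p_\bfW^\star)|$. A union bound over the at most $2^{nR}$ innocent users gives $P^\FP_e\leq(n+1)^r\,2^{-n(\eta^\simple-R)}=(n+1)^r\,2^{-n\delta}\to 0$. For the detect-one event, fix any $i\in\calK$: the empirical type of $(X_{i,t},Y_t,\bfW_t)$ concentrates on the true joint distribution, and since $I_{\bar p}(X_i;Y|\bfW)\geq\eta^\simple+2\delta$, Sanov gives $\Pr(I(\bfx_i;\bfy|\bfw)\leq\eta^\simple)\leq(n+1)^r\,2^{-n\gamma}\to 0$ for some $\gamma>0$, so $P^\one_e\to 0$.

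\textbf{Main obstacle.} The delicate point is Step 2: the minimum in the definition of $C^{\simple,l}_k$ runs only over $\scrP^\col_\fair$, yet achievability must hold against \emph{every} $p\in\scrP^\col$, and symmetrizing the channel does not in general preserve per-user mutual informations. The resolution relies on the fact that error probabilities (not mutual informations) are what the decoder actually sees, and that they are linear in the channel and invariant under coalition relabelings; this lets us replace $p$ by its orbit average $\bar p\in\scrP^\col_\fair$ without loss. Everything that follows is standard type-counting.
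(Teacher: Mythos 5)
The paper itself does not prove this theorem; it defers entirely to \cite{Moulin2008b}. Your attempt to supply a self-contained proof founders on Step 2. You assert $P_e(E_n,D_n,p)=P_e(E_n,D_n,\bar p)$ with $\bar p=\frac{1}{K!}\sum_\pi p^\pi$ treated as a single-letter fair channel, and Step 3 then lets each colluder's empirical type concentrate on the law induced by the \emph{memoryless} channel $\bar p$, so that every colluder's score sits near $I_{\bar p}(X_1;Y|\bfW)\ge R+3\delta$. But the error probability is linear in the $n$-letter channel, and averaging the products $(p^\pi)^{\otimes n}$ over $\pi$ gives the mixture $\frac{1}{K!}\sum_\pi (p^\pi)^{\otimes n}$ --- ``draw one permutation and apply it to the whole block'' --- which is not $\bar p^{\otimes n}$: the product of averages is not the average of products. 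Under the true channel $p$ the type of $(\bfx_i,\bfy,\bfw)$ concentrates on the single-letter law induced by $p$ itself, whose per-user informations $I_p(X_i;Y|\bfW)$ are in general unequal and can individually fall far below $I_{\bar p}(X_1;Y|\bfW)$. Concretely, for $K=2$ and $Y=X_{i_1}$ (a legal, non-fair marking channel), colluder $i_2$ has $I_p(X_{i_2};Y|\bfW)=0$ while $\bar p$ is the interleaving attack with strictly positive per-user information; so Step 3's detect-one bound, applied to ``any $i\in\calK$,'' fails.

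The standard repair --- and the one underlying the cited reference --- is to target only the \emph{best} colluder: the detect-one criterion needs just one score above threshold, and
$\max_i I_p(X_i;Y|\bfW)\ \ge\ \frac1K\sum_i I_p(X_i;Y|\bfW)\ \ge\ I_{\bar p}(X_1;Y|\bfW)$,
where the second inequality is convexity of $I(X_1;Y|\bfW)$ in the channel $p_{Y|X_1\bfW}$ (the induced single-user channel under $\bar p$ is the average of those under the $p^\pi$, which by exchangeability of the fingerprints are the induced channels of $p$ seen from users $\pi(1)$). One then needs $I_{\bar p}(X_1;Y|\bfW)\ge\min_{\scrP^\col_\fair}I(X_1;Y|\bfW)$, which exposes a second, smaller gap you already sensed: this requires $\bar p\in\scrP^\col_\fair$, i.e.\ $\scrP^\col$ closed under orbit averaging, which compactness and user-symmetry alone do not give. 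Your proposed fix of passing to the convex hull is not innocuous, since it enlarges the set over which the minimum in $C_k^{\simple,l}$ is taken and would only establish a smaller achievable rate. Your Step 1 and the false-positive half of Step 3 are fine.
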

\begin{corollary}
For $\scrP^\emb$ satisfying Condition \ref{cond:prob} and compact and user-symmetric $\scrP^\col$, we have
\begin{equation}
C^\simple_k(\scrP^\emb, \scrP^\col) \leq C^\joint_k(\scrP^\emb, \scrP^\col).
\end{equation}
\end{corollary}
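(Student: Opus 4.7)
My plan is to combine a pointwise mutual-information inequality with Theorem \ref{thm:cap_one_all} in order to lift the conclusion from the finite-$l$ game values to the capacity level. The key intermediate claim is that for any embedding distribution $p_\bfW$ and any user-symmetric (fair) collusion channel $p_{Y|X_\sfK} \in \scrP^\col_\fair$, one has
\[
k\,I(X_1;Y|\bfW) \leq I(X_\sfK;Y|\bfW).
\]

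To prove this, I would apply the chain rule
\[
I(X_\sfK;Y|\bfW) = \sum_{i=1}^k I(X_i;Y\mid X_1,\ldots,X_{i-1},\bfW),
\]
and exploit the fact that $X_1,\ldots,X_k$ are conditionally i.i.d.\ given $\bfW$, so that $H(X_i\mid X_1,\ldots,X_{i-1},\bfW) = H(X_i\mid\bfW)$. Combined with ``conditioning reduces entropy,'' $H(X_i\mid Y,X_1,\ldots,X_{i-1},\bfW) \leq H(X_i\mid Y,\bfW)$, this yields $I(X_i;Y\mid X_1,\ldots,X_{i-1},\bfW) \geq I(X_i;Y\mid\bfW)$ for each $i$. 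The user-symmetry of $p_{Y|X_\sfK}$ makes $I(X_i;Y|\bfW)$ independent of $i$, so each term is at least $I(X_1;Y|\bfW)$ and the sum is at least $k\,I(X_1;Y|\bfW)$.

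Taking the minimum over $p_{Y|X_\sfK} \in \scrP^\col_\fair$ and the maximum over $p_\bfW \in \scrP^\emb_l$ on both sides of the displayed inequality gives
\[
C_k^{\simple,l}(\scrP^\emb,\scrP^\col) \leq C_k^{\joint,l}(\scrP^\emb,\scrP^\col_\fair).
\]
The class $\scrP^\col_\fair$ is itself compact (it is the intersection of $\scrP^\col$ with a closed user-symmetry condition) and trivially user-symmetric, so Theorem \ref{thm:cap_one_all} applies to both $\scrP^\col$ and $\scrP^\col_\fair$. It gives
\[
\lim_{l\to\infty} C_k^{\joint,l}(\scrP^\emb,\scrP^\col_\fair) = C_k^\one(\scrP^\emb,\scrP^\col_\fair) = C_k^\one(\scrP^\emb,\scrP^\col) = C_k^\joint(\scrP^\emb,\scrP^\col),
\]
and letting $l\to\infty$ in the previous display finishes the proof.

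The main subtlety I expect is the mismatch between the two game-value definitions: the simple-decoder minimum is over $\scrP^\col_\fair$, while the joint-decoder minimum at level $l$ is, by (\ref{eqn:maximin_l}), over the full class $\scrP^\col$. Naively comparing these minima goes the wrong way, since restricting to $\scrP^\col_\fair \subseteq \scrP^\col$ can only increase the minimum. The route that avoids this is to pass through the auxiliary quantity $C_k^{\joint,l}(\scrP^\emb,\scrP^\col_\fair)$ and then exploit the equality $C_k^\one(\scrP^\emb,\scrP^\col) = C_k^\one(\scrP^\emb,\scrP^\col_\fair)$ furnished by Theorem \ref{thm:cap_one_all}; this is precisely the place where the user-symmetry hypothesis on $\scrP^\col$ is essential. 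An alternative, purely operational argument also works: since any simple-decoder ensemble is a fingerprinting ensemble in the sense of Definition \ref{def:ensemble}, every rate achievable by simple decoding is achievable in the general detect-one sense, so $C_k^\simple \leq C_k^\one = C_k^\joint$.
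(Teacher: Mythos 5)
Your proof is correct. Note, though, that the paper itself gives no argument here: its ``proof'' is simply a citation of \cite{Moulin2008b}, so your write-up is necessarily more explicit than anything in the text. Your information-theoretic route is sound: the chain rule $I(X_\sfK;Y|\bfW)=\sum_i I(X_i;Y\mid X_{1},\ldots,X_{i-1},\bfW)$, the conditional i.i.d.\ structure giving $H(X_i\mid X_{1},\ldots,X_{i-1},\bfW)=H(X_i\mid\bfW)$, and conditioning-reduces-entropy together yield $I(X_i;Y\mid X_{1},\ldots,X_{i-1},\bfW)\geq I(X_i;Y\mid\bfW)$, and user-symmetry identifies all the $I(X_i;Y\mid\bfW)$ with $I(X_1;Y\mid\bfW)$. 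You also correctly spotted and resolved the one genuine trap, namely that the simple-decoder minimum runs over $\scrP^\col_\fair$ while $C_k^{\joint,l}$ in (\ref{eqn:maximin_l}) minimizes over all of $\scrP^\col$; routing through $C_k^{\joint,l}(\scrP^\emb,\scrP^\col_\fair)$ and invoking $C_k^\one(\scrP^\emb,\scrP^\col)=C_k^\one(\scrP^\emb,\scrP^\col_\fair)$ from Theorem \ref{thm:cap_one_all} is exactly the right fix, and $\scrP^\col_\fair$ is indeed compact and user-symmetric so the theorem applies to it. That said, the placement of the statement as a corollary of the simple-decoder achievability theorem suggests the intended argument is the one-line operational one you give at the end: every rate achievable by the simple decoder is achievable under the detect-one criterion, and $C_k^\joint=C_k^\one$ is by definition the supremum of such rates. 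That version is shorter and avoids the per-letter inequality entirely, while your mutual-information argument has the independent merit of quantifying the gap at the level of the payoff functions themselves (consistent with the paper's later observation that the simple game's value is strictly smaller and its optimal attack differs from the interleaving attack).
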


\begin{proof}
See \cite{Moulin2008b}.
\end{proof}

Although we do not have a notion of capacity for the quantity $C^\simple_k$, it will be referred to as the ``simple'' fingerprinting capacity as opposed to the joint fingerprinting discussed in the previous subsection.

\subsection{Two-Person Zero-Sum Games of Fingerprinting Capacity}\label{ssec:game_analysis}

To establish the desired saddle-point property, we first reformulate both the joint and the simple fingerprinting capacities as the respective values of the following two fingerprinting maximin games.

\begin{theorem}\label{thm:single_game}
Assume that Condition \ref{cond:prob} is satisfied and $\scrP^\col$ is compact and user-symmetric. Then
\begin{equation}\label{eqn:maximin_joint}
C_k^\joint(\scrP^\emb, \scrP^\col) = \max_{P_\bfW \in \scrP^\emb} ~ \min_{p_{Y|X_\sfK} \in \scrP^\col} ~ \frac{1}{k} I(X_\sfK; Y|\bfW)
\end{equation}
and
\begin{equation}\label{eqn:maximin_simple}
C_k^\simple(\scrP^\emb, \scrP^\col) = \max_{P_\bfW \in \scrP^\emb} ~ \min_{p_{Y|X_\sfK} \in \scrP^\col_\fair} ~ I(X_1; Y|\bfW).
\end{equation}
\end{theorem}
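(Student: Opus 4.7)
The plan is to derive both identities from the limit representations $C_k^\joint(\scrP^\emb,\scrP^\col) = \lim_l C_k^{\joint,l}(\scrP^\emb,\scrP^\col)$ (from Theorem \ref{thm:cap_one_all}) and $C_k^\simple(\scrP^\emb,\scrP^\col) = \lim_l C_k^{\simple,l}(\scrP^\emb,\scrP^\col)$ by a continuity-plus-approximation argument: the only difference between the $l$-truncated maximin game and the full one is the restriction of $P_\bfW$ to distributions with at most $l$ atoms, and under Condition \ref{cond:prob} this restriction can be removed in the limit.

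The easy ``$\leq$'' direction is immediate: for every $l$, $\scrP^\emb_l \subseteq \scrP^\emb$, so
\begin{equation*}
C_k^{\joint,l}(\scrP^\emb,\scrP^\col) \leq \max_{P_\bfW \in \scrP^\emb} \min_{p_{Y|X_\sfK} \in \scrP^\col} \tfrac{1}{k} I(X_\sfK;Y|\bfW),
\end{equation*}
and the same inequality holds in the simple case with $\scrP^\col_\fair$ and $I(X_1;Y|\bfW)$. Taking $l\to\infty$ yields ``$\leq$'' in both \eqref{eqn:maximin_joint} and \eqref{eqn:maximin_simple}.

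For the reverse direction I would prove two facts. First, that the payoff $J(P_\bfW) \triangleq \min_{p_{Y|X_\sfK} \in \scrP^\col} \tfrac{1}{k} I(X_\sfK;Y|\bfW)$ is continuous on $\scrP^\emb$ endowed with the weak topology. Writing $I(X_\sfK;Y|\bfW) = \int_{\scrW^q} \phi(\bfw,p_{Y|X_\sfK})\,P_\bfW(d\bfw)$, where $\phi(\bfw,p_{Y|X_\sfK}) = I_{p_{X_\sfK|\bfW=\bfw}\,p_{Y|X_\sfK}}(X_\sfK;Y)$ is a bounded, jointly continuous function of $(\bfw,p_{Y|X_\sfK})$ on the compact product $\scrW^q \times \scrP^\col$, the map $(P_\bfW,p_{Y|X_\sfK}) \mapsto I(X_\sfK;Y|\bfW)$ is jointly continuous. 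Compactness of $\scrP^\col$ then gives, via Berge's maximum theorem, continuity of $J$. Second, under Condition \ref{cond:prob} every $P_\bfW \in \scrP^\emb$ can be approximated weakly by a sequence $p^{(l)} \in \scrP^\emb_l$: since $\supp(P_\bfW) \subseteq \supp(\scrP^\emb)$ and $\supp(\scrP^\emb)$ is compact, convex combinations of at most $l$ point masses supported in $\supp(\scrP^\emb)$ are weakly dense in the probability measures on $\supp(\scrP^\emb)$, and each such combination lies in $\scrP^\emb_l$ by Condition \ref{cond:prob}. Pick a maximizer $P^*$ of $J$ on the compact $\scrP^\emb$ and a weak approximation $p^{(l)} \to P^*$ with $p^{(l)} \in \scrP^\emb_l$; then $C_k^{\joint,l} \geq J(p^{(l)}) \to J(P^*)$ by continuity of $J$, which gives the ``$\geq$'' direction after sending $l\to\infty$.

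The main obstacle I expect is keeping the topological bookkeeping clean: verifying that the integrand $\phi$ really is bounded and jointly continuous (no unbounded conditional entropies because $\calY$ and $\calX$ are finite and $\phi$ is polynomial-in-$\bfw$ composed with $-p\log p$), that $\scrP^\col$ is indeed weakly compact in the sense required for Berge's theorem, and that the weak-density approximation in $\scrP^\emb_l$ respects Condition \ref{cond:prob}. Once continuity of $J$ is in place, the simple-decoder identity \eqref{eqn:maximin_simple} follows by the identical argument with the substitutions $X_\sfK \leadsto X_1$, $\tfrac{1}{k}I \leadsto I$, and $\scrP^\col \leadsto \scrP^\col_\fair$; the fair subclass inherits compactness from $\scrP^\col$ because user-symmetry is expressed by a finite family of linear (hence closed) constraints.
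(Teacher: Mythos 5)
Your proposal is correct and follows the same overall strategy as the paper: the ``$\leq$'' direction from the inclusion $\scrP^\emb_l \subseteq \scrP^\emb$, and the ``$\geq$'' direction by weakly approximating an optimal $P_\bfW$ with finitely-supported distributions (which lie in $\scrP^\emb_l$ thanks to Condition \ref{cond:prob}) and passing to the limit using boundedness and continuity of the mutual-information integrand on the finite-alphabet, compact setting. The one substantive difference is where the continuity is applied. The paper fixes the optimal pair $(P_\bfW^{(k)}, p_{Y|X_\sfK}^{(k)})$ and argues only that $\bbE_{p_\bfW^l}[I_k(\bfW, p^{(k)}_{Y|X_\sfK})] \to \bbE_{P_\bfW^{(k)}}[I_k(\bfW, p^{(k)}_{Y|X_\sfK})]$; strictly speaking this controls a quantity that is an \emph{upper} bound on $\min_{p_{Y|X_\sfK}} \bbE_{p_\bfW^l}[I_k(\bfW, p_{Y|X_\sfK})]$, which is what $C_k^{\joint,l}$ actually dominates, so the chain of inequalities as written leaves a small gap. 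Your route—establishing continuity of the value function $J(P_\bfW) = \min_{p_{Y|X_\sfK} \in \scrP^\col} \bbE_{P_\bfW}[I_k(\bfW,p_{Y|X_\sfK})]$ via joint continuity of the payoff on the compact product and Berge's maximum theorem, and then concluding $C_k^{\joint,l} \geq J(p^{(l)}) \to J(P^*)$—accounts for the fact that the minimizing channel varies with $P_\bfW$ and thereby closes that gap. The price is the extra topological bookkeeping you flag (uniform continuity of $\phi$ on $\scrW^q \times \scrP^\col$ to get joint continuity of the expectation in the weak topology), all of which goes through because the alphabets are finite and $\scrP^\col$ is a compact subset of a finite-dimensional simplex product. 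Your remark that $\scrP^\col_\fair$ inherits compactness because user-symmetry is a closed (linear) constraint is also correct and is needed for the simple-decoder case.
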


\begin{proof}
\begin{sloppypar}
Let
\begin{equation}
I^\joint_k(\bfw, p_{Y|X_\sfK}) = \frac{1}{k} I_{p_{Y|X_\sfK}}(X_\sfK; Y|\bfW = \bfw)
\end{equation}
and let 
\begin{equation}
I^\simple_k(\bfw, p_{Y|X_\sfK}) = I_{p_{Y|X_\sfK}}(X_1; Y|\bfW = \bfw).
\end{equation}
Then the payoff functions of (\ref{eqn:maximin_joint}) and (\ref{eqn:maximin_simple}) become $\bbE_{P_\bfW} \left[ I^\joint_k(\bfW, p_{Y|X_\sfK}) \right]$ and $\bbE_{P_\bfW} \left[ I^\simple_k(\bfW, p_{Y|X_\sfK}) \right]$ respectively. Denote also the right-hand sides of (\ref{eqn:maximin_joint}) and (\ref{eqn:maximin_simple}) by $\tilde{C}_k^\joint(\scrP^\emb, \scrP^\col)$ and $\tilde{C}_k^\simple(\scrP^\emb, \scrP^\col)$ respectively. The inequality
\end{sloppypar}
\begin{equation}\label{eqn:CleqCtilde}
C_k(\scrP^\emb, \scrP^\col) \leq \tilde{C}_k(\scrP^\emb, \scrP^\col)
\end{equation}
follows directly from the fact that $\scrP^\emb_l \subseteq \scrP^\emb$ for each $l$. Now let the optimal achieving distributions for (\ref{eqn:maximin_joint}) or (\ref{eqn:maximin_simple}) be $P_\bfW^{(k)}$ and $p_{Y|X_\sfK}^{(k)}$. Then by completeness of $\scrP^\emb$, there exists a sequence of distributions $\langle p_\bfW^l \rangle_{l=1}^\infty$ with $p_\bfW^l \in \scrP^\emb_l$ that converges in distribution to $P_\bfW^{(k)}$. Both the functions $I^\joint_k$ and $I^\simple_k$ are bounded and continuous with respect to $\bfw$. By \cite[p.249 Theorem 1]{Feller1968} we have
\begin{equation}
\lim_{l \rightarrow \infty} \bbE_{p_\bfW^l} \left[ I_k(\bfW, p^{(k)}_{Y|X_\sfK}) \right] = \bbE_{P^{(k)}_\bfW} \left[ I_k(\bfW, p^{(k)}_{Y|X_\sfK}) \right]
\end{equation}
and thus
\begin{equation}\label{eqn:CgeqCtilde}
C_k(\scrP^\emb, \scrP^\col) \geq \lim_{l \rightarrow \infty} \bbE_{p_\bfW^l} \left[ I_k(\bfW, p^{(k)}_{Y|X_\sfK}) \right] = \tilde{C}_k(\scrP^\emb, \scrP^\col).
\end{equation}
Combining (\ref{eqn:CleqCtilde}) and (\ref{eqn:CgeqCtilde}) yields (\ref{eqn:maximin_joint}) and (\ref{eqn:maximin_simple}).
\end{proof}

Theorem \ref{thm:single_game} shows that the joint and simple capacities are the maximin values of two single two-person zero-sum games. Note that the theorem only specifies the capacities when the embedding class $\scrP^\emb$ satisfies Condition \ref{cond:prob}. With slight modification of the proofs in \cite{Moulin2008b}, it can be shown that (\ref{eqn:maximin_joint}) and (\ref{eqn:maximin_simple}) still hold for any compact $\scrP^\emb$. Furthermore, we can show that the maximin and minimax values of the games are equal in general, and there are always saddle-point strategies for both players of the games.

We define the \changed{minimax}\removed{following} values associated \added{with} the above games:

\begin{definition}
The minimax value of the joint fingerprinting game is defined by
\begin{equation}\label{eqn:minimax_joint_rand}
\overline{C}_k^\joint(\scrP^\emb, \scrP^\col) = \min_{p_{Y|X_\sfK} \in \scrP^\col} ~ \max_{P_\bfW \in \scrP^\emb} ~ \bbE_{P_\bfW} \left[ I^\joint_k(\bfW, p_{Y|X_\sfK}) \right].
\end{equation}
\end{definition}

\begin{definition}
The minimax value of the simple fingerprinting game is defined by
\begin{equation}\label{eqn:minimax_simple_rand}
\overline{C}_k^\simple(\scrP^\emb, \scrP^\col) = \min_{p_{Y|X_\sfK} \in \scrP^\col_\fair} ~ \max_{P_\bfW \in \scrP^\emb} ~ \bbE_{P_\bfW} \left[ I^\simple_k(\bfW, p_{Y|X_\sfK}) \right].
\end{equation}
\end{definition}

When Condition \ref{cond:prob} is satisfied (for example, when $\scrP^\emb = \scrP_\bfW$), the minimax games can be simplified as the following:

\begin{lemma}\label{lem:minimax}
Assume that Condition \ref{cond:prob} is satisfied. Then the minimax values of the joint and simple fingerprinting games can be respectively written as
\begin{equation}\label{eqn:minimax_joint}
\overline{C}_k^\joint(\scrP^\emb, \scrP^\col) = \min_{p_{Y|X_\sfK} \in \scrP^\col} ~ \max_{\bfw \in \supp(\scrP^\emb)} ~ I^\joint_k(\bfw, p_{Y|X_\sfK})
\end{equation}
and
\begin{equation}\label{eqn:minimax_simple}
\overline{C}_k^\simple(\scrP^\emb, \scrP^\col) = \min_{p_{Y|X_\sfK} \in \scrP^\col_\fair} ~ \max_{\bfw \in \supp(\scrP^\emb)} ~ I^\simple_k(\bfw, p_{Y|X_\sfK}).
\end{equation}
\end{lemma}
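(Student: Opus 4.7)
The plan is to fix $p_{Y|X_\sfK}$ and show that for each fixed channel, the inner maximization over distributions $P_\bfW \in \scrP^\emb$ reduces to maximizing $I^\joint_k(\bfw, p_{Y|X_\sfK})$ (respectively $I^\simple_k$) over points $\bfw \in \supp(\scrP^\emb)$. Once this pointwise identity is established, taking $\min$ over $p_{Y|X_\sfK} \in \scrP^\col$ (or $\scrP^\col_\fair$) on both sides delivers (\ref{eqn:minimax_joint}) and (\ref{eqn:minimax_simple}).

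For the reduction, fix $p_{Y|X_\sfK}$ and abbreviate $f(\bfw) \triangleq I^\joint_k(\bfw, p_{Y|X_\sfK})$ (the simple case is identical with $I^\simple_k$). Since $X_\sfK | \bfW = \bfw$ is the product categorical $\prod_i w_{x_i}$ and $Y|X_\sfK$ is held fixed, the joint pmf of $(X_\sfK, Y)$ is a polynomial in $\bfw$, so $f$ is continuous on the compact simplex $\scrW^q$ and hence on the closed (and therefore compact) set $\supp(\scrP^\emb) \subseteq \scrW^q$. Consequently $f$ attains its maximum on $\supp(\scrP^\emb)$ at some point $\bfw^\star$.

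For the upper bound, observe that for any $P_\bfW \in \scrP^\emb$ we have $\supp(P_\bfW) \subseteq \supp(\scrP^\emb)$, and therefore
\begin{equation}
\bbE_{P_\bfW}[f(\bfW)] \;\leq\; \max_{\bfw \in \supp(\scrP^\emb)} f(\bfw).
\end{equation}
For the matching lower bound, Condition \ref{cond:prob} is the crucial ingredient: because $\scrP^\emb$ contains \emph{every} probability distribution supported on $\supp(\scrP^\emb)$, it contains in particular the Dirac mass $\delta_{\bfw^\star}$. Evaluating the expectation at this distribution yields $\bbE_{\delta_{\bfw^\star}}[f(\bfW)] = f(\bfw^\star)$, which matches the upper bound. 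Combining,
\begin{equation}
\max_{P_\bfW \in \scrP^\emb} \bbE_{P_\bfW}\!\left[I^\joint_k(\bfW, p_{Y|X_\sfK})\right] \;=\; \max_{\bfw \in \supp(\scrP^\emb)} I^\joint_k(\bfw, p_{Y|X_\sfK}).
\end{equation}
Substituting this identity into (\ref{eqn:minimax_joint_rand}) proves (\ref{eqn:minimax_joint}); the same argument with $I^\simple_k$ and $\scrP^\col_\fair$ proves (\ref{eqn:minimax_simple}).

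The main (though modest) obstacle is justifying that the Dirac-mass replacement is legitimate: it hinges entirely on Condition \ref{cond:prob}, without which $\delta_{\bfw^\star}$ could fail to lie in $\scrP^\emb$ and the lower bound would break. Continuity of $f$ in $\bfw$ and closedness of $\supp(\scrP^\emb)$ are routine but must be invoked to guarantee that $\bfw^\star$ exists.
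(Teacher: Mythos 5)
Your proposal is correct and is essentially the paper's own argument: the paper disposes of the lemma with the single observation that ``randomization is no longer necessary'' in the inner maximization, and your write-up simply makes that explicit (the expectation $\bbE_{P_\bfW}[I_k(\bfW,p_{Y|X_\sfK})]$ is upper-bounded by the pointwise maximum over $\supp(\scrP^\emb)$, and Condition \ref{cond:prob} guarantees the Dirac mass at a maximizer belongs to $\scrP^\emb$, so the bound is attained). No substantive difference in approach.
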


\begin{proof}
Note that randomization is no longer necessary for the minimax games of (\ref{eqn:minimax_joint_rand}) and (\ref{eqn:minimax_simple_rand}), so they have the same respective values as (\ref{eqn:minimax_joint}) and (\ref{eqn:minimax_simple}).
\end{proof}

We now present the main saddle-point property of the fingerprinting games. The results owe to the convexity of the payoff function with respect to the minimizer's strategy. Such games are generally called \removed{the class of} convex games \cite[$\S2.5$]{Petrosjan1996}.

\begin{theorem}\label{thm:minimax}
For compact $\scrP^\emb$, compact and user-symmetric $\scrP^\col$, and for both the joint and simple games, \changed{$\overline{C}_k(\scrP^\emb, \scrP^\col) = C_k(\scrP^\emb, \scrP^\col)$}\removed{$\overline{C}_k(\scrP^\emb, \scrP^\col) = \underline{C}_k(\scrP^\emb, \scrP^\col)$}. Suppose further that $\scrP^\emb$ and $\scrP^\col$ are symbol-symmetric, then the first argument $\scrP^\emb$ can be replaced by $\scrP^\emb_\sym$ and/or the second argument $\scrP^\col$ can be replaced by $\scrP^\col_\fair$, $\scrP^\col_\sym$, or $\scrP^\col_{\fair,\sym}$ without changing the minimax or the maximin value. For all these games, the minimizer has an optimal strategy $p_{Y|X_\sfK}^{(k)} \in \scrP^\col_{\fair,\sym}$ while the maximizer has an optimal strategy $P_\bfW^{(k)} \in \scrP^\emb_\sym$. In particular, when Condition \ref{cond:prob} is satisfied, the maximizing strategy $p_\bfW^{(k)} \in \scrP^\emb_\sym$ has a finite spectrum. The values of all these games equal the (joint or simple) fingerprinting capacity $C_k(\scrP^\emb, \scrP^\col)$.
\end{theorem}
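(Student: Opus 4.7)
The plan is to cast both games as two-person zero-sum games with convex-concave payoff and apply a Sion-type minimax theorem; the symmetry reductions and the finite-support assertion then follow as refinements. The key structural observation is that the payoff $P_\bfW \mapsto \bbE_{P_\bfW}[I_k(\bfW, p_{Y|X_\sfK})]$ is linear (hence concave) and weakly continuous in $P_\bfW$, since $I_k(\cdot, p_{Y|X_\sfK})$ is a bounded continuous function on the compact simplex $\scrW^q$. It is also convex and continuous in $p_{Y|X_\sfK}$, because mutual information is convex in the channel for a fixed input distribution. Since $\scrP^\emb$ is compact in the weak topology (by Prokhorov, $\scrW^q$ being compact) and $\scrP^\col$ is compact by assumption, Sion's minimax theorem yields $\overline{C}_k(\scrP^\emb, \scrP^\col) = C_k(\scrP^\emb, \scrP^\col)$ and the existence of a saddle point $(P_\bfW^{(k)}, p_{Y|X_\sfK}^{(k)})$ for both the joint and the simple games.

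I would then obtain the symmetric-class reductions by averaging. For symbol-symmetry, given any saddle point I form $\bar{P}_\bfW = \frac{1}{q!}\sum_\pi (P_\bfW^{(k)})^\pi$ and $\bar{p}_{Y|X_\sfK} = \frac{1}{q!}\sum_\pi (p_{Y|X_\sfK}^{(k)})^\pi$; these lie in $\scrP^\emb_\sym$ and $\scrP^\col_\sym$ respectively whenever the parent classes are symbol-symmetric. Mutual information is invariant under a joint relabeling applied to both $\bfW$ and the channel, so by concavity in $P_\bfW$ the averaged embedding $\bar{P}_\bfW$ achieves at least the saddle value, and by convexity in $p_{Y|X_\sfK}$ the averaged attack $\bar{p}_{Y|X_\sfK}$ achieves no worse than the saddle value. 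For the user-symmetry reduction in the joint game, $I^\joint_k = \frac{1}{k} I(X_\sfK; Y|\bfW)$ is invariant under any permutation $\sigma$ of the colluders because the inputs $X_1, \ldots, X_k$ are i.i.d.\ conditionally on $\bfW$; combining this invariance with convexity of mutual information in the channel shows $\min_{\scrP^\col_\fair} = \min_{\scrP^\col}$. The simple game is defined with the fair restriction already in place, so only the symbol-symmetric reduction is required there.

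For the finite-spectrum claim under Condition~\ref{cond:prob}, I would freeze the saddle-point channel $p_{Y|X_\sfK}^{(k)}$ and note that the maximizer's problem reduces to choosing $P_\bfW$ on the $(q-1)$-dimensional simplex $\scrW^q$ to maximize $\bbE_{P_\bfW}[f(\bfW)]$, with $f(\bfw) = I_k(\bfw, p_{Y|X_\sfK}^{(k)})$, subject to the equilibrium requirement that $p_{Y|X_\sfK}^{(k)}$ still minimizes $p \mapsto \bbE_{P_\bfW}[I_k(\bfW, p)]$ over $\scrP^\col$. By a Carathéodory-type argument applied to the image of $\supp(\scrP^\emb)$ under the finite-dimensional moment map induced by $f$ and the finitely many active KKT constraints of the minimizer's program, one extracts an optimal $p_\bfW^{(k)}$ supported on finitely many points of $\supp(\scrP^\emb)$. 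The identification of the game values with the joint and simple fingerprinting capacities then follows at once by combining the saddle-point identity $\overline{C}_k = C_k$ with Theorem~\ref{thm:single_game}.

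The main technical obstacle is likely to be the finite-spectrum step: precisely identifying which equilibrium constraints are active at the optimum so that their effective dimensionality can be bounded, and then extracting a finitely supported saddle strategy via Carathéodory, is substantially more delicate than the minimax part. Verifying the weak continuity in $P_\bfW$ needed for Sion's theorem is routine given that $I_k(\cdot, p_{Y|X_\sfK})$ is bounded and continuous on $\scrW^q$, but care is needed to check continuity simultaneously in both arguments so that the hypotheses of the minimax theorem are genuinely met.
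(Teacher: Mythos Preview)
Your approach is essentially the paper's: linearity in $P_\bfW$, convexity in $p_{Y|X_\sfK}$, Sion's minimax theorem for the saddle point, and permutation-averaging for the symmetry reductions. Two small remarks are worth making. First, for the simple game the assertion ``mutual information is convex in the channel'' is not quite immediate, since $I^\simple_k(\bfw, p_{Y|X_\sfK}) = I(X_1;Y|\bfW=\bfw)$ and the channel being varied is $p_{Y|X_\sfK}$, not $p_{Y|X_1}$; the paper handles this explicitly by writing $I^\simple_k$ as an average of relative entropies and invoking convexity of $D(\cdot\parallel\cdot)$ (equivalently, you can note that $p_{Y|X_\sfK}\mapsto p_{Y|X_1,\bfW}$ is affine). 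Second, the finite-spectrum step is considerably simpler than your KKT/Carath\'eodory outline suggests: under Condition~\ref{cond:prob} the maximizer ranges over all distributions on a fixed set and the payoff is convex in the minimizer's finite-dimensional strategy, so the game is a \emph{convex game} in the sense of \cite[\S2.5]{Petrosjan1996}, and the finite-support conclusion (with a bound tied to the dimension of $\scrP^\col$) is a standard theorem there --- no equilibrium-constraint bookkeeping is needed.
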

\begin{proof}
We show that the functions $I^\joint_k$ and $I^\simple_k$ are convex functions of $p_{Y|X_\sfK}$ for fixed $\bfw$. The convexity of $I^\joint_k$ is shown in \cite[Theorem 2.7.4]{Cover2006}. To show the convexity of $I^\simple_k(\bfw,p_{Y|X_\sfK})$, we fix $\bfw$ and consider two different conditional distributions $p_{Y|X_\sfK}^1$ and $p_{Y|X_\sfK}^2$. Note that
\begin{eqnarray}
I^\simple_k(\bfw,p_{Y|X_\sfK}) &=& I(X_1;Y|\bfW = \bfw)\nonumber\\
&=& \sum_{x,y} p_{X_1|\bfW}(x|\bfw) p_{Y|X_1 \bfW}(y|x,\bfw) \log \frac{p_{Y|X_1 \bfW}(y|x,\bfw)}{p_{Y|\bfW}(y|\bfw)}\nonumber\\
&=& \sum_{x} p_{X_1|\bfW}(x|\bfw) D( p_{Y|X_1 \bfW} \parallel p_{Y|\bfW} | \bfW = \bfw).
\end{eqnarray}
For any $0 \leq \lambda \leq 1$, we have
\begin{eqnarray}
\lefteqn{\lambda I^\simple(\bfw,p_{Y|X_\sfK}^1) + (1-\lambda) I^\simple(\bfw,p_{Y|X_\sfK}^2)}\nonumber\\
&=& \sum_{x} p_{X_1|\bfW}(x|\bfw) \left[ \lambda D( p^1_{Y|X_1 \bfW} \parallel p^1_{Y|\bfW} | \bfW = \bfw) + (1-\lambda) D( p^2_{Y|X_1 \bfW} \parallel p^2_{Y|\bfW} | \bfW = \bfw)\right] \nonumber\\
&\geq& \sum_{x} p_{X_1|\bfW}(x|\bfw) D(p^\lambda_{Y|X_1 \bfW} \parallel p^\lambda_{Y|\bfW} | \bfW = \bfw)\nonumber\\
&=& I^\simple(\bfw,p_{Y|X_\sfK}^\lambda)
\end{eqnarray}
where $p_{Y|X_\sfK}^\lambda = \lambda p_{Y|X_\sfK}^1 + (1-\lambda) p_{Y|X_\sfK}^2 \in \scrP^\col$ by compactness. The inequality follows from the convexity of relative entropy \cite[Theorem 2.7.2]{Cover2006}. Hence $I^\simple_k$ is convex in $p_{Y|X_\sfK}$.

Now since $I_k(\bfw,p_{Y|X_\sfK})$ is a convex function of $p_{Y|X_\sfK}$ for fixed $\bfw \in \scrW^q$, $\bbE_{P_\bfW} [I_k(\bfW,p_{Y|X_\sfK})]$ is also a convex function of $p_{Y|X_\sfK}$ for fixed $P_\bfW \in \scrP_\bfW$. On the other hand, $\bbE_{P_\bfW} [I_k(\bfW,p_{Y|X_\sfK})]$ is a linear function of $P_\bfW$ for fixed $p_{Y|X_\sfK}$. By the minimax theorem \cite{Sion1958}, the game admits a saddle-point solution.

If $\scrP^\emb$ is symbol-symmetric and let $P_\bfW$ be a minimizing saddle-point strategy, then by symbol-symmetry each $P_\bfW^\pi \in \scrP^\emb$ is a minimizing saddle-point strategy for any permutation $\pi$ of $\calX$. The symbol-permutation averaged distribution
\begin{equation}
\overline{P}_\bfW = \frac{1}{q!} \sum_\pi P_\bfW^\pi
\end{equation}
is also a minimizing saddle-point strategy and is symbol-symmetric by construction. Similarly if $\scrP^\col$ is user-symmetric and symbol-symmetric, we can construct a maximizing saddle-point strategy that is both user-symmetric and symbol-symmetric.

Finally if $\scrP^\emb$ is the class of all probability distributions on $\supp(\scrP^\emb)$ (Condition \ref{cond:prob}), the game becomes a so-called convex game whose minimizing strategy has a finite spectrum (see \cite[$\S2.5$]{Petrosjan1996}).
\end{proof}
  % new
\section{Fingerprinting capacity for the binary alphabet}\label{sec:bin}

In the following two sections we study intensively the joint and simple fingerprinting games for the binary alphabet, i.e. $\calX = \calY = \{0,1\}$.\footnote{In the case of the binary alphabet, the four variations of the marking assumption discussed in \cite{vSkori'c2008} are equivalent in terms of capacity. Hence for simplicity we assume $\calX = \calY$.} Tight upper and lower bounds on capacities are provided under several different setups. \removed{Using asymptotic methods we show the optimality of the arcsine distribution and the interleaving attack for large coalitions.}

\subsection{Game Definition}\label{ssec:bin_def}

The mutual information games \changed{for}\removed{of} joint and simple decoder in the binary case can be simplified as follows:\looseness=-1

\begin{enumerate}
\item {\bf Fingerprinting Codes}

The auxiliary random vector $\bfW$ now has only one degree of freedom, and we redefine it as $W \in [0,1]$. $P_W$ denotes its distribution and $p_{X |W} \sim \Bernoulli(W)$.

Suppose $\scrP^\emb$ is compact and symbol-symmetric. Then by Theorem \ref{thm:minimax}, it suffices to consider symbol-symmetric $P_W$, which in the binary case means that the distribution of $W$ is symmetric about $1/2$, i.e.,\looseness=-1
\begin{equation}\label{eqn:bin_embsym}
Pr(W \leq w) = Pr(W \geq 1-w), \quad w \in [0,1].
\end{equation}
In the numerical results in Sec. \ref{ssec:Beta}, we will consider a subset of the family of beta distributions, which is a family of continuous probability distributions defined on $(0,1)$:
\begin{equation}\label{betapdf}
f^\theta_W(w) = \frac{1}{B(\theta,\theta)} \left[ w(1-w) \right]^{\theta-1}
\end{equation}
where the beta function, $B(\theta,\theta) \triangleq \int_0^1 \left[t(1-t)\right]^{\theta-1} dt$, appears as a normalization constant, and the parameter $\theta > 0$. The arcsine distribution, which is a special case of the beta distribution with $\theta = 1/2$, has pdf
\begin{equation}\label{eqn:arcsinepdf}
f^*_W(w) = \frac{1}{\pi \sqrt{w(1-w)}}
\end{equation}
on $(0,1)$. The arcsine distribution was first used in generating randomized fingerprinting codes by Tardos \cite{Tardos2003} and is sometimes referred to as the ``Tardos distribution'' in the literature.

\item {\bf Collusion Channel}

Suppose $\scrP^\col$ is compact and user- and symbol-symmetric. Then by Theorem \ref{thm:minimax} it suffices to consider user-symmetric attacks. Let $Z \triangleq \sum_{i = 1}^k X_i \in \left\{0, 1, \ldots, k\right\}$, which is the number of 1's in $X_\sfK$. 
User-symmetry makes $Z$ a sufficient statistic in producing $Y$. If we let $\bfp = (p_0, \ldots, p_k)'$ where $p_z \triangleq p_{Y|Z}(1|z), z = 0, \ldots, k$, then the collusion channel can be completely characterized by $\bfp$. The marking assumption enforces that 
\begin{equation}\label{eqn:bin_marking}
p_0 = 0 \textrm{ and } p_k = 1.
\end{equation}
On the other hand, symbol-symmetry allows us to consider $\bfp$ with
\begin{equation}\label{eqn:bin_colsym}
p_z = 1-p_{k-z}, \quad z=0, \ldots, k.
\end{equation}

The \emph{interleaving attack} $\bfp^*$ (a.k.a. ``uniform channel'' in \cite{Anthapadmanabhan2008} and ``blind colluders'' in \cite{Furon2008}) defined by\looseness=-1
\begin{equation}\label{eqn:interleaving}
p^*_z = \frac{z}{k}, \quad z = 0, \ldots, k
\end{equation}
is frequently adopted to model the coalition's \changed{strategy}\removed{behavior} and can be easily implemented by drawing each $y_j$ randomly from $x_{1,j}, \ldots, x_{k,j}$ at each position $j$. One can verify that it satisfies the marking assumption (\ref{eqn:bin_marking}) and is both user- and symbol-symmetric (\ref{eqn:bin_embsym}). We will further discuss the performance of this attack in Sec. \ref{ssec:bin_bounds} and Sec. \ref{sec:bin_asym}.

\item {\bf Payoff Functions}

Let $\bfal(w) = \left(\alpha_0(w), \ldots, \alpha_k(w)\right)'$ and similarily for $\bfal^1(w)$ and $\bfal^0(w)$ where
\begin{equation}\label{eqn:alpha}
\alpha_z(w) \triangleq p_{Z|W}(z|w) = \binom{k}{z} w^z (1-w)^{k-z}
\end{equation}
is the binomial law with parameter $w$ and $k$ trials, and 
\begin{equation}\label{eqn:alpha1}
\alpha^1_z(w) \triangleq p_{Z|X_1 W}(z|1,w) = \left\{ \begin{array}{ll}\binom{k-1}{z-1} w^{z-1} (1-w)^{k-z}, &1 \leq z \leq k\\ 0, &z = 0 \end{array} \right.
\end{equation}
and
\begin{equation}\label{eqn:alpha0}
\alpha^0_z(w) \triangleq p_{Z|X_1 W}(z|0,w) = \left\{ \begin{array}{ll}\binom{k-1}{z} w^{z} (1-w)^{k-z-1}, &0 \leq z \leq k-1\\ 0, &z = k \end{array} \right.
\end{equation}
are the (shifted for $\bfal^1(w)$) binomial laws with parameter $w$ and $k-1$ trials.

Recall that $W \rightarrow X_\sfK \rightarrow Z \rightarrow Y$ forms a Markov chain. We have
\begin{eqnarray}
p_{Y|X_1 W}(1|x,w) &=& \sum_{z=0}^k p_{Z|X_1 W}(z|x,w) p_{Y|Z}(1|z) \nonumber\\
&=& \sum_{z=0}^k \alpha^x_z(w) p_z = {\bfal^{x}}'\bfp \label{eqn:yx1w}
\end{eqnarray}
for $x = 0,1$. The payoff function \changed{for}\removed{of} the joint fingerprinting game is then
\begin{eqnarray}
I^\joint_k(w,\bfp) &=& \frac{1}{k} I(X_\sfK; Y|W = w) \nonumber\\
&=& \frac{1}{k} I(Z; Y|W = w) \nonumber \\
&=& \frac{1}{k} \left[ H(Y|W = w) - H(Y|Z, W = w) \right]\nonumber\\
&=& \frac{1}{k} \left[ h\left(\sum_{z=0}^{k} \alpha_z(w) p_z \right) - \sum_{z=0}^{k} \alpha_z(w) h(p_z) \right] \nonumber\\
&=& \frac{1}{k} \left[h(\bfal'\bfp) - \bfal' h(\bfp)\right].\label{eqn:I_joint_1}
\end{eqnarray}

Another representation of $I^\joint_k$ is
\begin{eqnarray}
I^\joint_k(w,\bfp) &=& \frac{1}{k} D(p_{ZY|W} \parallel p_{Z|W}p_{Y|W} | W = w) \nonumber\\
&=& \frac{1}{k} \sum_{z=0}^{k} \sum_{y=0}^{1} p_{Z|W}(z|w) p_{Y|Z}(y|z) \log \frac{p_{Y|Z}(y|z)}{p_{Y|W}(y|w)} \nonumber\\
&=& \frac{1}{k} \sum_{z=0}^{k} \alpha_z(w) \left[ p_z \log \frac{p_z}{\bfal'\bfp} + (1-p_z) \log \frac{1-p_z}{1-\bfal'\bfp} \right] \nonumber\\
&=& \frac{1}{k} \sum_{z=0}^{k} \alpha_z(w)~d(p_z \parallel \bfal'\bfp).\label{eqn:I_joint_2}
\end{eqnarray}

For the simple fingerprinting game, we have
\begin{eqnarray}\label{eqn:I^simple}
I^\simple_k(w,\bfp) &=& I(X_1; Y|W = w) \nonumber\\
&=& D(p_{X_1 Y|W} \parallel p_{X_1|W}p_{Y|W} | W = w) \nonumber\\
&=& \sum_{x=0}^{1} \sum_{y=0}^{1} p_{X_1|W}(x|w) p_{Y|X_1 W}(y|x,w) \log \frac{p_{Y|X_1 W}(y|x,w)}{p_{Y|W}(y|w)} \nonumber\\
&=& w D(p_{Y|X_1=1, W} \parallel p_{Y|W}) + (1-w) D(p_{Y|X_1=0, W} \parallel p_{Y|W}) \nonumber \\
&\stackrel{(a)}{=}& wd({\bfal^1}'\bfp\parallel\bfal'\bfp)+(1-w)d({\bfal^0}'\bfp\parallel\bfal'\bfp)\label{eqn:I_simple}
\end{eqnarray}
where (a) follows from (\ref{eqn:yx1w}).

\item {\bf Fingerprinting Games}

The fingerprinting games for the binary alphabet under the marking assumption can now be written as
\begin{eqnarray}
C_k(\scrP^\emb,\scrP^\col) &=& \max_{P_W} ~ \min_{\bfp} ~ \bbE_{P_W}\left[I_k(W,\bfp)\right]\label{eqn:bin_maximin}\\
&=& \min_{\bfp} ~ \max_{P_W} ~ \bbE_{P_W}\left[I_k(W,\bfp)\right]\label{eqn:bin_minimax_rand}
\end{eqnarray}
where the maximization is subject to $P_W \in \scrP^\emb$ and the symbol-symmetry condition (\ref{eqn:bin_embsym}) while the minimization is subject to $\bfp \in \scrP^\col$ and the symbol-symmetry condition (\ref{eqn:bin_colsym}). The maximizing and minimizing strategies are denoted by $P_W^{(k)}$ and $\bfp^{(k)}$ respectively. If $\scrP^\emb$ satisfies Condition \ref{cond:prob}, then by Lemma \ref{lem:minimax} we have
\begin{equation}\label{eqn:bin_minimax}
C_k(\scrP^\emb,\scrP^\col) = \min_{\bfp} ~ \max_{w} ~ I_k(w,\bfp)
\end{equation}
where the maximization is subject to $w \in \supp(\scrP^\emb)$.

\end{enumerate}

\subsection{Analysis of the Convex Games}\label{ssec:bin_analysis}

We consider the following three cases:

\begin{enumerate}
\item {\bf Colluders' Strategy is Fixed}

$\scrP^\col = \{ \bfp \}$ in this case. For general $\scrP^\emb$ the game is still an infinite-dimensional maximization problem. However when Condition \ref{cond:prob} is satisfied, it reduces to one-dimensional by (\ref{eqn:bin_minimax}) and a simple line search gives us the capacity under collusion channel $\bfp$. Note that for any $\bfp \in \scrP_\marking$, $C_k(\scrP^\emb,\bfp)$ is an upper bound on $C_k(\scrP^\emb,\scrP_\marking)$.

\item {\bf Fingerprinting Embedder's Strategy is Fixed}

$\scrP^\emb = \{ P_W \}$ in this case. The game reduces to a $\Theta(k)$-dimentional minimization problem. Since the payoff function $\bbE_{P_W}\left[I_k(W,\bfp)\right]$ is convex in $\bfp$, we use the conditional gradient method to solve the constrained convex optimization problem (\ref{eqn:bin_maximin}) (see \cite{Bertsekas1999}). For the joint fingerprinting game, Furon and Perez-Freire \cite{Furon2009a} proposed a Blahut-Arimoto algorithm which, however, cannot be applied to the simple fingerprinting game. Note that for any $P_W \in \scrP_W$, $C_k(P_W,\scrP^\emb)$ is a lower bound on $C_k(\scrP_W,\scrP^\emb)$.\looseness=-1

\item {\bf Fingerprinting Capacities Under the Marking Assumption}

We consider specifically $\scrP^\emb = \scrP_W$ and $\scrP^\col = \scrP_\marking$. Solving the maximin game of (\ref{eqn:bin_maximin}) or the minimax game of (\ref{eqn:bin_minimax}) is much more difficult than solving the above maximization or minimization problems. In particular, the alternating maximization and minimization algorithm generally diverges.

Owing to the existence of a saddle-point solution, $\bfp^{(k)}$ and $p^{(k)}_W$ (note that it is a pmf by Theorem \ref{thm:minimax}) must satisfy the following:
\begin{enumerate}
\item When $\bfp = \bfp^{(k)}$ is fixed, $I(w,\bfp^{(k)})$ is a differentiable function over the unit interval. The support $\supp\left(p_W^{(k)}\right)$ of $p_W^{(k)}$ can only take values at the maximizers of $I(w,\bfp^{(k)})$ \cite[$\S2.5$]{Petrosjan1996}. Hence we have
\begin{equation} \label{eqn:dIdw}
\left\{\begin{array}{l}
I(w,\bfp^{(k)}) = C_k(\scrP_W,\scrP_\marking)\\
\frac{\partial}{\partial w} I(w,\bfp^{(k)}) = 0\\
\frac{\partial^2}{\partial w^2} I(w,\bfp^{(k)}) < 0
\end{array}\right., \quad \forall w \in \supp\left(p_W^{(k)}\right).
\end{equation}

\item When $p_W = p^{(k)}_W$ is fixed and the constraint (\ref{eqn:bin_colsym}) is imposed, we have
\begin{equation} \label{eqn:dIdp}
\bbE_{p^{(k)}_W}\left[\frac{\partial}{\partial
p_z}I(W,\bfp^{(k)})\right] = 0, \quad z = 1,\ldots, \left\lfloor
\frac{k-1}{2} \right\rfloor.
\end{equation}
\end{enumerate}

By the convexity in $\bfp$ of the payoff function, we have $\left| \supp\left(p_W^{(k)}\right) \right| \leq \left\lfloor\frac{k+1}{2}\right\rfloor$ (see \cite[$\S2.5$]{Petrosjan1996}). With a fixed spectrum cardinality, we can obtain candidate capacity-achieving strategies $p^{(k)}_W$ and $\bfp^{(k)}$ by solving (\ref{eqn:dIdw}) and (\ref{eqn:dIdp}), and
then verify whether those candidate distributions are optimal by examining
the second partial derivatives. Once $\bfp^{(k)}$ and $p^{(k)}_W$ are
found, we can evaluate $C_k(\scrP_W,\scrP_\marking)$ by substituting them into (\ref{eqn:bin_maximin}).

\end{enumerate}

Numerical solutions to the joint and simple fingerprinting games are shown in Fig. \ref{fig:C_plot}-\ref{fig:p_plot_TS_mark}. \added{Observe that capacities for both games (Fig. \ref{fig:C_plot}(a)-(b)), the optimal distributions ${P_W^{(k)}}$ for both games (Fig. \ref{fig:cdf_plot}(a)-(b)), and the optimal attacks $\bfp^{(k)}$ for the joint fingerprinting game (Fig. \ref{fig:p_plot_TS_mark}(a)) all seem to converge as $k$ grows. The optimal attacks $\bfp^{(k)}$ for the simple fingerprinting game (Fig. \ref{fig:p_plot_TS_mark}(b)) however exhibit some wild oscillations in both amplitude and frequency as $k$ grows. We will study the asymptotics of the joint fingerprinting game for large $k$ in the next section.}

\begin{figure*}[t]
\begin{minipage}{.48\textwidth}
  \centering
  \includegraphics[width=.85\textwidth]{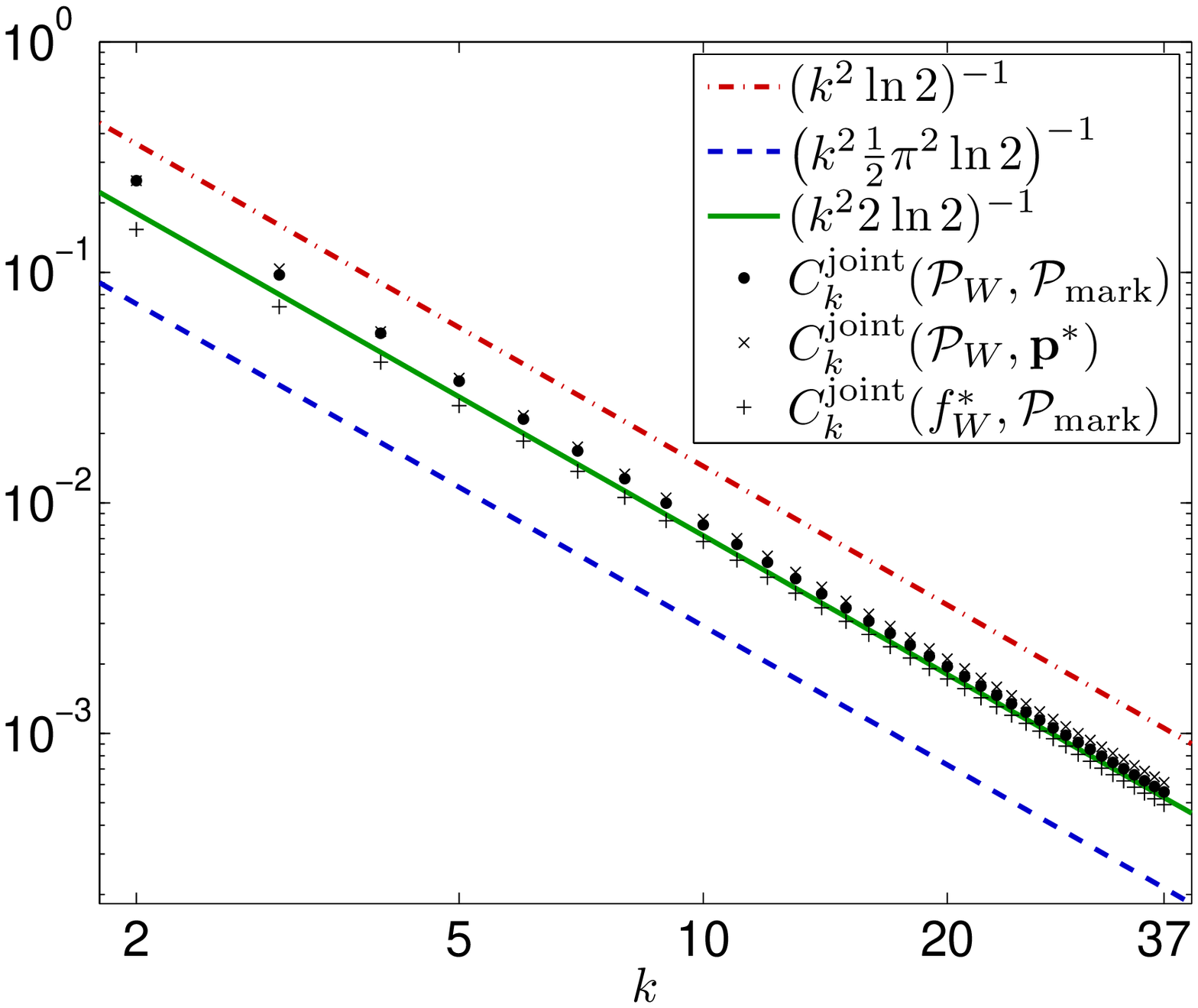}
  \centerline{\added{(a) Joint}}
\end{minipage}
\hfill
\begin{minipage}{.48\textwidth}
  \centering
  \includegraphics[width=.85\textwidth]{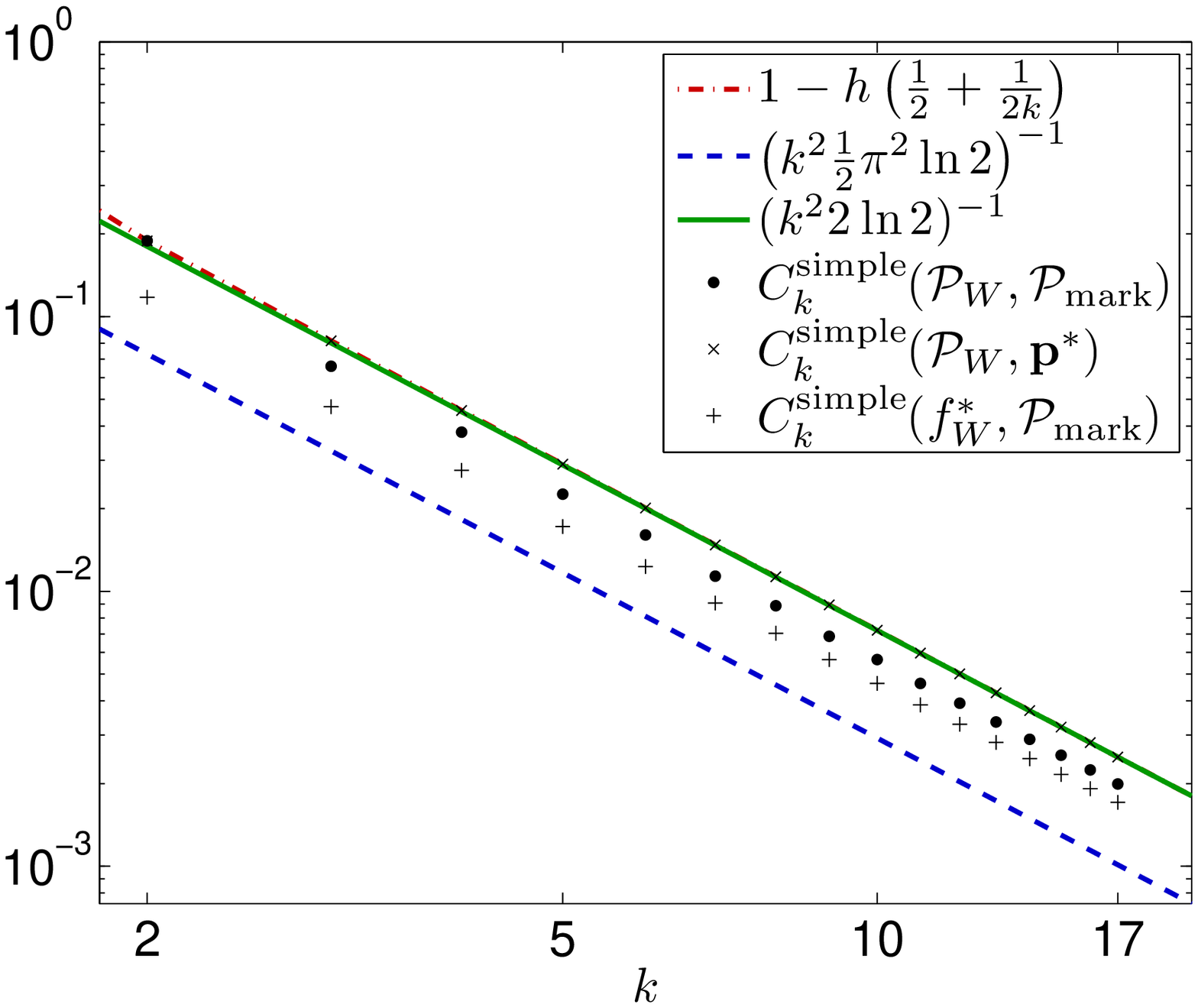}
  \centerline{\added{(b) Simple}}
\end{minipage}

\caption{Capacities $C_k(\scrP_W,\scrP_\marking)$, $C_k(\scrP_W,\bfp^*)$, $C_k(f^*_W,\scrP_\marking)$, and upper and lower bounds of the \added{(a)} joint and \added{(b)} simple fingerprinting games}
\label{fig:C_plot}
\end{figure*}

\begin{figure*}[t]
\begin{minipage}{.48\textwidth}
  \centering
  \includegraphics[width=.85\textwidth]{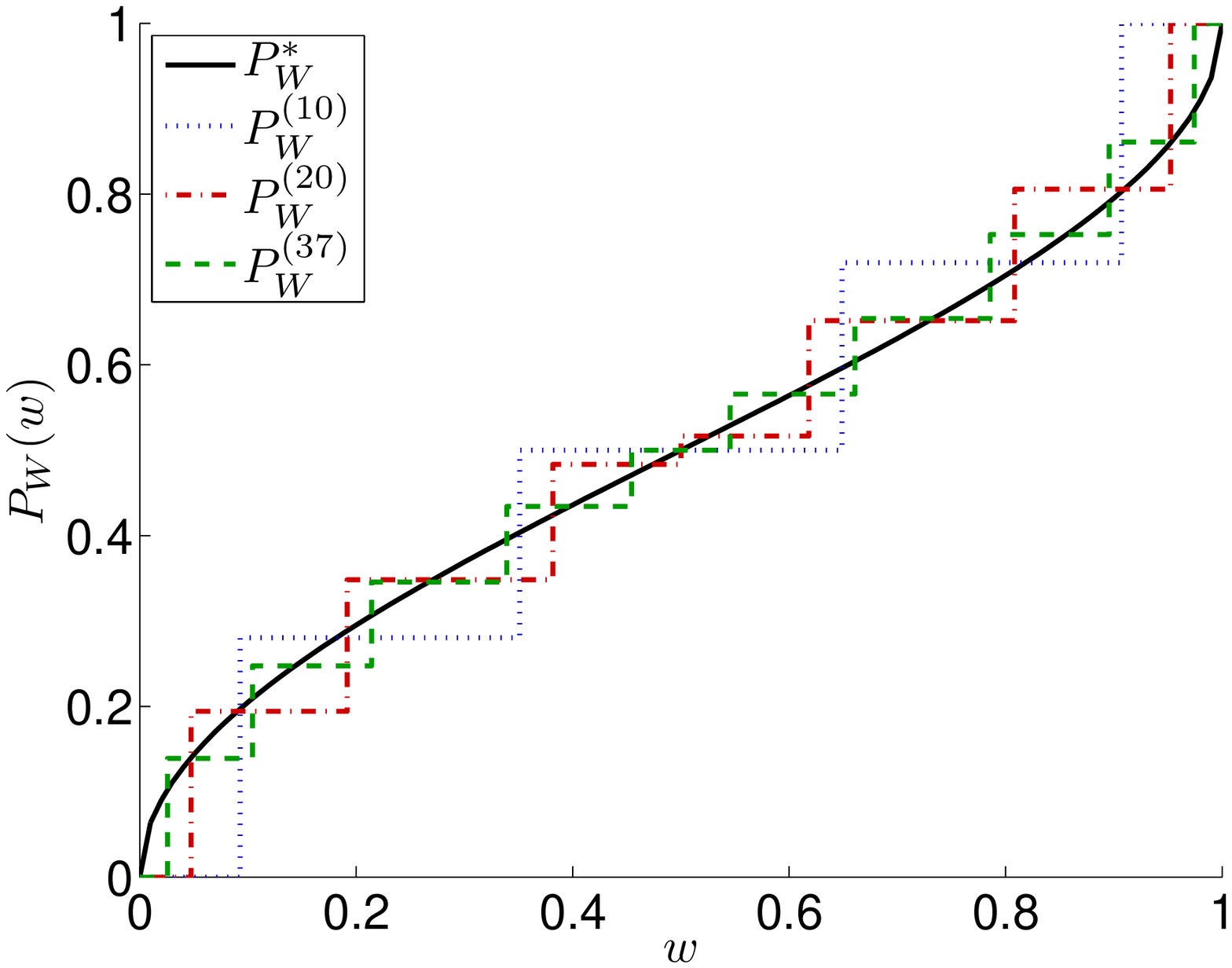}
  \centerline{\added{(a) Joint}}
\end{minipage}
\hfill
\begin{minipage}{.48\textwidth}
  \centering
  \includegraphics[width=.85\textwidth]{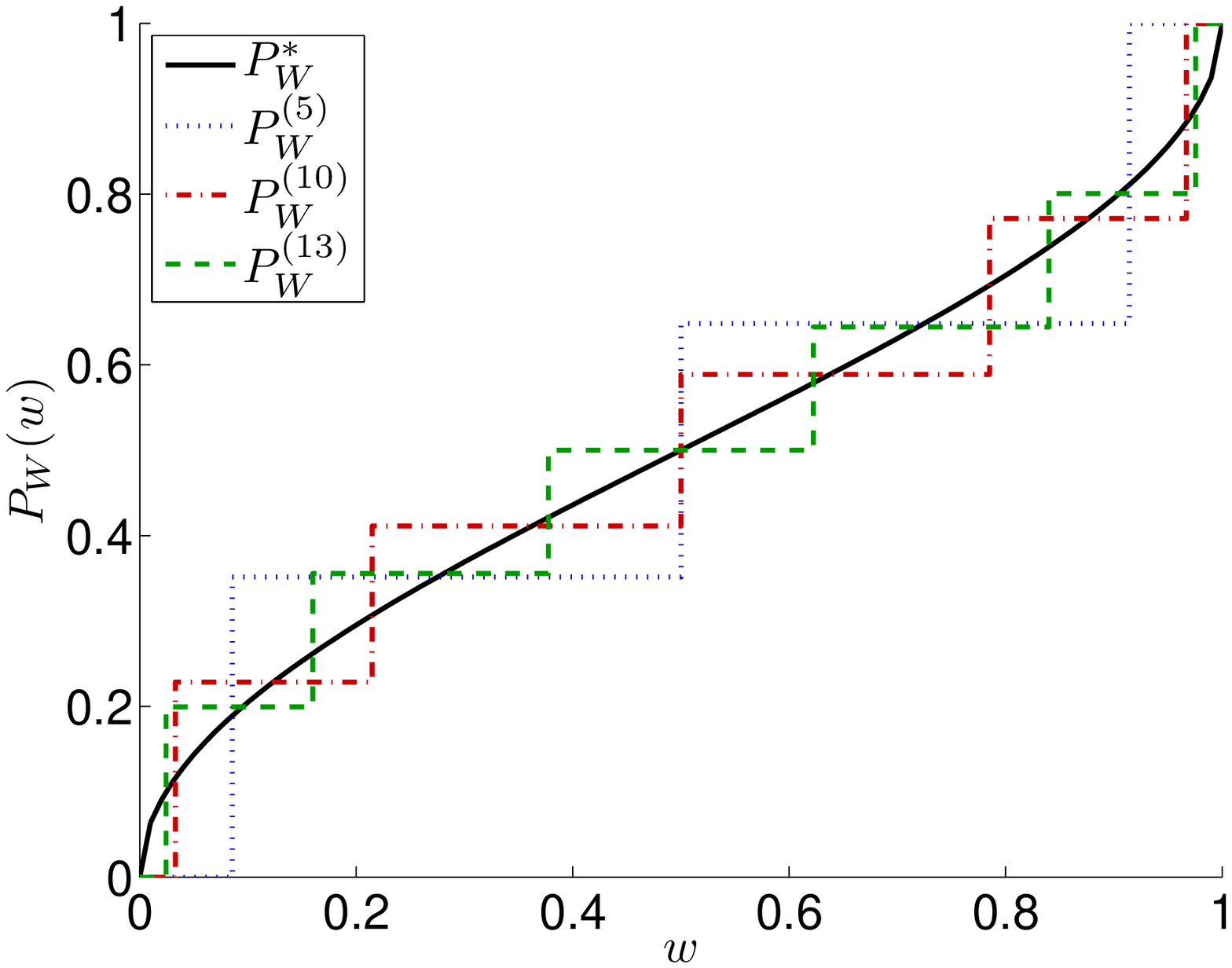}
  \centerline{\added{(b) Simple}}
\end{minipage}

\caption{Cumulative distribution function of the arcsine distribution $P_W^*$ and \added{the optimal distributions} ${P_W^{(k)}}$ \changed{for}\removed{of} the \added{(a)} joint and \added{(b)} simple fingerprinting games under the marking assumption}
\label{fig:cdf_plot}
\end{figure*}

\begin{figure*}[t]
\begin{minipage}{.48\textwidth}
  \centering
  \includegraphics[width=.85\textwidth]{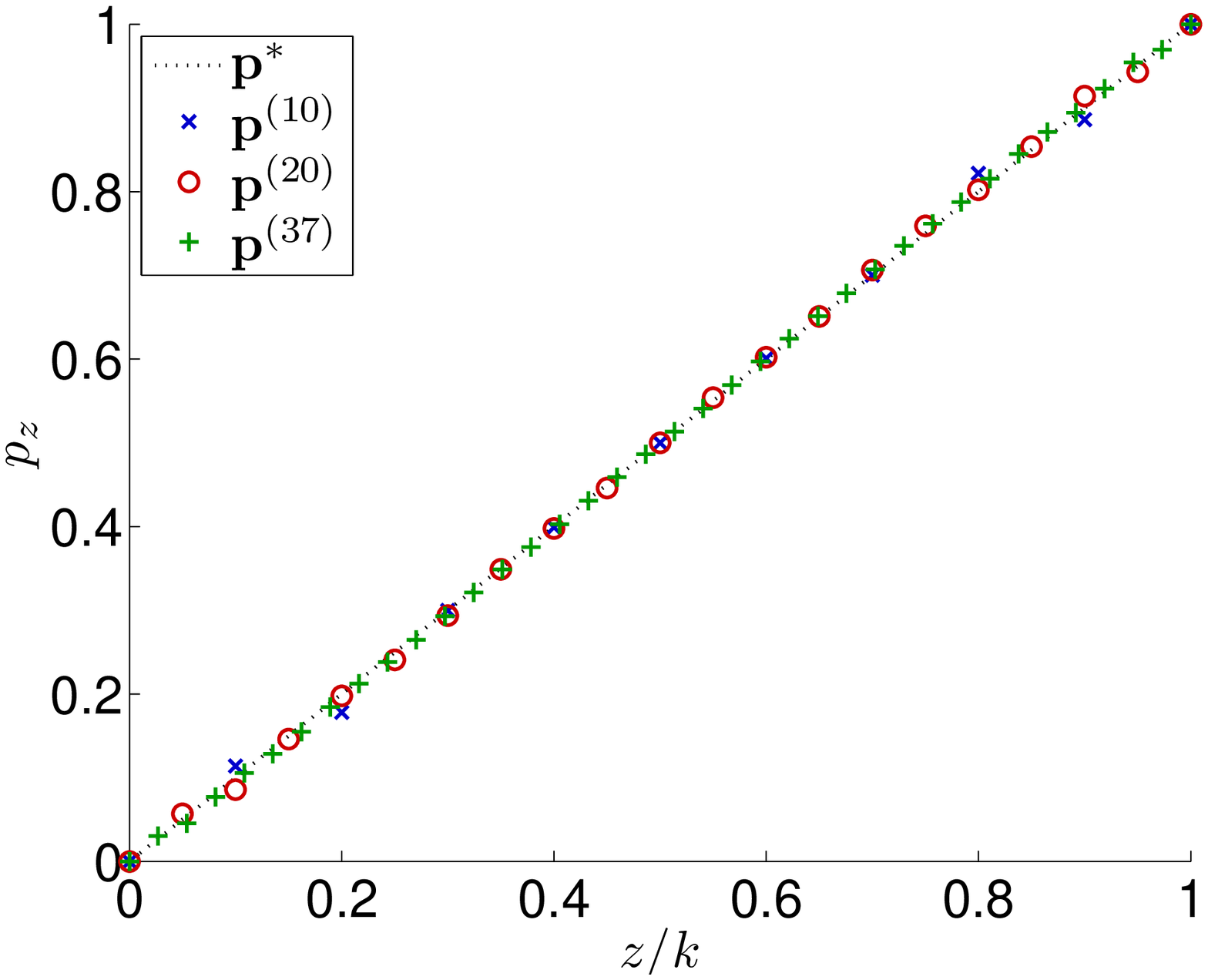}
  \centerline{\added{(a) Joint}}
\end{minipage}
\hfill
\begin{minipage}{.48\textwidth}
  \centering
  \includegraphics[width=.85\textwidth]{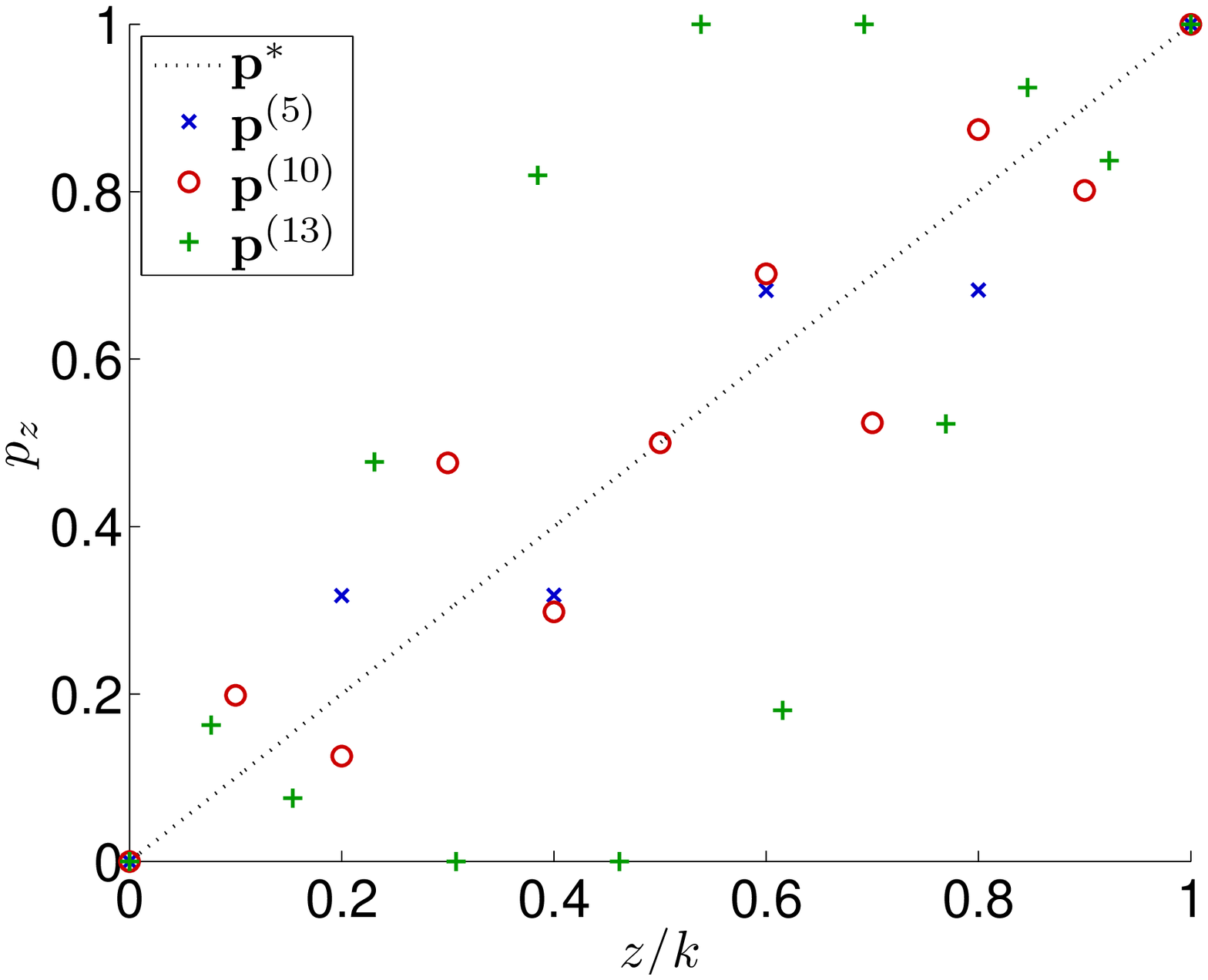}
  \centerline{\added{(b) Simple}}
\end{minipage}

\caption{The interleaving attack $\bfp^*$ and \added{the optimal attacks} ${\bfp^{(k)}}$ \changed{for}\removed{of} the \added{(a)} joint and \added{(b)} simple fingerprinting games under the marking assumption}
\label{fig:p_plot_TS_mark}
\end{figure*}

\subsection{Capacity Bounds}\label{ssec:bin_bounds}

The analysis of Sec. \ref{ssec:bin_analysis} allows us to solve the fingerprinting game numerically for small $k$. However, evaluating or even approximating the capacity value for large $k$ is still a difficult task. In this subsection, we provide tight upper and lower bounds on capacity.

For simplicity of notation, we let
\begin{equation}\label{eqn:g_k}
g_k(w) \triangleq p_{Y|W}(1|w) = \sum_{z=0}^{k} \alpha_z(w) p_z = \bfal(w)'\bfp
\end{equation}
which by the definition of $\alpha_z(w)$ in (\ref{eqn:alpha}) is a polynomial in $w$ of degree $\leq k$. Note that
$g_k(0) = p_0 = 0$ and $g_k(1) = p_k = 1$ by the marking assumption.

The following lemmas will be useful for the proofs:

\begin{lemma}[Pinsker's inequality]\cite[Lemma 11.6.1]{Cover2006}\label{lem:pinsker}
\begin{equation}
d(r \parallel s) \geq \frac{2}{\ln 2}(r-s)^2.
\end{equation}
\end{lemma}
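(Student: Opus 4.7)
The plan is to prove this by the standard second-derivative argument: reduce the two-variable inequality to a one-variable statement, show the excess function vanishes along with its derivative at $r=s$, and conclude via convexity.

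Fix $s \in (0,1)$ and define the excess function
$$\phi(r) \triangleq d(r\|s) - \frac{2}{\ln 2}(r-s)^2, \qquad r \in [0,1].$$
First I would verify the boundary data: $\phi(s)=0$ since both terms vanish. Differentiating $d(r\|s) = r\log(r/s) + (1-r)\log((1-r)/(1-s))$ with respect to $r$ (remembering $\log$ is base 2) gives
$$\frac{d}{dr} d(r\|s) = \log\frac{r(1-s)}{s(1-r)},$$
which equals $0$ at $r=s$. The same holds for the derivative of $\frac{2}{\ln 2}(r-s)^2$ at $r=s$, so $\phi'(s)=0$.

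Next I would bound the second derivative. A direct computation yields
$$\frac{d^2}{dr^2} d(r\|s) = \frac{1}{r(1-r)\ln 2} \geq \frac{4}{\ln 2},$$
using $r(1-r)\leq 1/4$ on $[0,1]$. Since $\frac{d^2}{dr^2}\bigl[\frac{2}{\ln 2}(r-s)^2\bigr] = \frac{4}{\ln 2}$, we obtain $\phi''(r) \geq 0$ on $(0,1)$, i.e.\ $\phi$ is convex. Combined with $\phi(s)=\phi'(s)=0$, convexity forces $\phi(r)\geq 0$ for every $r\in[0,1]$, which is the desired inequality. The degenerate cases $s\in\{0,1\}$ (where $d(r\|s)$ is $+\infty$ unless $r=s$) are trivial.

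The only mild obstacle is carrying the base-$2$ logarithms correctly through the derivative computation—the factor $1/\ln 2$ has to appear in precisely the right place so that the $\frac{4}{\ln 2}$ bound on $d''$ matches (not merely exceeds) the second derivative of the quadratic term, which is what makes the constant $\frac{2}{\ln 2}$ tight. Otherwise the argument is routine and requires no structural insight beyond convexity of $d(\cdot\|s)$.
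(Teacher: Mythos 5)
Your proof is correct. Note that the paper does not prove this lemma at all---it simply cites \cite[Lemma 11.6.1]{Cover2006}---so any self-contained argument is already "more" than what the paper provides. Your computation checks out: $\frac{d}{dr}d(r\|s)=\log\frac{r(1-s)}{s(1-r)}$ vanishes at $r=s$, $\frac{d^2}{dr^2}d(r\|s)=\frac{1}{r(1-r)\ln 2}\geq \frac{4}{\ln 2}$, and convexity of $\phi$ together with $\phi(s)=\phi'(s)=0$ gives $\phi\geq 0$ (extending to the closed interval by continuity, with the convention $0\log 0=0$). Your route is a minor variant of the textbook proof in the cited reference, which fixes the first argument and shows the sign of $\frac{\partial}{\partial q}\bigl[d(p\|q)-\frac{2}{\ln 2}(p-q)^2\bigr]$ is controlled by the sign of $q-p$ via the same bound $q(1-q)\leq 1/4$; your second-derivative/convexity formulation buys a slightly cleaner one-line conclusion at the cost of needing the tangency condition $\phi'(s)=0$, which you correctly verify. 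You also rightly flag where the constant becomes tight: the bound $r(1-r)\leq\frac14$ is saturated only at $r=\frac12$, which is exactly why $\frac{2}{\ln 2}$ is the best constant.
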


\begin{lemma}\label{lem:der_iden}
Equalities
$${\bfal^1}'\bfp -\bfal'\bfp = \frac{1-w}{k} g_k'(w)$$
and
$${\bfal^0}'\bfp -\bfal'\bfp = -\frac{w}{k} g_k'(w)$$
hold for $z = 0, \ldots, k$ and $w \in [0,1]$.
\end{lemma}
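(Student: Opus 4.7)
The plan is to derive both identities from two elementary relations among the weights $\alpha_z$, $\alpha^1_z$, $\alpha^0_z$, and then combine them linearly. The two building blocks I would establish first are
\begin{equation*}
\alpha_z(w) \;=\; w\,\alpha^1_z(w)+(1-w)\,\alpha^0_z(w), \qquad z=0,\ldots,k,
\end{equation*}
and
\begin{equation*}
\alpha_z'(w) \;=\; k\bigl(\alpha^1_z(w)-\alpha^0_z(w)\bigr), \qquad z=0,\ldots,k.
\end{equation*}
The first is the probabilistic statement that $p_{Z\mid W}(z\mid w)=w\,p_{Z\mid X_1W}(z\mid 1,w)+(1-w)\,p_{Z\mid X_1W}(z\mid 0,w)$, and can also be verified directly from Pascal's identity $\binom{k-1}{z-1}+\binom{k-1}{z}=\binom{k}{z}$ applied to the explicit formulas in (\ref{eqn:alpha})--(\ref{eqn:alpha0}). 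The second follows from the standard derivative $\tfrac{d}{dw}\bigl[w^z(1-w)^{k-z}\bigr]=zw^{z-1}(1-w)^{k-z}-(k-z)w^z(1-w)^{k-z-1}$ combined with the identities $\binom{k}{z}z=k\binom{k-1}{z-1}$ and $\binom{k}{z}(k-z)=k\binom{k-1}{z}$; the only mildly delicate point is that the boundary indices $z=0$ and $z=k$ must be checked by inspection because one of $\alpha^1_z$, $\alpha^0_z$ vanishes there.

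Once both coordinate-wise identities are in hand, I would vectorize them as $\bfal(w)=w\bfal^1(w)+(1-w)\bfal^0(w)$ and $\bfal'(w)=k\bigl(\bfal^1(w)-\bfal^0(w)\bigr)$, and take inner products against $\bfp$. The second gives
\begin{equation*}
g_k'(w) \;=\; \bfal'(w)'\bfp \;=\; k\bigl({\bfal^1}'\bfp-{\bfal^0}'\bfp\bigr),
\end{equation*}
while the first yields the convex decomposition $\bfal'\bfp = w\,{\bfal^1}'\bfp + (1-w)\,{\bfal^0}'\bfp$. Substituting the latter into the left-hand sides gives
\begin{equation*}
{\bfal^1}'\bfp-\bfal'\bfp = (1-w)\bigl({\bfal^1}'\bfp-{\bfal^0}'\bfp\bigr), \quad {\bfal^0}'\bfp-\bfal'\bfp = -w\bigl({\bfal^1}'\bfp-{\bfal^0}'\bfp\bigr),
\end{equation*}
and then inserting the formula for ${\bfal^1}'\bfp-{\bfal^0}'\bfp$ from $g_k'(w)$ produces the two claimed equations.

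I do not expect any serious obstacle; the proof is essentially an exercise in binomial algebra. The one thing to watch is the notational convention at the endpoints $z=0,k$ in the definitions of $\alpha^0$ and $\alpha^1$, so I would write out those two cases explicitly when verifying the derivative identity to make the argument airtight.
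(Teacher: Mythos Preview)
Your proof is correct and is essentially the approach the paper intends: the paper's own proof is a one-line remark that the identities follow directly from the definitions of $\alpha_z$, $\alpha^1_z$, $\alpha^0_z$, and $g_k$, and you have simply fleshed out that computation. Your decomposition into the two building-block identities $\bfal=w\bfal^1+(1-w)\bfal^0$ and $\tfrac{d}{dw}\bfal=k(\bfal^1-\bfal^0)$ is a clean way to organize the binomial algebra, and your care with the boundary cases $z=0,k$ is appropriate.
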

\begin{proof}
The equalities follow directly from (\ref{eqn:alpha}), (\ref{eqn:alpha1}), (\ref{eqn:alpha0}), and (\ref{eqn:g_k}).
\end{proof}

\begin{lemma}\label{lem:lb_f_int}
Let $f_W$ be a pdf on $[0,1]$. Then
\begin{equation}
\int_0^1 \frac{dw}{f_W(w)w(1-w)} \geq \pi^2
\end{equation}
with equality if and only if $f_W$ is the arcsine distribution of (\ref{eqn:arcsinepdf}).
\end{lemma}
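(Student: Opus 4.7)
The plan is to prove this via a direct application of the Cauchy–Schwarz inequality, exploiting the fact that the integrand can be written as a product of two factors whose ``geometric mean'' is the arcsine density.

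First, I would rewrite the integrand as
$$\frac{1}{f_W(w)w(1-w)} = \left(\frac{1}{\sqrt{f_W(w)\,w(1-w)}}\right)^{\!2},$$
and split the constant factor $1/\sqrt{w(1-w)}$ off. Specifically, define
$$g(w) = \frac{1}{\sqrt{f_W(w)\,w(1-w)}}, \qquad h(w) = \sqrt{f_W(w)},$$
so that $g(w)h(w) = 1/\sqrt{w(1-w)}$, which is (up to normalization) the arcsine density.

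Next, I would apply Cauchy–Schwarz:
$$\left(\int_0^1 g(w)h(w)\,dw\right)^{\!2} \leq \int_0^1 g(w)^2\,dw \cdot \int_0^1 h(w)^2\,dw.$$
The left side equals $\left(\int_0^1 \frac{dw}{\sqrt{w(1-w)}}\right)^{2} = \pi^2$, using the standard arcsine-normalization integral. The right side equals
$$\int_0^1 \frac{dw}{f_W(w)\,w(1-w)} \cdot \int_0^1 f_W(w)\,dw = \int_0^1 \frac{dw}{f_W(w)\,w(1-w)},$$
since $f_W$ is a pdf. This yields the desired inequality.

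For the equality case, Cauchy–Schwarz is tight iff $g$ and $h$ are proportional almost everywhere, i.e., $1/\sqrt{f_W(w)\,w(1-w)} = c\sqrt{f_W(w)}$ for some constant $c>0$, which forces $f_W(w) = 1/(c\sqrt{w(1-w)})$. Imposing $\int_0^1 f_W\,dw = 1$ determines $c = \pi$, so equality holds iff $f_W = f_W^*$ is the arcsine density of (\ref{eqn:arcsinepdf}). There is no significant obstacle; the only minor care needed is to handle the possibility that $f_W$ vanishes on a set of positive measure, in which case the left-hand integral is $+\infty$ and the inequality is trivial, so we may assume $f_W > 0$ a.e.\ when invoking Cauchy–Schwarz with the chosen $g$ and $h$.
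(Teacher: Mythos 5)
Your proof is correct and is essentially the same as the paper's: both apply the Cauchy--Schwarz inequality with the splitting $1/\sqrt{w(1-w)} = \bigl(f_W(w)w(1-w)\bigr)^{-1/2}\cdot\sqrt{f_W(w)}$, evaluate $\int_0^1 dw/\sqrt{w(1-w)} = \pi$, and identify the equality case as $f_W \propto 1/\sqrt{w(1-w)}$. Your extra remark about $f_W$ vanishing on a set of positive measure is a harmless refinement the paper omits.
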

\begin{proof}
By the Cauchy-Schwarz inequality, we have
\begin{equation*}
\int_0^1 \frac{dw}{f_W(w)w(1-w)} \geq \frac{\left[ \int_0^1 \frac{dw}{\sqrt{w(1-w)}} \right]^2}{\int_0^1 f_W(w) dw} = \pi^2.
\end{equation*}
Equality holds if and only if $f_W(w) \propto \frac{1}{\sqrt{w(1-w)}}$, which leads us to the arcsine distribution.
\end{proof}

\subsubsection{Upper Bounds}

The following two theorems bound from above capacities under the interleaving attack of (\ref{eqn:interleaving}).

\begin{theorem}\cite[Theorem 4.2]{Huang2009}\label{thm:ub_joint}
\begin{equation}
C_k^\joint(\scrP_W,\bfp^*) \leq \frac{1}{k^2 \ln 2}.
\end{equation}
\end{theorem}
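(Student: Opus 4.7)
The plan is to show the stronger pointwise bound $I_k^\joint(w,\bfp^*)\leq 1/(k^2\ln 2)$ for every $w\in[0,1]$, from which the theorem follows at once: since the collusion class is the singleton $\{\bfp^*\}$, the payoff $\bbE_{P_W}[I_k^\joint(W,\bfp^*)]$ is linear in $P_W$, so its maximum is attained at a point mass and $C_k^\joint(\scrP_W,\bfp^*) = \max_{w\in[0,1]} I_k^\joint(w,\bfp^*)$.

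The key structural observation is that the interleaving attack satisfies $\bfal(w)'\bfp^* = \sum_z\binom{k}{z}w^z(1-w)^{k-z}(z/k) = w$, since this is simply $\bbE[Z/k\mid W=w]$ for $Z\sim\Binomial(k,w)$. Plugging $\bfal'\bfp^*=w$ into the divergence representation (\ref{eqn:I_joint_2}) of the joint payoff collapses it to the clean form
\begin{equation*}
I_k^\joint(w,\bfp^*) = \frac{1}{k}\,\bbE_{Z\sim\Binomial(k,w)}\!\left[d(Z/k\,\|\,w)\right].
\end{equation*}

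The analytic ingredient I would invoke is the ``$D\leq\chi^2$'' upper bound for Bernoulli laws,
\begin{equation*}
d(p\,\|\,q) \leq \frac{(p-q)^2}{q(1-q)\,\ln 2}, \quad p,q\in(0,1),
\end{equation*}
which follows in one line from $\ln t\leq t-1$: writing the divergence in nats as $\bbE_Q[t\ln t]$ with likelihood ratio $t=P/Q$, multiplying $\ln t\leq t-1$ by $t\geq 0$ gives $t\ln t\leq t^2-t$, and taking expectation under $Q$ (using $\bbE_Q[t]=1$) yields $D(p\|q)\leq \bbE_Q[t^2]-1 = (p-q)^2/[q(1-q)]$. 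Applying this with $p=Z/k$, $q=w$, pulling the $w$-dependent factor out of the expectation, and invoking the binomial variance identity $\bbE[(Z/k-w)^2]=w(1-w)/k$ gives
\begin{equation*}
I_k^\joint(w,\bfp^*) \leq \frac{1}{k\, w(1-w)\ln 2}\cdot\frac{w(1-w)}{k} = \frac{1}{k^2\ln 2},
\end{equation*}
which is independent of $w$, completing the proof.

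The boundary cases $w\in\{0,1\}$ are trivial, since by the marking assumption (\ref{eqn:bin_marking}) the law of $Z/k$ concentrates at $w$ and $I_k^\joint$ vanishes there. I do not anticipate a real obstacle: the only insight required is that the interleaving attack is mean-preserving in the sense $\bfal(w)'\bfp^*=w$, so the mutual information degenerates to a Jensen gap that is tightly controlled by the chi-squared estimate; both the inequality and the variance identity are elementary.
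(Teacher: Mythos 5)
Your proof is correct, and it follows the same overall strategy as the paper: reduce to the pointwise maximum $\max_w I^\joint_k(w,\bfp^*)$, exploit the mean-preserving property $\bfal(w)'\bfp^* = w$ of the interleaving attack, and bound the resulting Jensen gap by $1/(k\ln 2)$. The difference is in how that last estimate is obtained. The paper rewrites the payoff in the entropy form (\ref{eqn:I_joint_1}) as $\frac{1}{k}\bigl[h(w)-\sum_z \alpha_z(w)h(z/k)\bigr]$ and simply cites \cite[Theorem 4.3]{Anthapadmanabhan2008} for the bound; you instead work from the divergence form (\ref{eqn:I_joint_2}), which by $\bfal'\bfp^*=w$ equals $\frac{1}{k}\bbE\bigl[d(Z/k\,\|\,w)\bigr]$ with $Z\sim\Binomial(k,w)$, and then prove the needed inequality from scratch via the reverse-Pinsker bound $d(p\|q)\le (p-q)^2/[q(1-q)\ln 2]$ and the binomial variance identity. (The two representations are the same quantity, since the Jensen gap of the entropy equals the expected divergence to the mean.) Your chi-squared derivation is sound, including at the endpoints $z\in\{0,k\}$ where $t\ln t\le t^2-t$ still applies with the convention $0\ln 0=0$. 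What your route buys is a self-contained two-line proof in place of an external citation, and it makes transparent why the constant is exactly twice the asymptotic value $1/(k^2 2\ln 2)$: your bound is the non-asymptotic, one-sided version of the local expansion in Lemma \ref{lem:chisqr} that the paper uses in Section \ref{sec:bin_asym}, and it is also the exact mirror of the paper's lower-bound technique in Theorem \ref{thm:lb_general}, which uses Pinsker's inequality in the opposite direction.
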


\begin{proof}
\begin{eqnarray*}
C_k^\joint(\scrP_W,\bfp^*) &=& \max_{w \in [0,1]}~I^\joint_k(w,\bfp^*)\nonumber\\
&\stackrel{(a)}{=}& \frac{1}{k} \max_{w \in [0,1]}~ \left\{h(w)-\sum_{z=0}^{k}\alpha_z(w)h\left(\frac{z}{k}\right)\right\}\nonumber\\
&\stackrel{(b)}{\leq}& \frac{1}{k^2 \ln 2}
\end{eqnarray*}
where (a) follows from (\ref{eqn:I_joint_1}) and (b) results from \cite[Theorem 4.3]{Anthapadmanabhan2008}.
\end{proof}

\begin{theorem}\cite[Proposition 4.2]{Huang2009a}\label{thm:ub_simple}
\begin{equation}
C_k^\simple(\scrP_W,\bfp^*) = 1-h\left(\frac{1}{2}+\frac{1}{2k}\right) = \frac{1}{k^2 2\ln 2}+O\left(\frac{1}{k^4}\right).
\end{equation}
\end{theorem}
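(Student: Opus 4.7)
The plan is to collapse the fingerprinting capacity to a one-dimensional optimization, evaluate the channel that the interleaving attack induces on a single user, identify $w=1/2$ as the maximizer, and Taylor-expand the resulting binary entropy.

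First, since $\scrP^\col=\{\bfp^*\}$ is a singleton and $\bfp^*$ is both user-symmetric and symbol-symmetric, Theorem \ref{thm:minimax} and Lemma \ref{lem:minimax} reduce (\ref{eqn:maximin_simple}) to
$$C_k^\simple(\scrP_W,\bfp^*) \;=\; \max_{w\in[0,1]} I_k^\simple(w,\bfp^*).$$
A direct calculation with the binomial mean gives $\bfal'\bfp^* = \sum_{z}\binom{k}{z}w^z(1-w)^{k-z}\tfrac{z}{k}=w$, and likewise
$${\bfal^1}'\bfp^* \;=\; w+\frac{1-w}{k},\qquad {\bfal^0}'\bfp^* \;=\; w\!\left(1-\frac{1}{k}\right),$$
which is consistent with Lemma \ref{lem:der_iden} applied to $g_k(w)=w$. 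Substituting into (\ref{eqn:I_simple}) yields the explicit one-dimensional payoff
$$I_k^\simple(w,\bfp^*) \;=\; w\,d\!\left(w+\tfrac{1-w}{k}\,\Big\|\,w\right) + (1-w)\,d\!\left(w-\tfrac{w}{k}\,\Big\|\,w\right).$$

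Next, I would argue that the maximum of this scalar function is attained at $w=1/2$. The channel from $X_1$ to $Y$ given $W=w$ admits the simple description ``with probability $1/k$ let $Y=X_1$, and with probability $1-1/k$ draw $Y$ independently from Bernoulli$(w)$,'' so $f(w):=I_k^\simple(w,\bfp^*)=h(w)-w\,h(Q_1)-(1-w)\,h(Q_0)$ with $Q_1=(1-\tfrac{1}{k})w+\tfrac{1}{k}$ and $Q_0=(1-\tfrac{1}{k})w$. By the symbol-symmetry of the interleaving attack, $f(w)=f(1-w)$, so $w=1/2$ is a critical point; computing $f''(1/2)$ from $h''(Q_i)$ and $h'(Q_i)$ and using $Q_0=1-Q_1$, $h'(1/2)=0$, $h''(1/2)=-4/\ln 2$ shows $f''(1/2)<0$ for every $k\ge 2$. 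Since $f(0)=f(1)=0$, the global maximum is attained in the interior, and I would complete the argument either by showing that $f$ is concave on $[0,1]$ (by bounding the positive cross-terms in $f''$ against the dominant $h''(w)$ contribution) or, equivalently, by showing that $1/2$ is the unique critical point of $f$ in $(0,1)$.

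Evaluating at $w=1/2$, we get $Q_1=\tfrac{1}{2}+\tfrac{1}{2k}$ and $Q_0=\tfrac{1}{2}-\tfrac{1}{2k}=1-Q_1$, so the induced $X_1\to Y$ channel is a BSC with crossover $(k-1)/(2k)$, giving
$$I_k^\simple(1/2,\bfp^*) \;=\; 1-h\!\left(\tfrac{1}{2}+\tfrac{1}{2k}\right).$$
Finally, Taylor-expanding $h$ about $1/2$ using $h'(1/2)=0$ and $h''(1/2)=-4/\ln 2$ yields $h(\tfrac{1}{2}+x)=1-\tfrac{2x^2}{\ln 2}+O(x^4)$, so with $x=1/(2k)$,
$$1-h\!\left(\tfrac{1}{2}+\tfrac{1}{2k}\right) \;=\; \frac{1}{k^2\,2\ln 2}+O(k^{-4}),$$
which is the claimed asymptotic.

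The main obstacle is the global-maximum step: the critical-point equation at $w=1/2$ and $f''(1/2)<0$ only give a local maximum, and a routine asymptotic check shows that $I_k^\simple(w,\bfp^*)$ is essentially flat at leading order $\tfrac{1}{2k^2\ln 2}$ across $w\in(0,1)$, so one really has to extract the $O(k^{-4})$ correction (or prove concavity of $f$ directly) to certify that $1/2$ is the exact maximizer and that the equality---not merely asymptotic equivalence---in the theorem holds.
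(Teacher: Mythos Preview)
Your reduction to the scalar optimization, the evaluation at $w=1/2$, and the Taylor expansion are all correct and match the paper. The gap is exactly the one you flagged: you only establish that $w=1/2$ is a \emph{local} maximum and then gesture at concavity or uniqueness of the critical point without actually proving either. Since the theorem asserts an exact equality (not just $\sim$), this step cannot be left open, and your observation that the leading $\Theta(k^{-2})$ term is constant in $w$ shows that a soft asymptotic argument will not suffice.

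The paper closes this gap by a direct monotonicity argument rather than concavity. Differentiating
\[
f(w)=w\,d\!\left(w^{+}\,\Big\|\,w\right)+(1-w)\,d\!\left(w^{-}\,\Big\|\,w\right),\qquad w^{+}=w+\tfrac{1-w}{k},\ \ w^{-}=w-\tfrac{w}{k},
\]
one finds after simplification the remarkably clean identity
\[
f'(w)=d\!\left(w^{+}\,\Big\|\,w^{-}\right)-d\!\left(w^{-}\,\Big\|\,w^{+}\right).
\]
The sign of $f'$ then follows from the elementary inequality $d(r\|s)\ge d(s\|r)$ whenever $r\ge s\ge 0$ and $r+s\le 1$ (proved by showing $\partial_s[d(r\|s)-d(s\|r)]\le 0$ via a second-derivative computation). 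For $w\in[0,1/2]$ one checks $w^{+}\ge w^{-}$ and $w^{+}+w^{-}=2(1-\tfrac{1}{k})w+\tfrac{1}{k}\le 1$, so $f'(w)\ge 0$ on $[0,1/2]$; combined with the symmetry $f(w)=f(1-w)$ this gives the global maximum at $w=1/2$. This is the missing idea in your proposal.
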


\begin{proof}
It can be shown that $I^\simple_k(w,\bfp^*)$ takes its maximum at $w = 1/2$ (See the Appendix). Hence
\begin{eqnarray*}
C_k^\simple(\scrP_W,\bfp^*) &=& \max_{w \in [0,1]} ~ I^\simple_k\left(w,\bfp^* \right)\\
&=& 1-h\left(\frac{1}{2}+\frac{1}{2k}\right) = \frac{1}{k^2 2 \ln 2} + O\left(\frac{1}{k^4}\right).
\end{eqnarray*}
\end{proof}

\subsubsection{Lower Bounds}

The following theorem provides a lower bound on both the joint and simple capacities under a continuous probability distribution $f_W$.

\begin{theorem}\label{thm:lb_general}
Let $f_W$ be the pdf of a continuous probability distribution on $[0,1]$. Then
\begin{equation}
C^\joint_k(f_W,\scrP_\marking) \geq C^\simple_k(f_W,\scrP_\marking) \geq \frac{2}{k^2 \ln2} \left[ \int_0^1 \frac{dw}{f_W(w)w(1-w)} \right]^{-1}.
\end{equation}
The lower bound is maximized when $f_W = f_W^*$ where it takes the value $\dfrac{2}{k^2 \pi^2 \ln 2}$.
\end{theorem}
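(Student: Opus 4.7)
The first inequality is immediate from the Corollary in Section \ref{ssec:game_simple}, which gives $C^\simple_k \leq C^\joint_k$ for any compact, user-symmetric collusion class. So the plan is to focus on the second inequality.

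Since $f_W$ is a fixed embedding strategy, $C^\simple_k(f_W,\scrP_\marking)$ equals $\min_\bfp \bbE_{f_W}[I^\simple_k(W,\bfp)]$, so it suffices to lower bound $\bbE_{f_W}[I^\simple_k(W,\bfp)]$ uniformly in $\bfp \in \scrP_\marking$. Starting from the representation (\ref{eqn:I_simple}), I would apply Pinsker's inequality (Lemma \ref{lem:pinsker}) to each relative-entropy term, yielding
\begin{equation*}
I^\simple_k(w,\bfp) \;\geq\; \frac{2}{\ln 2}\Bigl[w\bigl({\bfal^1}'\bfp-\bfal'\bfp\bigr)^2 + (1-w)\bigl({\bfal^0}'\bfp-\bfal'\bfp\bigr)^2\Bigr].
\end{equation*}
Then I would substitute the identities of Lemma \ref{lem:der_iden}, which replace the differences by $\frac{1-w}{k}g_k'(w)$ and $-\frac{w}{k}g_k'(w)$. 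The bracketed quantity collapses to $\frac{1}{k^2}w(1-w)[g_k'(w)]^2$, giving
\begin{equation*}
I^\simple_k(w,\bfp) \;\geq\; \frac{2\,w(1-w)[g_k'(w)]^2}{k^2 \ln 2}.
\end{equation*}

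Integrating against $f_W$ and then applying the Cauchy--Schwarz inequality to the pair of factors $\sqrt{f_W(w)w(1-w)}\,g_k'(w)$ and $1/\sqrt{f_W(w)w(1-w)}$, I get
\begin{equation*}
\int_0^1 f_W(w)\,w(1-w)[g_k'(w)]^2\,dw \;\geq\; \frac{\bigl[\int_0^1 g_k'(w)\,dw\bigr]^2}{\int_0^1 \frac{dw}{f_W(w)w(1-w)}}.
\end{equation*}
The crucial point here is that the marking assumption forces $g_k(0)=p_0=0$ and $g_k(1)=p_k=1$, so $\int_0^1 g_k'(w)\,dw = 1$, independent of $\bfp$. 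Combining the two previous displays produces the claimed lower bound on $\bbE_{f_W}[I^\simple_k(W,\bfp)]$, which, being uniform in $\bfp$, is a lower bound on $C^\simple_k(f_W,\scrP_\marking)$.

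For the second assertion, I would invoke Lemma \ref{lem:lb_f_int} directly: among all admissible $f_W$, the functional $\int_0^1 \frac{dw}{f_W(w)w(1-w)}$ is minimized (taking the value $\pi^2$) precisely when $f_W = f_W^*$, so the lower bound is maximized at $\frac{2}{k^2 \pi^2 \ln 2}$. There is no substantial obstacle in this argument; the only care needed is in handling possible unboundedness of $g_k'$ and of $1/[f_W(w)w(1-w)]$ near the endpoints, which is harmless since $g_k$ is a polynomial and both sides of Cauchy--Schwarz are well-defined (possibly $+\infty$) in $[0,1]$, and the lower bound is trivially true when the integral on the right is infinite.
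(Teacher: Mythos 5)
Your proof is correct and follows essentially the same route as the paper's: Pinsker's inequality applied to the two divergence terms in (\ref{eqn:I_simple}), the derivative identities of Lemma \ref{lem:der_iden} to collapse the bracket to $\frac{1}{k^2}w(1-w)[g_k'(w)]^2$, Cauchy--Schwarz against $1/[f_W(w)w(1-w)]$, the marking-assumption boundary values $g_k(0)=0$, $g_k(1)=1$, and finally Lemma \ref{lem:lb_f_int} for the optimality of the arcsine distribution. Your added remarks on the first inequality (via the corollary of Sec.~\ref{ssec:game_simple}) and on the harmlessness of a divergent integral are consistent with the paper, which leaves these points implicit.
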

\begin{proof}
For any $\bfp \in \scrP_\marking$, we have
\begin{eqnarray*}
\bbE_{f_W}\left[I^\simple_k(W,\bfp)\right] &\stackrel{(a)}{=}& \int_0^1 \left[ wd({\bfal^1}'\bfp\parallel\bfal'\bfp) + (1-w)d({\bfal^0}'\bfp \parallel \bfal'\bfp) \right] f_W(w)dw\\
&\stackrel{(b)}{\geq}& \frac{2}{\ln 2} \int_0^1 \left[ w({\bfal^1}'\bfp-\bfal'\bfp)^2 + (1-w)({\bfal^0}'\bfp-\bfal'\bfp)^2 \right]f_W(w)dw\\
&\stackrel{(c)}{=}& \frac{2}{k^2 \ln 2} \int_0^1 \left[ g_k'(w) \right]^2 w(1-w) f_W(w)dw\\
&\stackrel{(d)}{\geq}& \frac{2}{k^2 \ln 2} \frac{\left[\int_0^1 g_k'(w) dw\right]^2}{\int_0^1 \frac{dw}{f_W(w)w(1-w)}}\\
&\stackrel{(e)}{=}& \frac{2}{k^2 \ln2} \left[ \int_0^1 \frac{dw}{f_W(w)w(1-w)} \right]^{-1}.
\end{eqnarray*}
(a) follows from (\ref{eqn:I^simple}). (b) follows from Pinsker's inequality (Lemma \ref{lem:pinsker}). (c) follows from Lemma \ref{lem:der_iden}. (d) follows from the Cauchy-Schwarz inequality. Finally, (e) follows from the marking assumption. Hence
\begin{eqnarray*}
C^\simple_k(f_W,\scrP_\marking) &=& \min_{\bfp \in \scrP_\marking} \bbE_{f_W}\left[I^\simple_k(W,\bfp)\right]\\
&\geq& \frac{2}{k^2 \ln2} \left[ \int_0^1 \frac{dw}{f_W(w)w(1-w)} \right]^{-1}.
\end{eqnarray*}
Following Lemma \ref{lem:lb_f_int}, the lower bound is maximized when $f_W = f_W^*$, which coincides with the lower bound given in \cite{Huang2009a}.
\end{proof}

The following corollaries summarizes the upper and lower bounds on capacities under the marking assumption:
\begin{corollary}
\begin{equation}
\frac{2}{k^2 \pi^2 \ln 2} \leq C^\joint_k(\scrP_W,\scrP_\marking) \leq \frac{1}{k^2 \ln 2}.
\end{equation}
\end{corollary}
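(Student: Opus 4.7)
The corollary follows directly by combining the upper bound from Theorem \ref{thm:ub_joint} with the lower bound from Theorem \ref{thm:lb_general}, using the saddle-point/minimax formulation of $C^\joint_k$.

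For the upper bound, I would argue as follows. By Theorem \ref{thm:minimax} (or equivalently the minimax form in (\ref{eqn:bin_minimax_rand})), we have
\begin{equation*}
C^\joint_k(\scrP_W, \scrP_\marking) = \min_{\bfp \in \scrP_\marking} \max_{P_W \in \scrP_W} \bbE_{P_W}[I^\joint_k(W, \bfp)].
\end{equation*}
Since the interleaving attack $\bfp^*$ satisfies the marking assumption (\ref{eqn:bin_marking}), it lies in $\scrP_\marking$, so the outer minimization can be upper-bounded by its value at $\bfp = \bfp^*$, which equals $C^\joint_k(\scrP_W, \bfp^*)$. Theorem \ref{thm:ub_joint} then gives the bound $\tfrac{1}{k^2 \ln 2}$.

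For the lower bound, I would use the maximin form (\ref{eqn:bin_maximin}). Since the arcsine pdf $f_W^*$ of (\ref{eqn:arcsinepdf}) is a valid element of $\scrP_W$, the maximization is lower-bounded by its value at $P_W = f_W^*$, namely $C^\joint_k(f_W^*, \scrP_\marking)$. Theorem \ref{thm:lb_general} then yields $C^\joint_k(f_W^*, \scrP_\marking) \geq \tfrac{2}{k^2 \pi^2 \ln 2}$, which is the desired lower bound.

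There is no real obstacle here; the result is assembled from the two preceding theorems together with the observation that $\bfp^* \in \scrP_\marking$ and $f_W^* \in \scrP_W$ are feasible strategies for the min-player and max-player respectively. The only minor subtlety is invoking Theorem \ref{thm:minimax} to ensure the max–min and min–max agree (so that feeding a specific strategy to either player yields a one-sided bound on the common value), and noting that the restriction to symbol-symmetric embedding distributions used earlier is automatically satisfied by $f_W^*$, which is symmetric about $1/2$.
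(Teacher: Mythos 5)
Your proposal is correct and matches the paper's (implicit) argument exactly: the upper bound follows by fixing the feasible attack $\bfp^* \in \scrP_\marking$ and invoking Theorem~\ref{thm:ub_joint}, and the lower bound by fixing the feasible embedding distribution $f_W^* \in \scrP_W$ and invoking Theorem~\ref{thm:lb_general}, with the saddle-point result of Theorem~\ref{thm:minimax} justifying that each one-sided restriction bounds the common game value. This is precisely the "fixed colluder strategy gives an upper bound / fixed embedder strategy gives a lower bound" observation the paper records in Sec.~\ref{ssec:bin_analysis}.
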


\begin{corollary}\label{cor:simplebd}
\begin{equation}
\frac{2}{k^2 \pi^2 \ln 2} \leq C^\simple_k(\scrP_W,\scrP_\marking) \leq \frac{1}{k^2 2\ln 2}+O\left(\frac{1}{k^4}\right).
\end{equation}
\end{corollary}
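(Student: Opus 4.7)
The plan is to derive the corollary as an immediate consequence of the two preceding results: Theorem \ref{thm:lb_general} supplies the lower bound and Theorem \ref{thm:ub_simple} supplies the upper bound, each specialized to a canonical strategy for one of the two players. The only work to do is to invoke the monotonicity of capacity under shrinking of the strategy classes, in the correct direction for each side.

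For the lower bound, I would use the fact that the arcsine density $f_W^*$ lies in $\scrP_W$. The embedder is therefore entitled to commit to this single distribution, so restricting the outer maximization from $\scrP_W$ to the singleton $\{f_W^*\}$ can only decrease the value:
$$C^\simple_k(\scrP_W,\scrP_\marking) \geq C^\simple_k(f_W^*,\scrP_\marking).$$
Theorem \ref{thm:lb_general} applied with $f_W = f_W^*$ then gives $C^\simple_k(f_W^*,\scrP_\marking) \geq \frac{2}{k^2 \pi^2 \ln 2}$, the factor $\pi^2$ arising from Lemma \ref{lem:lb_f_int}, which identifies $f_W^*$ as the maximizer of the functional $\bigl[\int_0^1 dw/(f_W(w) w(1-w))\bigr]^{-1}$.

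For the upper bound a symmetric argument applies on the colluders' side. Since the interleaving attack $\bfp^*$ belongs to $\scrP_\marking$, restricting the inner minimization from $\scrP_\marking$ to $\{\bfp^*\}$ can only increase the value:
$$C^\simple_k(\scrP_W,\scrP_\marking) \leq C^\simple_k(\scrP_W,\bfp^*).$$
Theorem \ref{thm:ub_simple} then supplies the exact evaluation $1 - h\!\left(\frac{1}{2}+\frac{1}{2k}\right) = \frac{1}{k^2 2 \ln 2} + O(k^{-4})$, completing the sandwich.

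There is no genuine obstacle to overcome; the only thing to watch is the direction of the monotonicity, namely that shrinking the maximizer's class (to $\{f_W^*\}$) lower-bounds capacity while shrinking the minimizer's class (to $\{\bfp^*\}$) upper-bounds it. Together, these two specializations pin $C^\simple_k(\scrP_W,\scrP_\marking)$ between quantities whose ratio tends to $\pi^2/4$ as $k \to \infty$, as announced in the main-results summary.
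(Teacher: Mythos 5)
Your proposal is correct and is exactly the argument the paper intends: the corollary is stated without proof precisely because it follows by restricting the embedder to $f_W^*$ (Theorem \ref{thm:lb_general} plus Lemma \ref{lem:lb_f_int}) for the lower bound and restricting the colluders to the interleaving attack $\bfp^*$ (Theorem \ref{thm:ub_simple}) for the upper bound, using the monotonicity facts the paper records in Sec.~\ref{ssec:bin_analysis}. The only point worth making explicit, which you handle implicitly, is that $\bfp^*$ is user-symmetric and hence lies in $\scrP^\col_\fair$, the class over which the inner minimization of the simple game is actually taken.
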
   % new
\section{Asymptotic\changed{s}\removed{ Behavior} for Large Coalitions}\label{sec:bin_asym}

The upper and lower bounds on $C^\joint_k(\scrP_W,\scrP_\marking)$ provided in the previous section are within a factor of about five. As can be seen in Fig. \ref{fig:C_plot}, the numerical results suggest that $C^\joint_k(\scrP_W,\scrP_\marking)$ approximates $(k^2 2 \ln2)^{-1}$ even for small values of $k$. Amiri and Tardos \cite{Amiri2009} claimed the same asymptotic rate \changed{but only provided the proof for the lower bound in \cite[Theorem 15]{Amiri2010}}\removed{without providing a proof}. Here we analyze not only this rate but the complete asymptotic\changed{s}\removed{ behavior} of the joint fingerprinting game.

\subsection{Aymptotic Analysis}\label{ssec:asym_anal}

We consider the sequence of mutual information games \changed{for}\removed{of} joint decoding. To study the asymptotics when $k \rightarrow \infty$, we first assume that the collusion channel $\bfp$ satisfies the following regularity condition:

\begin{condition}\label{cond:cont}
There exists a bounded and twice differentiable function $g(w)$ on $[0,1]$ with $g(0) = 0$ and $g(1) = 1$ such that
\begin{equation}\label{eqn:g_restrict}
p_z = g\left(\dfrac{z}{k}\right), \quad \forall z \in \{0, \ldots, k\}.
\end{equation}
\end{condition}

Certainly the condition restricts the colluders' strategy to a smaller space. We however claim that this is a very mild limitation on their power for the following reasons:
\begin{enumerate}
\item For each $k$, the collusion channels take values of $g$ at only the lattice points in $[0,1]$, hence intuitively the \added{class of collusion channels}\removed{space of $g$} satisfying Condition \ref{cond:cont} remains large.
\item Fig. \ref{fig:p_plot_Beta_mark} shows the minimizing collusion channels $\bfp^{(k)}$ for several different embedding distributions. For each case it seems the continuous interpolation of $\bfp$ does converge to some $g$ on $[0,1]$. Indeed, our following analysis still holds if we relax the restriction of (\ref{eqn:g_restrict}) to
\begin{equation}
p_z = g\left( \dfrac{z}{k}\right) + o\left(\dfrac{1}{k}\right), \quad \forall z \in \{0, \ldots, k\}.
\end{equation}
\end{enumerate}

The following reparameterization of the class (\ref{eqn:g_restrict}) of collusion channels will simplify our analysis:
\begin{definition}\label{def:G_J}
Let $G$ and $J$ be functions on $[0,1]$ defined as
\begin{equation}\label{eqn:G}
G(w) \triangleq \cos^{-1} [1-2g(w)]
\end{equation}
and
\begin{equation}\label{eqn:J}
J(w) \triangleq w(1-w) [G'(w)]^2
\end{equation}
where $g(w)$ satisfies Condition \ref{cond:cont}.
\end{definition}

The outline of our asymptotic analysis is as follows: we fix $w \in (0,1)$ and we study the asymptotic\changed{s}\removed{ behavior} of $I^\joint_k(w,\bfp)$. The binomial distribution of $Z$ can be approximated by the Gaussian distribution with mean $kw$ and variance $kw(1-w)$, and by which we can approximate the dominating terms of $I^\joint_k(w,\bfp)$. Theorem \ref{thm:I_asym} yields $I^\joint_k(w,\bfp) \sim J(w)/(k^2 2 \ln 2) $, where $J$ is the transformation of $g$ defined in (\ref{eqn:J}). The maximin game with $J$ as the payoff function can be solved explicitly and hence the asymptotic\changed{s}\removed{ behavior} of the fingerprinting game can be obtained.

The following lemmas will be useful for our analysis:

\begin{lemma}\cite[Sec. 2.5]{Bavaud2009}\label{lem:chisqr}
\begin{equation}
d(r \parallel s) = \frac{(r-s)^2}{s(1-s) 2 \ln 2} + O(|r-s|^3).
\end{equation}
\end{lemma}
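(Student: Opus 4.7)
The plan is a direct second-order Taylor expansion of $d(r\parallel s)$ in the variable $r$ around the point $r=s$. First I would write
\begin{equation*}
d(r\parallel s) \;=\; \frac{1}{\ln 2}\,\bigl[\, r\ln(r/s) + (1-r)\ln((1-r)/(1-s)) \,\bigr] \;=\; \frac{1}{\ln 2}\, D(r),
\end{equation*}
thereby reducing everything to a smooth one-variable expansion (with $s\in(0,1)$ fixed and treated as a parameter).

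Next I would differentiate $D$ twice in $r$:
\begin{equation*}
D'(r) = \ln(r/s) - \ln((1-r)/(1-s)), \qquad D''(r) = \frac{1}{r} + \frac{1}{1-r} = \frac{1}{r(1-r)}.
\end{equation*}
Two elementary observations then do the work: $D(s)=0$, and $D'(s) = \ln 1 - \ln 1 = 0$. Hence Taylor's theorem with remainder gives
\begin{equation*}
D(r) \;=\; \tfrac{1}{2} D''(s)\,(r-s)^2 + \tfrac{1}{6}D'''(\xi)\,(r-s)^3 \;=\; \frac{(r-s)^2}{2\, s(1-s)} + \tfrac{1}{6}D'''(\xi)\,(r-s)^3
\end{equation*}
for some $\xi$ between $r$ and $s$. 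Dividing by $\ln 2$ produces exactly the leading term $(r-s)^2 / (s(1-s)\,2\ln 2)$ stated in the lemma.

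It remains to control the cubic remainder. Using $D'''(r) = -1/r^2 + 1/(1-r)^2$, the third derivative is bounded in absolute value on any compact sub-interval of $(0,1)$ containing $s$ and $r$, so $\tfrac{1}{6}D'''(\xi)(r-s)^3 = O(|r-s|^3)$ as $r\to s$. This is the only place where one has to be a bit cautious: the implicit constant in the $O(\cdot)$ term depends on $s$ (it blows up as $s\to 0$ or $s\to 1$), but since the statement is pointwise in $s\in(0,1)$ and the paper only invokes it for values of $s = \bfal(w)'\bfp$ with $w\in(0,1)$, this pointwise bound suffices. I do not expect any serious obstacle here — the whole argument is a standard two-term Taylor expansion, and the mild subtlety is just recognising that $D'(s)=0$ so the linear term drops out and the quadratic term is what survives.
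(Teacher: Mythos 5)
Your proof is correct. The paper itself offers no proof of this lemma --- it is stated with a citation to an external reference --- and your two-term Taylor expansion of $d(r\parallel s)$ in $r$ about $r=s$, using $D(s)=D'(s)=0$ and $D''(s)=1/(s(1-s))$, is exactly the standard argument one would expect that reference to contain. Your remark that the implied constant in the $O(|r-s|^3)$ term depends on $s$ (blowing up as $s\to 0$ or $s\to 1$) is the right caveat to flag, and it is consistent with how the paper uses the lemma: in (\ref{eqn:chi_approx}) it is applied pointwise with $s=g_k(w)$ for fixed $w\in(0,1)$, where $g_k(w)(1-g_k(w))$ stays bounded away from zero for the channels under consideration.
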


\begin{lemma}\label{lem:chernoff}
For $Z \sim \Binomial(k,w)$, we have
\begin{equation}
Pr[|Z - kw| \geq \sqrt{k\ln k}] \leq 1/k^2.
\end{equation}
\end{lemma}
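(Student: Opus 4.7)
The plan is to recognize this as a textbook binomial tail bound and apply a standard Chernoff/Hoeffding argument. Writing $Z = \sum_{i=1}^k X_i$ with the $X_i$ i.i.d.\ $\mathrm{Bernoulli}(w)$, Hoeffding's inequality gives $Pr[|Z - kw| \geq t] \leq 2 \exp(-2t^2/k)$ for every $t > 0$. Substituting $t = \sqrt{k \ln k}$ immediately yields
\begin{equation}
Pr[|Z - kw| \geq \sqrt{k \ln k}] \leq 2 \exp(-2 \ln k) = \frac{2}{k^2}.
\end{equation}

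A slightly more refined route, more in the spirit of the rest of this section, is to use the one-sided Chernoff--Cram\'er bound $Pr[Z \geq kw + t] \leq \exp(-k D_e(w + t/k \,\|\, w))$ (where $D_e$ denotes the KL divergence in nats) together with its lower-tail analogue. For $t = \sqrt{k\ln k}$ the ratio $t/k$ is $o(1)$, so the Taylor expansion of $D_e$ used in Lemma \ref{lem:chisqr} gives $k D_e(w + t/k \,\|\, w) \sim \ln k / (2 w(1-w)) \geq 2 \ln k$ for $w \in (0,1)$, producing at most $k^{-2}$ per tail and agreeing with the Hoeffding estimate up to the constant.

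The only genuine obstacle is cosmetic: summing the upper and lower tails yields $2/k^2$ rather than the literal $1/k^2$ in the statement. I would handle this either by restricting to $k$ large enough to absorb the factor of $2$, or by noting that only an $O(1/k^2)$ bound is actually needed downstream, since the lemma is invoked merely to truncate the binomial sum to a $\sqrt{k\ln k}$-window around $kw$ when extracting the leading-order behavior of $I^\joint_k(w, \bfp)$; any $o(1)$-probability tail contribution is absorbed into the error term of that asymptotic expansion.
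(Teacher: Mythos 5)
Your proof is correct and takes essentially the same route as the paper, whose entire proof is a one-line citation of Hoeffding's inequality. Your remark about the constant is well taken: the two-sided Hoeffding bound literally gives $2/k^2$ rather than $1/k^2$, so the lemma's stated constant is slightly optimistic, but as you observe only the $O(1/k^2)$ order is used downstream (to truncate the binomial sum to the $\sqrt{k\ln k}$ window), so nothing breaks.
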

\begin{proof}
This is special case of Hoeffding's inequality (see \cite{Hoeffding1963}).
\end{proof}

Recall that the expectation of $Y$ given $W = w$, which we denote by $g_k(w)$ in (\ref{eqn:g_k}), can be written as
$$g_k(w) \triangleq p_{Y|W}(1|w) = \sum_{z=0}^{k} \alpha_z(w) p_z = \sum_{z = 0}^k \alpha_z(w) g\left(\dfrac{z}{k}\right)$$
where
$$\alpha_z(w) \triangleq p_{Z|W}(z|w) = \binom{k}{z} w^z (1-w)^{k-z}$$
is the binomial pmf which concentrates around its mean $kw$ as $k \rightarrow \infty$. $g_k$ is a polynomial in $w$ of degree $\leq k$ and is known as the Bernstein polynomial of order $k$ of the function $g$ \cite{Lorentz1986}. By Condition \ref{cond:cont} $g$ is bounded and the second derivative $g''(w)$ exists, from Bernstein \cite[\S1.6]{Lorentz1986} we have
\begin{equation}
g_k(w) = g(w) + \frac{w(1-w)}{2k} g''(w) + o\left(\frac{1}{k}\right).
\label{eqn:asym_gk}
\end{equation}

On the other hand, for $z = k(w+\epsilon)$, we have
\begin{eqnarray}
p_z &=& g\left(\dfrac{z}{k}\right) = g(w+\epsilon) \nonumber\\
&=& g(w) + \epsilon g'(w) + O\left(\epsilon^2\right).
\label{eqn:asym_pz}
\end{eqnarray}

We now write the asymptotic approximation of $I^\joint_k$ in terms of $g$.

Firstly by the bounds presented in Sec. \ref{ssec:bin_bounds}, we can focus on $w$ and $g$ such that $I^\joint_k(w, \bfp) = \Omega(1/k^2)$. Let $\delta = \sqrt{\ln k/k}$. We have
\begin{eqnarray}
I^\joint_k(w,\bfp) &\stackrel{(a)}{=}& \frac{1}{k} \sum_{z=0}^{k} \alpha_z(w)~d(p_z \parallel g_k(w)) \nonumber\\
&\stackrel{(b)}{\sim}& \frac{1}{k} \sum_{z: |z-kw| < k\delta} \alpha_z(w) ~ d(p_z \parallel g_k(w))
\label{eqn:KLdis}
\end{eqnarray}
where (a) follows from (\ref{eqn:I_joint_2}) and (\ref{eqn:g_k}) and (b) from Lemma \ref{lem:chernoff}. 

Now if we let $z = kw + \eta$, where $\eta = O(\sqrt{k\ln k})$, then by (\ref{eqn:asym_pz}) we have
\begin{equation}
p_z = g(w) + \frac{\eta}{k} g'(w) + O\left(\frac{\eta^2}{k^2}\right)
\end{equation}
and combining with (\ref{eqn:asym_gk}) we have
\begin{equation}
p_z - g_k(w) = \frac{\eta}{k} g'(w) + O\left(\frac{\ln k}{k}\right).
\label{eqn:qk_fk_dif}
\end{equation}
for $\eta = \omega(1)$. The contribution to (\ref{eqn:KLdis}) for $\eta = O(1)$ decays exponentially with $k$ and thus can be neglected.

By (\ref{eqn:qk_fk_dif}) and Lemma \ref{lem:chisqr}, we have
\begin{equation}\label{eqn:chi_approx}
d(p_z \parallel g_k(w)) = \frac{[p_z-g_k(w)]^2}{g_k(w)(1-g_k(w)) 2 \ln 2} + o\left(\frac{1}{k}\right)
\end{equation}
and hence (\ref{eqn:KLdis}) yields
\begin{eqnarray}
I^\joint_k(w,\bfp) &\stackrel{(a)}{\sim}& \frac{1}{k2\ln 2}\sum_{z: |z-kw| < k\delta} \alpha_z(w) ~ \frac{[p_z-g_k(w)]^2}{g_k(w)(1-g_k(w))} \nonumber \\
&\stackrel{(b)}{\sim}& \frac{[g'(w)]^2}{k^3 g(w)(1-g(w)) 2\ln 2} \sum_{z: |z-kw| < k\delta} \alpha_z(w) ~ (z-kw)^2 \nonumber\\
&\stackrel{(c)}{\sim}& \frac{[g'(w)]^2 w(1-w)}{k^2 g(w)(1-g(w)) 2\ln 2} \nonumber\\
&\stackrel{(d)}{=}& \frac{1}{k^2 2\ln 2} J(w)
\end{eqnarray}
where (a) follows from (\ref{eqn:chi_approx}), (b) from (\ref{eqn:asym_gk}) and (\ref{eqn:qk_fk_dif}), (c) from Lemma \ref{lem:chernoff}, and (d) directly from the definitions in (\ref{eqn:G}) and (\ref{eqn:J}). The following theorem concludes what we have proved thus far:

\begin{theorem}\label{thm:I_asym}
Assume that Condition \ref{cond:cont} is satisfied, then
\begin{equation}
I^\joint_k(w,\bfp) \sim \frac{1}{k^2 2\ln 2} J(w), \quad \forall w \in (0,1).
\end{equation}
\end{theorem}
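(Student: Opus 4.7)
The plan is to carry out carefully the asymptotic expansion already sketched in the paragraphs preceding the theorem, turning the chain of $\sim$ relations into a rigorous argument with controlled error terms. I would start from the Kullback-Leibler representation (\ref{eqn:I_joint_2}),
\begin{equation*}
I^\joint_k(w,\bfp) = \frac{1}{k}\sum_{z=0}^{k}\alpha_z(w)\,d\!\left(p_z\,\|\,g_k(w)\right),
\end{equation*}
and exploit the fact that under Condition \ref{cond:cont}, $p_z = g(z/k)$ is a smooth function of $z/k$ while $\alpha_z(w)$ concentrates sharply around $z=kw$.

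The first step is a truncation: apply Lemma \ref{lem:chernoff} with the window width $\delta = \sqrt{(\ln k)/k}$ to restrict the sum to $|z-kw|<k\delta$. Since the divergence $d(p_z\|g_k(w))$ is bounded (for $w$ fixed in $(0,1)$ and $k$ large, $g_k(w)$ is bounded away from $0$ and $1$ by (\ref{eqn:asym_gk})), the discarded tail contributes $O(1/k^3)$, which is $o(1/k^2)$ and hence negligible. Inside the window, I would use the Bernstein approximation (\ref{eqn:asym_gk}) together with the Taylor expansion (\ref{eqn:asym_pz}) to derive
\begin{equation*}
p_z - g_k(w) = \frac{z-kw}{k}\,g'(w) + O\!\left(\frac{\ln k}{k}\right),
\end{equation*}
uniformly over $|z-kw|<k\delta$. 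Then Lemma \ref{lem:chisqr} replaces each divergence by its chi-squared approximation, with cubic error absorbed into $o(1/k^2)$.

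Substituting and pulling the $w$-dependent factors outside the sum reduces the problem to evaluating $\sum_z \alpha_z(w)(z-kw)^2$, which is simply the variance of $\Binomial(k,w)$ and equals $kw(1-w)$. Combining everything yields
\begin{equation*}
I^\joint_k(w,\bfp) \sim \frac{[g'(w)]^2\, w(1-w)}{k^2\, g(w)(1-g(w))\, 2\ln 2}.
\end{equation*}
A short check using $1-(1-2g)^2 = 4g(1-g)$ gives $G'(w)=g'(w)/\sqrt{g(w)(1-g(w))}$, so the right side equals $J(w)/(k^2 2\ln 2)$ by (\ref{eqn:G})--(\ref{eqn:J}), which is exactly what we want.

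The main obstacle will be the bookkeeping of the three error sources -- the Chernoff tail, the Taylor remainder in $p_z$, and the cubic error in Pinsker's refinement -- and showing that each is uniformly $o(1/k^2)$ for fixed $w\in(0,1)$. The regime $w\in(0,1)$ keeps $w(1-w)$ bounded away from $0$, and Condition \ref{cond:cont} ensures $g'(w),\,g''(w)$ are bounded so the Taylor remainders behave as claimed; the only subtlety is verifying that $g_k(w)$ stays bounded away from $0$ and $1$ for large $k$ whenever $g(w)\in(0,1)$, which follows directly from (\ref{eqn:asym_gk}). Cases where $g(w)\in\{0,1\}$ need not be handled separately, as then the leading term $J(w)$ vanishes and the bound $I^\joint_k = O(1/k^2)$ from Sec.~\ref{ssec:bin_bounds} already gives the right scaling relative to $J(w)$.
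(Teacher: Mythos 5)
Your proposal follows essentially the same route as the paper's own derivation: the KL representation (\ref{eqn:I_joint_2}), truncation via Lemma \ref{lem:chernoff} with window $\sqrt{(\ln k)/k}$, the Bernstein/Taylor expansion giving $p_z-g_k(w)=\frac{z-kw}{k}g'(w)+O(\frac{\ln k}{k})$, the chi-squared approximation of Lemma \ref{lem:chisqr}, and the binomial variance $kw(1-w)$, ending with the identification of $J(w)$ via $G'(w)=g'(w)/\sqrt{g(w)(1-g(w))}$. The only point to flag is your final remark: when $g(w)\in\{0,1\}$ at an interior point, $J(w)$ is a $0/0$ form (its limit is governed by $g''(w)$, not necessarily zero), so that case is not as automatic as you suggest — though the paper likewise restricts attention to $w$ with $I^\joint_k(w,\bfp)=\Omega(1/k^2)$ and does not treat it separately.
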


The joint fingerprinting game of (\ref{eqn:bin_minimax}) can now be approximated by the game with $J$ as its payoff function. We consider continuous probability distributions $f_W$ satisfying the following condition:

\begin{condition}\label{cond:f_int_finite}
The pdf $f_W$ is continuous on $[0,1]$ with
\begin{equation}
\int_0^1 \frac{dw}{f_W(w)w(1-w)} < \infty.
\end{equation}
\end{condition}

The following lemma shows the solution to the minimization problem with $J$ as its payoff fuction.
\begin{lemma}\label{lem:min_J}
Let $g(w)$ satisfy Condition \ref{cond:cont} and fix $f_W$ satisfying Condition \ref{cond:f_int_finite}. Then
\begin{equation}
\min_g \bbE_{f_W} \left[ J(W) \right] = \min_g \displaystyle\int_0^1 J(w) f_W(w)dw = \pi^2 \left[ \displaystyle\int_0^1 \frac{dw}{f_W(w)w(1-w)} \right]^{-1}
\end{equation}
where $J$ is defined in (\ref{eqn:J}). The minimum is achieved by
\begin{equation}\label{eqn:gopt}
g_\opt(w) = \frac{1}{2} \left[ 1- \cos \left( \frac{\pi\int_0^w \frac{dv}{f_W(v)v(1-v)}}{\int_0^1 \frac{dv}{f_W(v)v(1-v)}} \right) \right].
\end{equation}
\end{lemma}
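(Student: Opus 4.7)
The plan is to rewrite the integral $\int_0^1 J(w) f_W(w)\,dw$ purely in terms of $G$, apply the Cauchy--Schwarz inequality exploiting the fixed boundary values $G(0)=0$ and $G(1)=\pi$, and then read off the minimizer from the equality condition. Concretely, from Definition \ref{def:G_J}, $G(w)=\cos^{-1}[1-2g(w)]$ is a smooth reparameterization of $g$. Since $g(0)=0$ and $g(1)=1$ by Condition \ref{cond:cont}, we have $G(0)=0$ and $G(1)=\pi$, so for any admissible $g$,
\begin{equation}
\pi \;=\; G(1)-G(0) \;=\; \int_0^1 G'(w)\,dw.
\end{equation}

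Next I would split the integrand by writing
\begin{equation}
G'(w) \;=\; \Bigl[\sqrt{w(1-w)\,f_W(w)}\;G'(w)\Bigr]\cdot\Bigl[\sqrt{w(1-w)\,f_W(w)}\Bigr]^{-1},
\end{equation}
and apply Cauchy--Schwarz to obtain
\begin{equation}
\pi^2 \;\leq\; \left(\int_0^1 w(1-w)\,[G'(w)]^2\,f_W(w)\,dw\right)\left(\int_0^1 \frac{dw}{w(1-w)\,f_W(w)}\right).
\end{equation}
Since the first factor on the right is exactly $\bbE_{f_W}[J(W)]$ by (\ref{eqn:J}), this yields the claimed lower bound
\begin{equation}
\int_0^1 J(w) f_W(w)\,dw \;\geq\; \pi^2\left[\int_0^1 \frac{dw}{f_W(w)\,w(1-w)}\right]^{-1}.
\end{equation}

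To verify tightness, I would identify the Cauchy--Schwarz equality condition: the two factors must be proportional, i.e., $G'(w) = c/[w(1-w)f_W(w)]$ for some constant $c>0$. Integrating and matching the boundary conditions $G(0)=0$, $G(1)=\pi$ determines $c$ and gives
\begin{equation}
G_\opt(w) \;=\; \pi\,\frac{\int_0^w \frac{dv}{f_W(v)v(1-v)}}{\int_0^1 \frac{dv}{f_W(v)v(1-v)}},
\end{equation}
from which $g_\opt(w)=\tfrac{1}{2}[1-\cos G_\opt(w)]$ recovers (\ref{eqn:gopt}). A quick substitution confirms equality is attained, so the lower bound is sharp.

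The only subtlety is confirming that $g_\opt$ is an admissible strategy, i.e., that it satisfies Condition \ref{cond:cont} (bounded on $[0,1]$ with a continuous second derivative, and $g_\opt(0)=0$, $g_\opt(1)=1$). Boundedness and the endpoint values are immediate from the cosine parameterization; smoothness follows because Condition \ref{cond:f_int_finite} ensures the integrand $1/[f_W(v)v(1-v)]$ is integrable and continuous on $(0,1)$, so $G_\opt$ is $C^1$ on $[0,1]$ and $C^2$ on the interior. This is the step that will require the most care, but it is essentially a regularity check once the variational calculation is in hand.
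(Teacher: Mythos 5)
Your proof is correct and follows essentially the same route as the paper: both reparameterize via $G(w)=\cos^{-1}[1-2g(w)]$, apply the Cauchy--Schwarz inequality using the boundary values $G(0)=0$, $G(1)=\pi$, and recover $g_\opt$ from the equality (proportionality) condition. Your closing remark on checking that $g_\opt$ actually satisfies Condition \ref{cond:cont} is a reasonable extra precaution that the paper itself does not spell out, but it does not change the argument.
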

\begin{proof}
We have
\begin{eqnarray}
\int J(w) f_W(w)dw &=& \int_0^1 w(1-w)[G'(w)]^2 f_W(w)dw \nonumber\\
&\stackrel{(a)}{\geq}& \dfrac{\left[\int_0^1 G'(w) dw \right]^2}{\int_0^1 \frac{dw}{w(1-w)f_W(w)}} \nonumber \\
&\stackrel{(b)}{=}& \pi^2 \left[ \displaystyle\int_0^1 \frac{dw}{f_W(w)w(1-w)} \right]^{-1}
\label{eqn:asym_J}
\end{eqnarray}
where (a) follows from the Cauchy-Schwarz inequality and (b) follows from the boundary conditions $G(0) = 0$ and $G(1) = \pi$ following directly from Condition \ref{cond:cont} and the definition of (\ref{eqn:G}). Equality holds in (a) when
\begin{equation}
G'_\opt(w) = \frac{\pi}{\int_0^1 \frac{dv}{f_W(v)v(1-v)}} \cdot \frac{1}{f_W(w)w(1-w)},
\end{equation}
which leads us to (\ref{eqn:gopt}) by (\ref{eqn:G}).
\end{proof}

\begin{corollary}
For $f_W$ satisfying Condition \ref{cond:f_int_finite}, we have
\begin{equation}\label{eqn:C_fW_asym}
C^\joint_k(f_W,\scrP_\marking) \sim \frac{\pi^2}{k^2 2\ln 2}  \left[ \displaystyle\int_0^1 \frac{dw}{f_W(w)w(1-w)} \right]^{-1}.
\end{equation}
\end{corollary}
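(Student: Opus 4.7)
The plan is to establish the asymptotic equality by proving matching achievability and converse bounds, in both cases transferring the analysis from the discrete payoff $I_k^\joint$ to the continuous kernel $J$ via Theorem \ref{thm:I_asym} and then invoking the explicit minimax of Lemma \ref{lem:min_J}.

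For the achievability direction $C_k^\joint(f_W,\scrP_\marking) \lesssim \frac{\pi^2}{k^2 2\ln 2}\left[\int_0^1 \frac{dw}{f_W(w)w(1-w)}\right]^{-1}$, I would construct an explicit attack $\bfp_k$ by sampling the minimizer of Lemma \ref{lem:min_J}: set $p_z^{(k)} := g_\opt(z/k)$ for $z=0,\ldots,k$, where $g_\opt$ is given by (\ref{eqn:gopt}). Under Condition \ref{cond:f_int_finite}, $g_\opt$ is twice differentiable on $[0,1]$ with $g_\opt(0)=0$ and $g_\opt(1)=1$, so $\bfp_k\in\scrP_\marking$ and the \emph{same} $g=g_\opt$ fulfills Condition \ref{cond:cont} for every $k$. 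Theorem \ref{thm:I_asym} then yields $k^2 I_k^\joint(w,\bfp_k)\to J_\opt(w)/(2\ln 2)$ pointwise on $(0,1)$. I would lift this to expectations by dominated convergence, using that $J_\opt$ is bounded on $[0,1]$ (since $g_\opt$ has bounded derivative away from the endpoints, and the $w(1-w)$ factor in $J$ neutralizes the endpoint singularities of $G_\opt'$). Then $C_k^\joint(f_W,\scrP_\marking)\leq\bbE_{f_W}[I_k^\joint(W,\bfp_k)]\sim \bbE_{f_W}[J_\opt(W)]/(k^2 2\ln 2)$, and Lemma \ref{lem:min_J} evaluates the right-hand side to the target.

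For the converse direction, let $\bfp^{(k)}$ be a minimizer in $\scrP_\marking$ (guaranteed by compactness and continuity of the payoff). The idea is to associate to each $\bfp^{(k)}$ a function $g^{(k)}$ satisfying Condition \ref{cond:cont} (an interpolant), apply Theorem \ref{thm:I_asym} to obtain $k^2\bbE_{f_W}[I_k^\joint(W,\bfp^{(k)})] \to \bbE_{f_W}[J^{(k)}(W)]/(2\ln 2)$, and then conclude by the bound $\bbE_{f_W}[J^{(k)}(W)]\geq\pi^2\left[\int_0^1 \frac{dw}{f_W(w)w(1-w)}\right]^{-1}$ from Lemma \ref{lem:min_J}. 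Combining with the upper bound closes the $\sim$.

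The main obstacle is this converse step, precisely because the minimizer $\bfp^{(k)}$ is defined only at $k+1$ lattice points and is \emph{not} a priori the restriction of a single fixed smooth $g$: Theorem \ref{thm:I_asym} is stated for a fixed $g$, so one needs a version uniform along the sequence $g^{(k)}$. The plan here is twofold. First, exploit the paper's noted relaxation of (\ref{eqn:g_restrict}) to $p_z = g(z/k)+o(1/k)$ and interpolate $\bfp^{(k)}$ by a $C^2$ function $\tilde g^{(k)}$ whose derivative is controlled by finite differences of $\bfp^{(k)}$. Second, show that any $\bfp^{(k)}$ exhibiting oscillations incompatible with a bounded interpolant (so that $\tilde g^{(k)}$ would blow up) already admits the sharper lower bound $\bbE_{f_W}[I_k^\joint(W,\bfp^{(k)})] \gtrsim \pi^2/(k^2 2\ln 2)[\cdots]^{-1}$ by direct estimates — for instance, by combining Pinsker's inequality and the Cauchy–Schwarz step of Theorem \ref{thm:lb_general} with the refined local chi-squared expansion of Lemma \ref{lem:chisqr} rather than the crude Pinsker bound, gaining back the factor $\pi^2/2$. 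Verifying that the asymptotic of Theorem \ref{thm:I_asym} is in fact uniform over such interpolants $\tilde g^{(k)}$, together with a dominated-convergence envelope for $k^2 I_k^\joint(\cdot,\bfp^{(k)})$, is the technical heart of the argument.
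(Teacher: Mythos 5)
Your core argument --- construct the attack $p_z = g_\opt(z/k)$ from (\ref{eqn:gopt}) for the upper bound, and invoke Theorem \ref{thm:I_asym} together with Lemma \ref{lem:min_J} for the lower bound --- is exactly the paper's proof, which is a one-line ``follows directly from Theorem \ref{thm:I_asym} and Lemma \ref{lem:min_J}.'' The difference is that the paper reads the minimization over $\scrP_\marking$ as implicitly restricted to collusion channels satisfying the standing regularity assumption of the section (Condition \ref{cond:cont}), so the minimizer is by fiat of the form $p_z = g(z/k)$ for a smooth $g$ and no uniformity issue arises: one simply minimizes $\bbE_{f_W}[J(W)]$ over admissible $g$ and reads off the value from Lemma \ref{lem:min_J}. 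You correctly identify that if $\scrP_\marking$ is taken unrestricted, the converse direction needs Theorem \ref{thm:I_asym} to hold uniformly along a sequence of minimizers $\bfp^{(k)}$ that need not interpolate any fixed smooth $g$; this is a genuine gap in the unrestricted reading, and the paper itself concedes in Sec.~\ref{ssec:bin_Bayes} that removing Condition \ref{cond:cont} is only a conjecture supported numerically. Your proposed repair (a $C^2$ interpolant controlled by finite differences, plus a direct Pinsker/Cauchy--Schwarz/chi-squared estimate for wildly oscillating channels) is a reasonable program but is not carried out, and the oscillatory case is precisely where the local expansion of Lemma \ref{lem:chisqr} breaks down, so you should not present it as established. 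Two smaller points: Condition \ref{cond:f_int_finite} only assumes $f_W$ continuous, so $g_\opt$ need not be twice differentiable (you need some smoothness of $f_W$, or a mollified version of $g_\opt$, for your achievability channel to satisfy Condition \ref{cond:cont}); and $J_\opt$ is not bounded on $[0,1]$ for every $f_W$ meeting Condition \ref{cond:f_int_finite} (e.g.\ $f_W(w)\propto w^{-1+\epsilon}$ near $0$ with $\epsilon\in(1/2,1)$), although $\bbE_{f_W}[J_\opt(W)]$ is always finite, so the dominated-convergence step needs a different dominating function.
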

\begin{proof}
This follows directly from Theorem \ref{thm:I_asym} and Lemma \ref{lem:min_J}.
\end{proof}

\begin{corollary}\label{cor:cap_lb}
\begin{equation}\label{eqn:cap_lb}
C^\joint_k(f_W^*,\scrP_\marking) \sim \frac{1}{k^2 2\ln 2}.
\end{equation}
\end{corollary}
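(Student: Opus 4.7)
The plan is to derive Corollary \ref{cor:cap_lb} directly from the preceding (unlabeled) corollary by specializing $f_W$ to the arcsine density $f_W^*$. All of the work --- the Bernstein-polynomial asymptotics in Theorem \ref{thm:I_asym}, the Cauchy--Schwarz optimization in Lemma \ref{lem:min_J}, and the identification of the minimizing collusion strategy $g_\opt$ --- has already been carried out, so what remains is essentially a single substitution and one integral evaluation.

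First I would check that $f_W^*$ fits into the framework of Condition \ref{cond:f_int_finite}. The density $f_W^*(w) = [\pi\sqrt{w(1-w)}]^{-1}$ is continuous on $(0,1)$, and the integrand $1/[f_W^*(w)\,w(1-w)]$ equals $\pi/\sqrt{w(1-w)}$, whose endpoint singularities are integrable. If Condition \ref{cond:f_int_finite} is to be read strictly as continuity on the closed interval, I would approximate $f_W^*$ by a family of densities $f_W^{*,\epsilon}$ that agree with $f_W^*$ on $[\epsilon, 1-\epsilon]$, are extended to bounded continuous functions on the whole unit interval, and are renormalized. The preceding corollary applies to each such $f_W^{*,\epsilon}$, and letting $\epsilon \to 0$ recovers the desired asymptotic rate via continuity of the functional $f_W \mapsto \int_0^1 dw / [f_W(w)\,w(1-w)]$.

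Second, I would evaluate
$$\int_0^1 \frac{dw}{f_W^*(w)\,w(1-w)} = \pi \int_0^1 \frac{dw}{\sqrt{w(1-w)}} = \pi^2,$$
which is precisely the equality case already isolated in Lemma \ref{lem:lb_f_int}, so no fresh computation is actually required --- I can just cite that lemma. Substituting into the preceding corollary then yields
$$C_k^\joint(f_W^*, \scrP_\marking) \sim \frac{\pi^2}{k^2 \, 2\ln 2} \cdot \frac{1}{\pi^2} = \frac{1}{k^2 \, 2\ln 2},$$
which is the claim.

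The only potential obstacle is the minor boundary technicality against Condition \ref{cond:f_int_finite}; everything else is bookkeeping. Since the entire asymptotic machinery of Section \ref{sec:bin_asym} was built with the arcsine distribution in mind as the extremizer (the Cauchy--Schwarz step in the proof of Lemma \ref{lem:min_J} achieves equality exactly when $f_W \propto 1/\sqrt{w(1-w)}$), the result is essentially built-in and the proof is only a few lines.
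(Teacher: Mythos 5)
Your proposal is correct and follows essentially the same route as the paper: specialize the preceding corollary (\ref{eqn:C_fW_asym}) to $f_W = f_W^*$ and invoke Lemma \ref{lem:lb_f_int} to evaluate $\int_0^1 dw/[f_W^*(w)w(1-w)] = \pi^2$, so the two factors of $\pi^2$ cancel. Your extra $\epsilon$-regularization to reconcile the arcsine density's endpoint singularities with the literal reading of Condition \ref{cond:f_int_finite} is a careful touch that the paper silently glosses over, but it does not change the argument.
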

\begin{proof}
The right-hand side of (\ref{eqn:C_fW_asym}) is maximized when $f_W = f_W^*$ by Lemma \ref{lem:lb_f_int}. Also note that by (\ref{eqn:gopt}) we have $g_\opt(w) = w$, which leads us to the interleaving attack.
\end{proof}

\begin{corollary}\label{cor:cap_ub}
\added{The interleaving attack is an ``equalizing strategy'' for the colluders that makes the payoff function $J(w)$ asymptotically independent of $w$:}
\begin{equation}\label{eqn:cap_ub}
I^\joint_k(w,\bfp^*) \sim \frac{1}{k^2 2\ln 2}, \quad \forall w \in (0,1).
\end{equation}
\end{corollary}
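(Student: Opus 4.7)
The plan is to apply Theorem \ref{thm:I_asym} directly to the interleaving attack $\bfp^*$, since $p^*_z = z/k$ manifestly has the form $g(z/k)$ with $g(w) = w$. This $g$ trivially satisfies Condition \ref{cond:cont}: it is bounded, twice (indeed infinitely) differentiable on $[0,1]$, with $g(0) = 0$ and $g(1) = 1$. So the asymptotic expression of Theorem \ref{thm:I_asym} is available, and the entire task reduces to evaluating $J(w)$ for this particular $g$.

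Next I would unpack Definition \ref{def:G_J}. With $g(w) = w$ we get $G(w) = \cos^{-1}(1 - 2w)$, and differentiating,
\begin{equation*}
G'(w) \;=\; \frac{2}{\sqrt{1 - (1-2w)^2}} \;=\; \frac{1}{\sqrt{w(1-w)}}, \qquad w \in (0,1).
\end{equation*}
Substituting into the definition $J(w) = w(1-w)[G'(w)]^2$ yields $J(w) \equiv 1$ on $(0,1)$. This constancy is precisely the ``equalizing'' property in the statement: once the game is reduced to its asymptotic quadratic approximation, the interleaving attack flattens the payoff against every $w$.

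Combining the two steps, Theorem \ref{thm:I_asym} delivers $I^\joint_k(w,\bfp^*) \sim \frac{1}{k^2 2\ln 2} J(w) = \frac{1}{k^2 2\ln 2}$ for every $w \in (0,1)$, which is the claim. There is essentially no obstacle to overcome: the analytical work was absorbed into Theorem \ref{thm:I_asym}, and the corollary is a one-line verification that $J \equiv 1$ under $g(w) = w$. As a consistency check, this matches Corollary \ref{cor:cap_lb}: there the optimizing $g_\opt$ for the arcsine prior $f_W^*$ was already found to be $g_\opt(w) = w$, so the interleaving attack both achieves the minimum against $f_W^*$ and produces the uniform-in-$w$ value $(k^2 2\ln 2)^{-1}$ obtained here.
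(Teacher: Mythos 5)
Your proposal is correct and is essentially identical to the paper's own proof: both set $g(w)=w$, note that this satisfies Condition \ref{cond:cont} and corresponds to the interleaving attack via (\ref{eqn:g_restrict}), verify $J(w)\equiv 1$ from Definition \ref{def:G_J}, and invoke Theorem \ref{thm:I_asym}. You merely spell out the computation of $G'(w)$ explicitly, which the paper leaves implicit.
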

\begin{proof}
Let $g(w) = w$. Then $\bfp$ becomes the interleaving attack by (\ref{eqn:g_restrict}) and $J(w) \equiv 1$ by Definition \ref{def:G_J}. \removed{This shows that the interleaving attack is the ``equalizing strategy'' for the colluders which makes the payoff function asymptotically flat.}
\end{proof}

\begin{corollary}
The fingerprinting capacity under the marking assumption satisfies
\begin{equation}
C^\joint_k(\scrP_W,\scrP_\marking) \sim \dfrac{1}{k^2 2 \ln2}.
\end{equation}
Furthermore, the arcsine distribution $f_W^*$ and the interleaving attack $\bfp^*$ are the respective maximizing and minimizing strategies that achieve the asymptotic capacity value.
\end{corollary}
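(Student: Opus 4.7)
The plan is to sandwich $C^\joint_k(\scrP_W,\scrP_\marking)$ between a lower bound furnished by fixing the embedder to the arcsine distribution and an upper bound furnished by fixing the coalition to the interleaving attack, and to observe that Corollaries \ref{cor:cap_lb} and \ref{cor:cap_ub} make these two bounds coincide to leading order.

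For the lower bound, I would write
$$C^\joint_k(\scrP_W,\scrP_\marking) = \max_{P_W\in\scrP_W}\min_{\bfp}\bbE_{P_W}\!\left[I^\joint_k(W,\bfp)\right] \geq \min_{\bfp}\bbE_{f_W^*}\!\left[I^\joint_k(W,\bfp)\right] = C^\joint_k(f_W^*,\scrP_\marking),$$
so that Corollary \ref{cor:cap_lb} gives $C^\joint_k(\scrP_W,\scrP_\marking) \gtrsim \frac{1}{k^2 2\ln 2}$. For the upper bound I would exchange roles, using that the inner minimization can be upper-bounded by evaluating at any single channel, in particular $\bfp^*$:
$$C^\joint_k(\scrP_W,\scrP_\marking) \leq \max_{P_W\in\scrP_W}\bbE_{P_W}\!\left[I^\joint_k(W,\bfp^*)\right] = \sup_{w\in[0,1]} I^\joint_k(w,\bfp^*).$$
Corollary \ref{cor:cap_ub} shows that, pointwise in $w\in(0,1)$, $k^2 2\ln 2\cdot I^\joint_k(w,\bfp^*) \to 1$.

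The main obstacle is therefore to promote the pointwise asymptotic of Corollary \ref{cor:cap_ub} to the uniform statement $\sup_{w\in[0,1]} I^\joint_k(w,\bfp^*) \lesssim \frac{1}{k^2 2\ln 2}$. I would revisit the proof of Theorem \ref{thm:I_asym} with $g(w)=w$ (the interleaving choice): here $g_k(w)=w$ exactly so that the Bernstein polynomial bias vanishes, and the chi-squared approximation of the KL divergence in Lemma \ref{lem:chisqr} carries a remainder controlled by $|p_z-g_k(w)|/[g_k(w)(1-g_k(w))]$, which is uniform on any interval $[\epsilon,1-\epsilon]$. For $w$ in shrinking neighborhoods of $\{0,1\}$, the symmetry $I^\joint_k(w,\bfp^*) = I^\joint_k(1-w,\bfp^*)$ together with the uniform bound $I^\joint_k(w,\bfp^*) \leq \frac{1}{k^2\ln 2}$ from Theorem \ref{thm:ub_joint} (or a direct Taylor estimate of $h(w)-\sum_z \alpha_z(w)h(z/k)$) confines the supremum to an interior maximum, so the pointwise limit and the supremum must agree asymptotically.

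Combining the two bounds yields $C^\joint_k(\scrP_W,\scrP_\marking) \sim \frac{1}{k^2 2\ln 2}$. The asymptotic saddle-point character of $(f_W^*,\bfp^*)$ is a byproduct of the sandwich: the lower bound is already attained by choosing $P_W = f_W^*$ (a single embedder strategy forces the value to the asymptote), and the upper bound is attained by choosing $\bfp = \bfp^*$ (a single attack forces the value to the same asymptote). Equivalently, by Theorem \ref{thm:minimax} the game has a saddle-point solution for every $k$, and the pair $(f_W^*,\bfp^*)$ is asymptotically indistinguishable from any such solution as $k\to\infty$.
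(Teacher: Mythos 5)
Your proposal matches the paper's own argument: the corollary is proved there exactly by combining the asymptotic lower bound of Corollary \ref{cor:cap_lb} (fixing the embedder to the arcsine distribution) with the asymptotic upper bound of Corollary \ref{cor:cap_ub} (fixing the coalition to the interleaving attack). Your additional care in promoting the pointwise statement $I^\joint_k(w,\bfp^*) \sim (k^2 2\ln 2)^{-1}$ to a bound on $\sup_w I^\joint_k(w,\bfp^*)$ addresses a uniformity point the paper leaves implicit, but the overall route is the same.
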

\begin{proof}
The asymptotic relations (\ref{eqn:cap_lb}) and (\ref{eqn:cap_ub}) establish matching asymptotic lower and upper bounds on $C^\joint_k(\scrP_W,\scrP_\marking)$ respectively.
\end{proof}

\subsection{Numerical Results for Beta Distributions}\label{ssec:Beta}

\begin{figure*}[t]

\begin{minipage}{.32\textwidth}
  \centering
  \includegraphics[width=.9\textwidth]{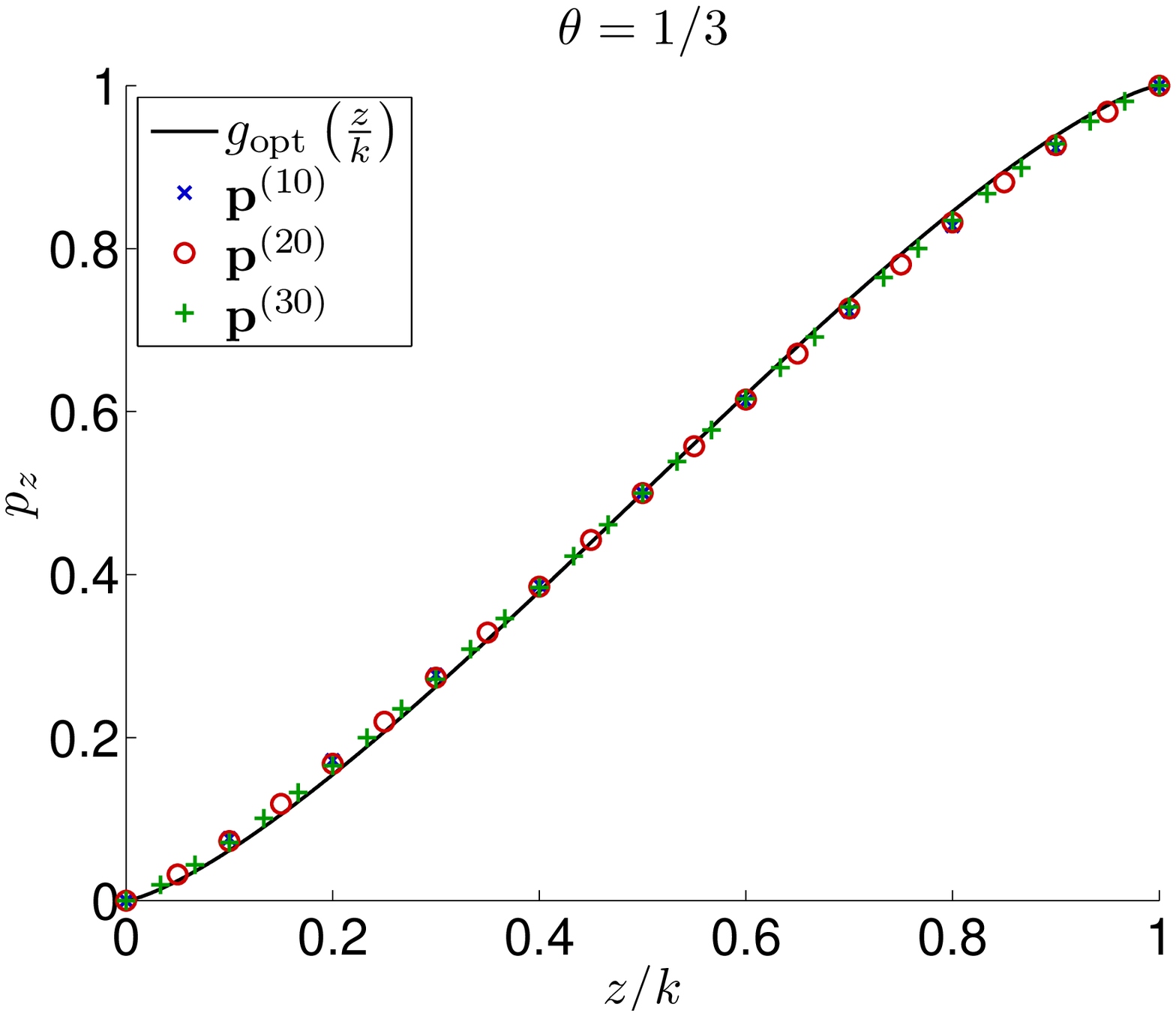}
\end{minipage}
\hfill
\begin{minipage}{.32\textwidth}
  \centering
  \includegraphics[width=.9\textwidth]{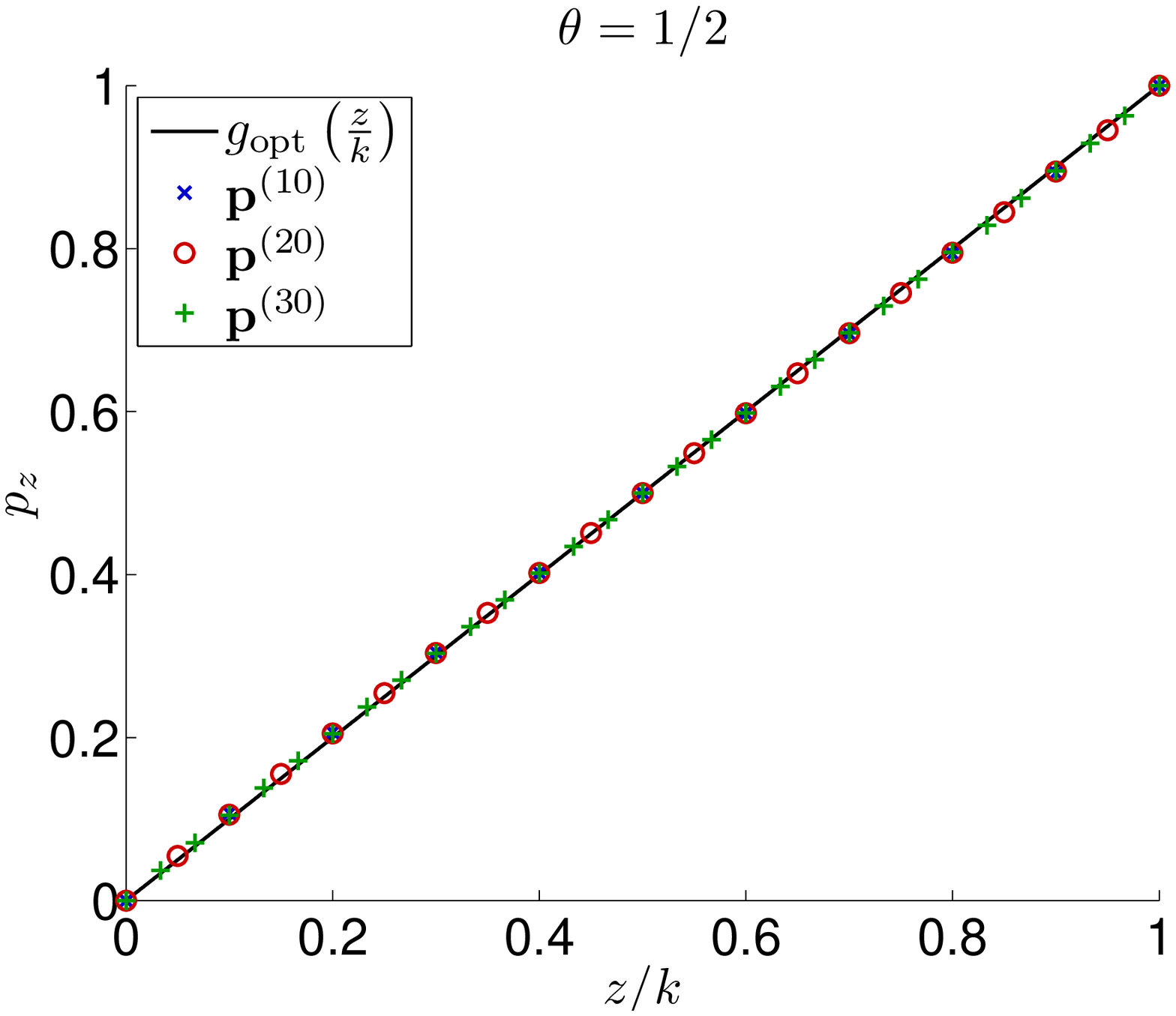}
\end{minipage}
\hfill
\begin{minipage}{.32\textwidth}
  \centering
  \includegraphics[width=.9\textwidth]{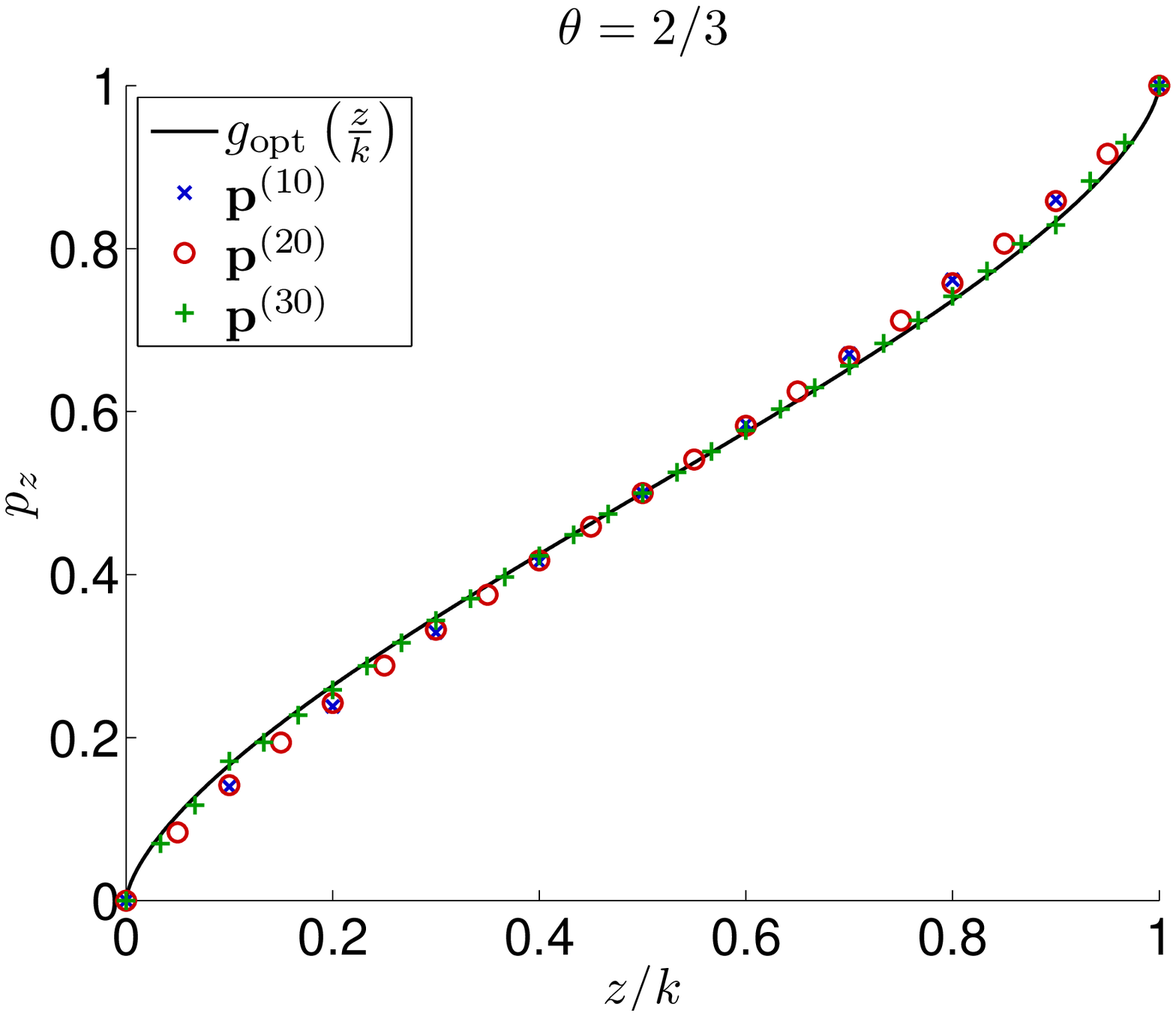}
\end{minipage}

\caption{$g_{\opt}(z/k)$ and minimizing collusion channels ${\bfp^{(k)}}$ for $k = 10$, $20$, and $30$ and $P_W = \Beta(\theta,\theta)$}
\label{fig:p_plot_Beta_mark}

\end{figure*}

\begin{figure*}[t]

\begin{minipage}{.32\textwidth}
  \centering
  \includegraphics[width=.9\textwidth]{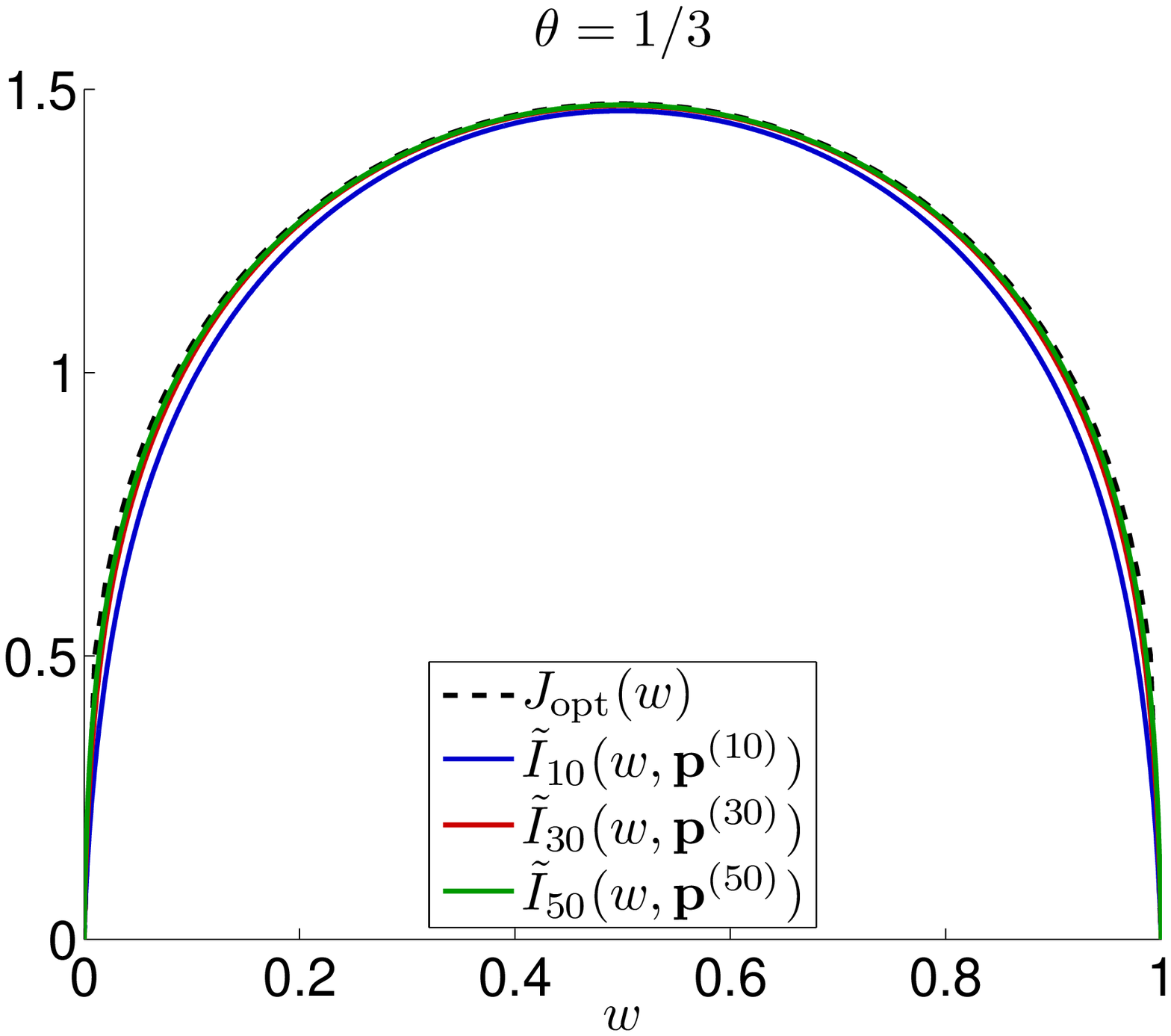}
\end{minipage}
\hfill
\begin{minipage}{.32\textwidth}
  \centering
  \includegraphics[width=.9\textwidth]{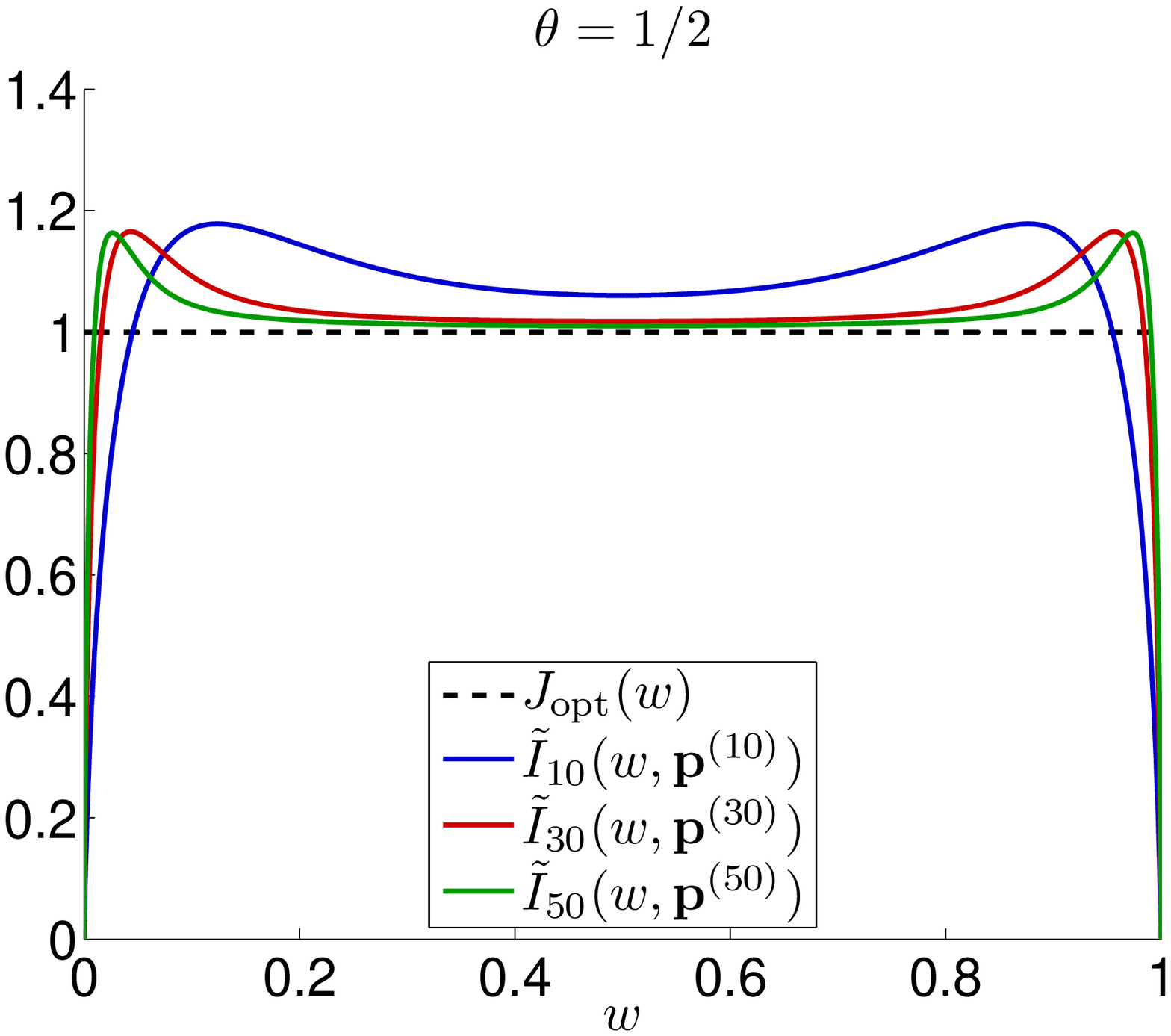}
\end{minipage}
\hfill
\begin{minipage}{.32\textwidth}
  \centering
  \includegraphics[width=.9\textwidth]{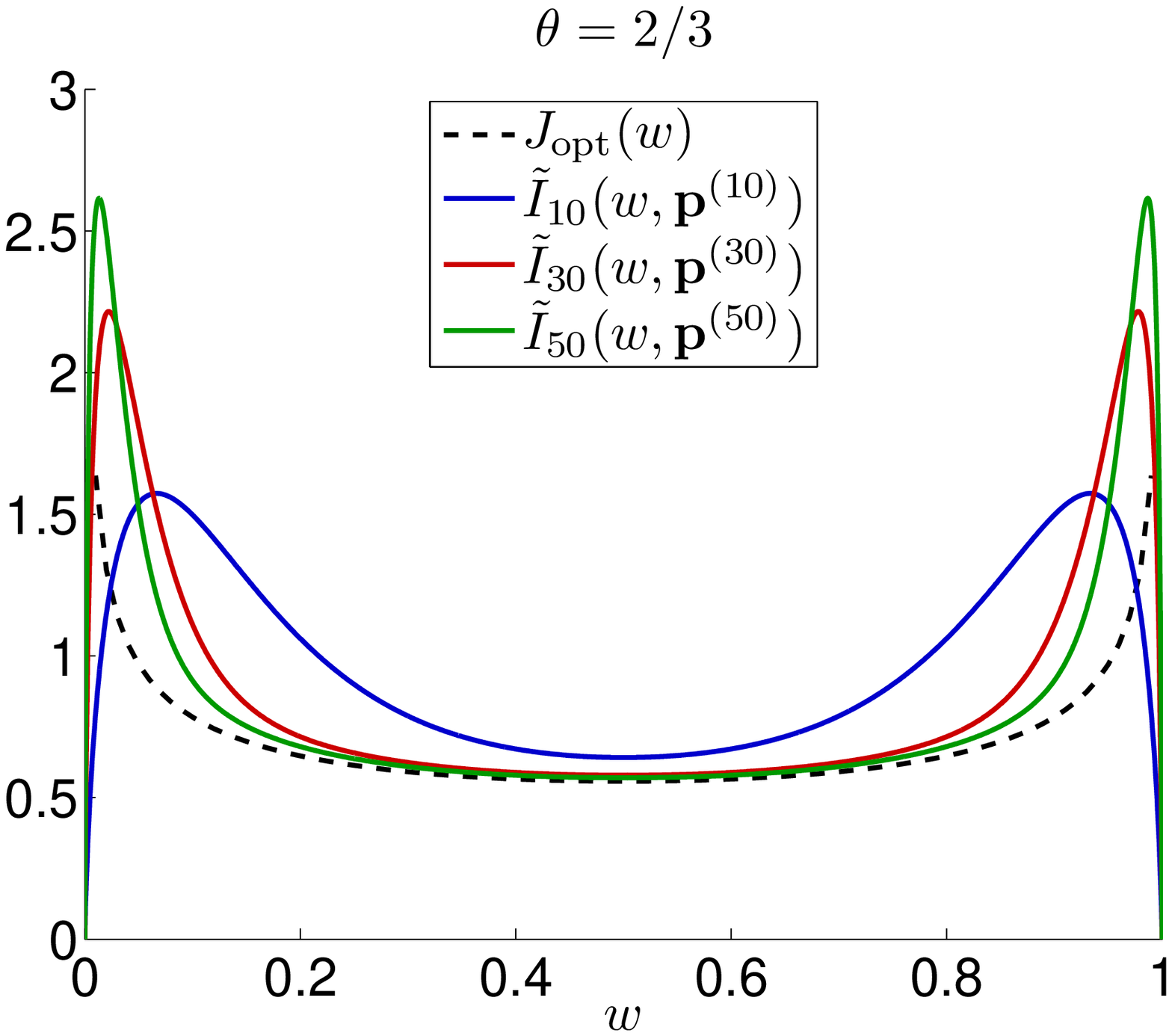}
\end{minipage}

\caption{$J_\opt(w)$ and normalized payoff function $\tilde{I}_k(w,\bfp^{(k)})$ for $k = 10$, $30$, and $50$ and $P_W = \Beta(\theta,\theta)$}
\label{fig:I_w_fixedp}

\end{figure*}

We now use the family of Beta distributions to illustrate the asymptotic\changed{s}\removed{ behavior} of the joint fingerprinting game. Let $f^\theta_W$ be the pdf defined in (\ref{betapdf}). Condition \ref{cond:f_int_finite} is satisfied for any $\theta \in (0,1)$. For $\theta = 1/3, 1/2$, and $2/3$, we find the minimizing collusion channels $\bfp^{(k)}$ for $f^\theta_W$ and compare them with $g_\opt(z/k)$ obtained by (\ref{eqn:gopt}). Fig. \ref{fig:p_plot_Beta_mark} shows that $p_z^{(k)}$ does converge to $g_\opt(z/k)$ as $k \to \infty$ as expected, which also rationalizes our assumption of Condition \ref{cond:cont}.

Consider the normalized payoff function $\tilde{I}^\joint_k(w,\bfp) \triangleq k^2 2\ln 2 \cdot I^\joint_k(w,\bfp)$, which by Theorem \ref{thm:I_asym} is asymptotically close to $J(w)$. Suppose $J_\opt$ is obtained by substituting $g_\opt$ of (\ref{eqn:gopt}) into (\ref{eqn:G}) and (\ref{eqn:J}). Again for $\theta = 1/3, 1/2$, and $2/3$, we compare $\tilde{I}^\joint_k(w,\bfp^{(k)})$ with $J_\opt(w)$ in Fig. \ref{fig:I_w_fixedp}. As shown in the figure, $\tilde{I}^\joint_k(w,\bfp^{(k)})$ is asymptotically flat over $(0,1)$ when $\theta = 1/2$, which is the case when $f^\theta_W$ is chosen properly. If $\theta < 1/2$, which means $f^\theta_W$ has too much weight around 0 and 1, then the colluders' choice of $g_\opt$ makes $J_\opt$ peak at $w = 1/2$. If on the contrary $\theta > 1/2$, then too much weight around $1/2$ is put on $f^\theta_W$, and $g_\opt$ makes $J_\opt$ peaks at $w = 0,1$.

\subsection{Why Are the Arcsine Distribution and the Interleaving Attack Optimal for Large Coalitions?}\label{ssec:bin_Bayes}

\added{The analysis and the numerical results above show not only that the asymptotic capacity is $(k^2 2 \ln 2)^{-1}$, but also that both the arcsine distribution for the maximizer and the interleaving attack for the minimizer achieve the same asymptotic value. Such results are very interesting, and at the same time raise some issues for further investigation. One topic concerns the regularity constraint (Condition \ref{cond:cont})upon which the asymptotic analysis in Sec. \ref{ssec:asym_anal} is based. However, it is reasonable to conjecture that the same asymptotics hold without the regularity condition. Our numerical results indeed suggest this condition may not be necessary. Moreover, it is important to mention that both the asymptotic lower bound on capacity (see \cite[Theorem 15]{Amiri2010}) and Corollary \ref{cor:cap_ub} (which contributes to the asymptotic upper bound) hold without the regularity constraint.}

\added{Both the arcsine distribution and the interleaving attack have been \added{extensively} studied in the literature.} In 2003, \changed{Tardos applied the arcsine distribution to fingerprinting}\removed{the arcsine distribution was first applied to the applications of fingerprinting by Tardos} \cite{Tardos2003}. How he fine-tuned his codes, however, had been a mystery until \v{S}kori\'{c} et al. \cite{vSkori'c2008a} and Furon et al. \cite{Furon2008} rationalized Tardos' choices based on Gaussian approximations. On the other hand, the interleaving attack has been frequently adopted to model the collusion channel in the literature \cite{Anthapadmanabhan2008,Furon2008,Furon2009a}, but no conceptual reasoning has been proposed on why it should be the coalition's optimal choice. Fortunately, owing to the discovery of the capacity formulas (Theorem \ref{thm:single_game}), we can now study fingerprinting games from the information-theoretic point of view. In the previous subsection, \added{we established} the optimality of these two strategies \removed{is established} based on asymptotic methods. \changed{Here we provide}\removed{In this subsection} a statistical interpretation\removed{ is provided}.

We may think of the (joint) fingerprinting \added{capacity} game as follows: the coalition is given $k$ independent \emph{observations} $X_1, \ldots, X_k$ distributed according to an unknown distribution $\Bernoulli(W)$ chosen at random by the fingerprinting embedder from the \changed{family}\removed{set} $\{\Bernoulli(W): W \in [0,1] \}$ according to a known \emph{prior distribution} $P_W$. Upon generating $Y$ according to \changed{the conditional distribution $p_{Y|Z}$ given}\removed{$\Bernoulli(p_Z)$ based on} the sufficient statistic $Z = \sum_{i=1}^k X_i$, the coalition suffers a loss $I(Z;Y|W = w)$. The \emph{risk} of the game \added{$I(Z;Y|W)$} is the average loss under $P_W$.

As emphasized in previous works \cite{Moulin2008b,Furon2009a}, the choice of the embedding distribution $P_W$, or \emph{prior selection} in statistician's language, is crucial to the fingerprinting game. If no randomization takes place \cite{Somekh-Baruch2007}, or equivalently, if the realization $w$ is revealed to the pirates \cite{Furon2009a}, then the maximin game value decays exponentially with coalition size $k$ (see (\ref{eqn:C_expo})). Loosely speaking, the loss the pirates suffer is mainly due to their error in estimating $W$. If they have a good estimation of the time-sharing random variable $W$, then the loss they suffer is small.

Jeffreys' prior \cite{Jos`eM.Bernardo2000} is a ``non-informative'' prior that \changed{plays an important role in Bayesian statistics}\removed{is central to the study of Bayesian analysis}. Given a family of distributions with an unknown parameter, Jeffreys' prior is proportional to the square root of the Fisher information. Conceptually speaking, Jeffreys' prior is the ``least-favorable'' distribution in regard to estimating that parameter. For the Bernoulli trial with the probability of success $w$ as parameter, the Fisher information is $I(w) = \left[w(1-w)\right]^{-1}$ and thus Jeffreys' prior is 
\begin{eqnarray}
f(w) \propto \frac{1}{\sqrt{w(1-w)}}
\end{eqnarray}
which is exactly the arcsine distribution! \removed{The pdf is U-shaped, which suggests that $W$ is easier to estimate around $1/2$ than close to 0 and 1. The selection of Jeffreys' prior equalizes such difficulty discrepancy.}

The optimality of the interleaving attack is closely related to the concept of ``equalizer rule'' in game theory. From Corollary \ref{cor:cap_ub}, \changed{interleaving}\removed{it} is the asymptotic equalizing strategy, which is the desirable attribute we expect for a saddle-point solution. The optimal collusion channel depends on the prior by (\ref{eqn:gopt}), and from the proof of Corollary \ref{cor:cap_lb}\added{,} the interleaving attack is optimal under \changed{the arcsine distribution}\removed{Jeffreys' prior}. \changed{Also observe}\removed{Another interpretation is} that the interleaving attack is the strategy where the colluders generate $Y$ according to $\Bernoulli(\widehat{W})$, where $\widehat{W} = Z/k$ is the maximum likelihood estimator of $W$, which is asymptotically unbiased (as $k \rightarrow \infty$) and has minimum asymptotic variance (equal to $(kI(w))^{-1} = w(1-w)/k$).

\added{Another interesting question is what are the asymptotics of the simple fingerprinting game. In Corollary \ref{cor:simplebd}, we established upper and lower bounds on $C^{\simple}_k$. Note that the upper bound is obtained by assuming the interleaving attack for the coalition and it coincides with the asymptotic rate $(k^2 2\ln 2)^{-1}$ of $C^\joint_k$. On the other hand, Fig. \ref{fig:p_plot_TS_mark}(b) indicates that the optimal attack is actually quite different from the interleaving attack. This suggests that the pirates can exploit the suboptimality of the single-user decoder and perform a stronger attack. The study of the exact asymptotics of the simple fingerprinting game, is left as future work.}\looseness=-1   % new
\section{Summary}\label{sec:sum}

In this work, we proved new upper and lower bounds on the maximum achievable rate of binary fingerprinting codes for arbitrary coalition size by studying the minimax and the maximin fingerprinting games. We also provided asymptotic approximations of the capacity as well as both the fingerprinting embedder and the coalition's strategies. The results suggest that fingerprinting games under the Boneh-Shaw marking assumption have a close relation to the Fisher information and Jeffreys' prior for the Bernoulli model.
\appendix[Derivation of $C^\simple_k(\scrP_W,\bfp^*)$]

The function $I^\simple_k(w,\bfp^*)$ is indeed symmetric around $w = 1/2$ and has a global maximum at $w = 1/2$ as suggested by numerical experiments in \cite{Furon2009a}. We first prove the following lemma:

\begin{lemma}\label{lem:d_ineq}
For $r \geq s \geq 0$ and $r + s \leq 1$, we have
\begin{equation}
d(r \parallel s) \geq d(s \parallel r).
\end{equation}
\end{lemma}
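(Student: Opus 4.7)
The plan is to express the difference $d(r\|s) - d(s\|r)$ as a single definite integral whose sign can be read off directly from the hypothesis $r + s \leq 1$. A clean integral representation comes from the partial derivatives
\[
\frac{\partial}{\partial s}d(r\|s) = \frac{s-r}{s(1-s)\ln 2},\qquad \frac{\partial}{\partial r}d(s\|r) = \frac{r-s}{r(1-r)\ln 2},
\]
together with the fact that both divergences vanish when their two arguments coincide. Integrating from $s$ up to $r$ (assuming $r \geq s$) gives
\[
d(r\|s) = \frac{1}{\ln 2}\int_s^r \frac{r-t}{t(1-t)}\,dt,\qquad d(s\|r) = \frac{1}{\ln 2}\int_s^r \frac{t-s}{t(1-t)}\,dt,
\]
and subtracting yields
\[
d(r\|s) - d(s\|r) = \frac{1}{\ln 2}\int_s^r \frac{r+s-2t}{t(1-t)}\,dt.
\]

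The numerator $r+s-2t$ is odd about the midpoint $m \triangleq (r+s)/2$, which suggests folding the interval about $m$. Substituting $t = m \pm u$ on the two halves of $[s,r]$ transforms the display into
\[
d(r\|s) - d(s\|r) = \frac{2}{\ln 2}\int_0^{(r-s)/2} u\left[\frac{1}{(m-u)(1-m+u)} - \frac{1}{(m+u)(1-m-u)}\right]du,
\]
and a direct expansion collapses the bracketed term to
\[
\frac{2u(1-2m)}{(m-u)(1-m+u)(m+u)(1-m-u)}.
\]
Here the hypothesis $r + s \leq 1$ is precisely $m \leq 1/2$, so $1-2m \geq 0$, and all four factors in the denominator are positive for $0 < s \leq r < 1$. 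The integrand is therefore nonnegative pointwise, and the inequality follows. Boundary cases $s = 0$ or $r = 1$ are dealt with separately either by continuity or by noting that both $d(r\|s)$ and $d(s\|r)$ diverge to $+\infty$.

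The main obstacle is simply locating the right integral representation: the algebraic form $d(r\|s) - d(s\|r) = (r+s)\log(r/s) - (2-r-s)\log((1-s)/(1-r))$ pits the larger coefficient $2-r-s$ against the smaller log, and a direct comparison seems to require somewhat delicate bookkeeping. Passing through the integral form converts this competition into a pointwise inequality driven transparently by $m \leq 1/2$, after which the proof is essentially computational. An alternative route — Taylor expanding $d(r\|s) - d(s\|r)$ in powers of $r-s$ and inspecting successive derivatives of $F(r) = d(r\|s)-d(s\|r)$ at $r = s$ — is less attractive because $F''(s) = 0$ and one must chase higher-order derivatives to conclude positivity on the whole admissible range.
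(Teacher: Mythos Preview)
Your proof is correct and takes a genuinely different route from the paper's. The paper fixes $r$, defines $\delta(r,s)=d(r\|s)-d(s\|r)$, computes
\[
\frac{\partial^2}{\partial s^2}\delta(r,s)=\frac{(r-s)(1-2s)}{s^2(1-s)^2\ln 2}\geq 0,
\]
and then argues in two cases: for $s\leq r\leq 1/2$ it uses $\delta(r,r)=0$ and $\partial_s\delta(r,r)=0$; for $1/2\leq r\leq 1-s$ it uses $\delta(r,1-r)=0$ together with a separate verification that $\partial_s\delta(r,s)\big|_{s=1-r}\leq 0$. Your approach instead expresses $\delta(r,s)$ as a single integral and symmetrizes about the midpoint $m=(r+s)/2$, which collapses the condition $r+s\leq 1$ to the pointwise sign condition $1-2m\geq 0$ on the integrand. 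This avoids the case split and the auxiliary inequality $\lambda(r)\leq 0$ that the paper needs, and it makes the role of the hypothesis $r+s\leq 1$ completely transparent. The paper's argument, on the other hand, is more elementary in that it never leaves the realm of derivative computations and does not require setting up the integral representation. Your handling of the boundary cases $s=0$ or $r=1$ is slightly casual (in particular, at $r=1,s=0$ both sides are $+\infty$, so ``$\geq$'' must be read in the extended reals), but this is harmless here.
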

\begin{proof}
The difference $\delta(r,s)$ between the two sides is
\begin{eqnarray}
\delta(r,s) &=& d(r \parallel s) - d(s \parallel r)\nonumber\\
&=& r \log \frac{r}{s} + (1-r) \log \frac{1-r}{1-s} - s \log \frac{s}{r} - (1-s) \log \frac{1-s}{1-r}\nonumber\\
&=& (r+s) \log \frac{r}{s} + (2-r-s) \log \frac{1-r}{1-s}.
\end{eqnarray}
Then
\begin{equation}
\frac{\partial}{\partial s} \delta(r,s) = \frac{s-r}{s(1-s)\ln 2} + \log \frac{r(1-s)}{s(1-r)}
\end{equation}
and
\begin{equation}
\frac{\partial^2}{\partial s^2} \delta(r,s) = \frac{(r-s)(1-2s)}{s^2(1-s)^2 \ln 2} \geq 0
\end{equation}
for all $r \geq s$ and $r+s \leq 1$. Now when $s \leq r \leq 1/2$, we have $\left.\frac{\partial}{\partial s} \delta(r,s)\right|_{s=r} = \delta(r,r) = 0$. Thus $\frac{\partial}{\partial s} \delta(r,s) \leq 0$ and thus $\delta(r,s) \geq 0$. When $1/2 \leq r \leq 1-s$, we have $\left.\frac{\partial}{\partial s} \delta(r,s)\right|_{s=1-r} \leq 0$ and $\delta(r,1-r) = 0$. Hence similarly $\frac{\partial}{\partial s} \delta (r,s) \leq 0$ and hence $\delta(r,s) \geq 0$. To prove the inequality $\left.\frac{\partial}{\partial s} \delta(r,s)\right|_{s=1-r} \leq 0$, let 
$$\lambda(r) \triangleq \left.\frac{\partial}{\partial s} \delta(r,s)\right|_{s=1-r} = \frac{1-2r}{r(1-r)\ln 2} + 2 \log \frac{r}{1-r}$$
and since $\lambda(1/2) = 0$ and $\lambda'(r) = -\frac{(1-2r)^2}{r^2(1-r)^2 \ln 2} \leq 0$ it follows that $\lambda(r) \leq 0$ for all $r \in [1/2,1]$.
\end{proof}

Now the payoff function can be written as
\begin{eqnarray}
I^\simple_k(w,\bfp^*) &\stackrel{(a)}{=}& wd({\bfal^1}'\bfp^*\parallel\bfal'\bfp^*)+(1-w)d({\bfal^0}'\bfp^*\parallel\bfal'\bfp^*)\nonumber\\
&\stackrel{(b)}{=}& wd(w+\frac{1-w}{k}\parallel w)+(1-w)d(w-\frac{w}{k}\parallel w)\label{eqn:I_simple_inter}
\end{eqnarray}
where (a) follows from (\ref{eqn:I_simple}) and (b) follows from (\ref{eqn:interleaving}) and Lemma \ref{lem:der_iden}, and by which we can easily verify the symmetry property $I^\simple_k(w,\bfp^*) = I^\simple_k(1-w,\bfp^*)$. Hence it suffices to show that $I^\simple_k(w,\bfp^*)$ is nondecreasing for $0 \leq w \leq 1/2$. Taking the derivative of (\ref{eqn:I_simple_inter}) with respect to $w$ and after some simplifications, we obtain
\begin{equation}
\frac{\partial}{\partial w} I^\simple_k(w,\bfp^*) = d(w^+ \parallel w^-) - d(w^- \parallel w^+)
\end{equation}
where $w^+ \triangleq w+\frac{1-w}{k}$ and $w^- \triangleq w-\frac{w}{k}$. By Lemma \ref{lem:d_ineq} it follows that $\frac{\partial}{\partial w} I^\simple_k(w,\bfp^*) \geq 0$ for all $w \in [0,1/2]$ and hence $I^\simple_k(w,\bfp^*)$ achieves its maximum at $w = 1/2$.   % new

\section*{Acknowledgment}

The authors would like to thank Ehsan Amiri for illuminating discussions and helpful comments.

% Can use something like this to put references on a page
% by themselves when using endfloat and the captionsoff option.
\ifCLASSOPTIONcaptionsoff
  \newpage
\fi

% trigger a \newpage just before the given reference
% number - used to balance the columns on the last page
% adjust value as needed - may need to be readjusted if
% the document is modified later
%\IEEEtriggeratref{8}
% The "triggered" command can be changed if desired:
%\IEEEtriggercmd{\enlargethispage{-5in}}

% references section

% can use a bibliography generated by BibTeX as a .bbl file
% BibTeX documentation can be easily obtained at:
% http://www.ctan.org/tex-archive/biblio/bibtex/contrib/doc/
% The IEEEtran BibTeX style support page is at:
% http://www.michaelshell.org/tex/ieeetran/bibtex/
%\bibliographystyle{IEEEtran}
% argument is your BibTeX string definitions and bibliography database(s)
%\bibliography{IEEEabrv,../bib/paper}
%
% <OR> manually copy in the resultant .bbl file
% set second argument of \begin to the number of references
% (used to reserve space for the reference number labels box)

\bibliographystyle{IEEEtran}
\bibliography{IEEEabrv,ref}

% Generated by IEEEtran.bst, version: 1.13 (2008/09/30)
\begin{thebibliography}{10}
\providecommand{\url}[1]{#1}
\csname url@samestyle\endcsname
\providecommand{\newblock}{\relax}
\providecommand{\bibinfo}[2]{#2}
\providecommand{\BIBentrySTDinterwordspacing}{\spaceskip=0pt\relax}
\providecommand{\BIBentryALTinterwordstretchfactor}{4}
\providecommand{\BIBentryALTinterwordspacing}{\spaceskip=\fontdimen2\font plus
\BIBentryALTinterwordstretchfactor\fontdimen3\font minus
  \fontdimen4\font\relax}
\providecommand{\BIBforeignlanguage}[2]{{%
\expandafter\ifx\csname l@#1\endcsname\relax
\typeout{** WARNING: IEEEtran.bst: No hyphenation pattern has been}%
\typeout{** loaded for the language `#1'. Using the pattern for}%
\typeout{** the default language instead.}%
\else
\language=\csname l@#1\endcsname
\fi
#2}}
\providecommand{\BIBdecl}{\relax}
\BIBdecl

\bibitem{Moulin2003}
P.~Moulin and J.~A. O'Sullivan, ``Information-theoretic analysis of information
  hiding,'' \emph{{IEEE} Trans. Inf. Theory}, vol.~49, no.~3, pp. 563--593,
  Mar. 2003.

\bibitem{Somekh-Baruch2005}
A.~Somekh-Baruch and N.~Merhav, ``On the capacity game of private
  fingerprinting systems under collusion attacks,'' \emph{{IEEE} Trans. Inf.
  Theory}, vol.~51, no.~3, pp. 884--899, Mar. 2005.

\bibitem{Somekh-Baruch2007}
------, ``Achievable error exponents for the private fingerprinting game,''
  \emph{{IEEE} Trans. Inf. Theory}, vol.~53, no.~5, pp. 1827--1838, May 2007.

\bibitem{Moulin2008b}
\BIBentryALTinterwordspacing
P.~Moulin. (2008, Dec.) Universal fingerprinting: Capacity and random-coding
  exponents. [Online]. Available: \url{http://arxiv.org/abs/0801.3837v2}
\BIBentrySTDinterwordspacing

\bibitem{Boneh1998}
D.~Boneh and J.~Shaw, ``Collusion-secure fingerprinting for digital data,''
  \emph{{IEEE} Trans. Inf. Theory}, vol.~44, no.~5, pp. 1897--1905, Sept. 1998.

\bibitem{Furon2009a}
T.~Furon and L.~Perez-Freire, ``Worst case attacks against binary probabilistic
  traitor tracing codes,'' Aug. 2009, preprint, available at
  http://arxiv.org/abs/0903.3480v2.

\bibitem{Anthapadmanabhan2008}
N.~P. Anthapadmanabhan, A.~Barg, and I.~Dumer, ``On the fingerprinting capacity
  under the marking assumption,'' \emph{{IEEE} Trans. Inf. Theory}, vol.~54,
  no.~6, pp. 2678--2689, Jun. 2008.

\bibitem{vSkori'c2008}
B.~\v{S}kori\'{c}, S.~Katzenbeisser, and M.~U. Celik, ``Symmetric {Tardos}
  fingerprinting codes for arbitrary alphabet sizes,'' \emph{Designs, Codes and
  Cryptography}, vol.~46, no.~2, pp. 137--166, Feb. 2008.

\bibitem{Tardos2003}
G.~Tardos, ``Optimal probabilistic fingerprint codes,'' in \emph{Proc. 35th ACM
  Symposium on Theory of Computing (STOC 2003)}, 2003, pp. 116--125.

\bibitem{Nuida2007}
K.~Nuida, M.~Hagiwara, H.~Watanabe, and H.~Imai, ``Optimization of {Tardos's}
  fingerprinting codes in a viewpoint of memory amount,'' in \emph{Information
  Hiding: 9th Intl. Workshop, IH 2007, Saint Malo, France, Jun. 11-13, 2007,
  Revised Selected Papers}, ser. LNCS, vol. 4567.\hskip 1em plus 0.5em minus
  0.4em\relax Berlin, Heidelberg: Springer-Verlag, 2007, pp. 279--293.

\bibitem{Blayer2008}
O.~Blayer and T.~Tassa, ``Improved versions of {Tardos'} fingerprinting
  scheme,'' \emph{Designs, Codes and Cryptography}, vol.~48, no.~1, pp.
  79--103, Jul. 2008.

\bibitem{vSkori'c2008a}
B.~\v{S}kori\'{c}, T.~Vladimirova, M.~Celik, and J.~Talstra, ``Tardos
  fingerprinting is better than we thought,'' \emph{{IEEE} Trans. Inf. Theory},
  vol.~54, no.~8, pp. 3663--3676, Aug. 2008.

\bibitem{Amiri2009}
E.~Amiri and G.~Tardos, ``High rate fingerprinting codes and the fingerprinting
  capacity,'' in \emph{Proc. 20th Annual ACM-SIAM Symposium on Discrete
  Algorithms (SODA 2009)}, 2009, pp. 336--345.

\bibitem{Dumer2009}
I.~Dumer, ``Equal-weight fingerprinting codes,'' in \emph{Coding and
  Cryptology: Second Intl. Workshop, IWCC 2009, Zhangjiajie, China, Jun. 1-5,
  2009. Proceedings}, ser. LNCS, vol. 5557.\hskip 1em plus 0.5em minus
  0.4em\relax Berlin, Heidelberg: Springer-Verlag, 2009, pp. 43--51.

\bibitem{Feller1968}
W.~Feller, \emph{An Introduction to Probability Theory and Its Applications},
  3rd~ed.\hskip 1em plus 0.5em minus 0.4em\relax New York, NY: John Wiley \&
  Sons, 1968, vol.~II.

\bibitem{Petrosjan1996}
L.~A. Petrosjan and N.~A. Zenkevich, \emph{Game Theory}.\hskip 1em plus 0.5em
  minus 0.4em\relax River Edge, NJ: World Scientific, 1996.

\bibitem{Cover2006}
T.~M. Cover and J.~A. Thomas, \emph{Elements of Information Theory},
  2nd~ed.\hskip 1em plus 0.5em minus 0.4em\relax Hoboken, NJ: John Wiley \&
  Sons, 2006.

\bibitem{Sion1958}
M.~Sion, ``On general minimax theorems,'' \emph{Pacific J. Math.}, vol.~8,
  no.~1, pp. 171--176, 1958.

\bibitem{Furon2008}
T.~Furon, A.~Guyader, and F.~C\'{e}rou, ``On the design and optimization of
  {Tardos} probabilistic fingerprinting codes,'' in \emph{Information Hiding:
  10th Intl. Workshop, IH 2008, Santa Barbara, CA, USA, May 19-21, 2008,
  Revised Selected Papers}, ser. LNCS, vol. 5284.\hskip 1em plus 0.5em minus
  0.4em\relax Berlin, Heidelberg: Springer-Verlag, 2008, pp. 341--356.

\bibitem{Bertsekas1999}
D.~P. Bertsekas, \emph{Nonlinear Programming}.\hskip 1em plus 0.5em minus
  0.4em\relax Belmont, MA: Athena Scientific, 1999.

\bibitem{Huang2009}
Y.-W. Huang and P.~Moulin, ``Saddle-point solution of the fingerprinting
  capacity game under the marking assumption,'' in \emph{Proc. IEEE Intl.
  Symposium on Information Theory (ISIT 2009)}, 2009, pp. 2256--2260.

\bibitem{Huang2009a}
------, ``Capacity-achieving fingerprint decoding,'' in \emph{Proc. First IEEE
  Intl. Workshop on Information Forensics and Security (WIFS 2009)}, 2009, pp.
  51--55.

\bibitem{Amiri2010}
E.~Amiri, ``Fingerprinting codes: Higher rates, quick accusations,'' Ph.D.
  dissertation, Simon Fraser University, 2010.

\bibitem{Bavaud2009}
F.~Bavaud, ``Information theory, relative entropy and statistics,'' in
  \emph{Formal Theories of Information: From Shannon to Semantic Information
  Theory and General Concepts of Information}, ser. LNCS, vol. 5363.\hskip 1em
  plus 0.5em minus 0.4em\relax Berlin, Heidelberg: Springer-Verlag, 2009, pp.
  54--78.

\bibitem{Hoeffding1963}
W.~Hoeffding, ``Probability inequalities for sums of bounded random
  variables,'' \emph{J. Amer. Statistical Assoc.}, vol.~58, no. 301, pp.
  13--30, Mar. 1963.

\bibitem{Lorentz1986}
G.~G. Lorentz, \emph{Bernstein Polynomials}, 2nd~ed.\hskip 1em plus 0.5em minus
  0.4em\relax New York, NY: AMS Bookstore, 1986.

\bibitem{Jos`eM.Bernardo2000}
J.~M. Bernardo and A.~F.~M. Smith, \emph{Bayesian Theory}.\hskip 1em plus 0.5em
  minus 0.4em\relax New York, NY: John Wiley \& Sons, 2000.

\end{thebibliography}

% biography section
% 
% If you have an EPS/PDF photo (graphicx package needed) extra braces are
% needed around the contents of the optional argument to biography to prevent
% the LaTeX parser from getting confused when it sees the complicated
% \includegraphics command within an optional argument. (You could create
% your own custom macro containing the \includegraphics command to make things
% simpler here.)
%\begin{biography}[{\includegraphics[width=1in,height=1.25in,clip,keepaspectratio]{mshell}}]{Michael Shell}
% or if you just want to reserve a space for a photo:

%\begin{IEEEbiography}{Yen-Wei Huang}
%Biography text here.
%\end{IEEEbiography}

% insert where needed to balance the two columns on the last page with
% biographies
%\newpage

%\begin{IEEEbiographynophoto}{Huang}
%Biography text here.
%\end{IEEEbiographynophoto}

% You can push biographies down or up by placing
% a \vfill before or after them. The appropriate
% use of \vfill depends on what kind of text is
% on the last page and whether or not the columns
% are being equalized.

%\vfill

% Can be used to pull up biographies so that the bottom of the last one
% is flush with the other column.
%\enlargethispage{-5in}

% that's all folks
\end{document}